\newcommand{\baseW}{\Wfun} %
\NewDocumentCommand{\mymat}{mmmm}
{%
	\tl_clear:N \l_tmpa_tl
	\int_step_inline:nnn  {0}{ #2 -1 }
	{%
		\int_step_inline:nnn {0}{ #3 -1}
		{%
			\tl_put_right:Nn \l_tmpa_tl {$#1{\csname #4 \endcsname {##1}{####1} {#2} {#3}}$}
			\int_compare:nNnTF {####1} < {#3-1} 
			{\tl_put_right:Nn \l_tmpa_tl {&}} {}
		}
		\tl_put_right:Nn \l_tmpa_tl {\\}
	}
	\tl_use:N \l_tmpa_tl
}
	\pgfmathtruncatemacro{\val}{#1 * #2 }
	\pgfmathtruncatemacro{\val}{mod(\bitsetCardinality{A}, 2)}
	\pgfmathtruncatemacro{\val} {1}		
	\pgfmathtruncatemacro{\val} {-1}	
\newcommand{\genMatrix}[3]{%
	\xintdefiivar Matrix = ndmap(#3,1..#1; 1..#2); 
	\[
	\def\xintexpralignbegin {\begin{bmatrix*}[r]}%
		\def\xintexpralignend {\end{bmatrix*}}%
	\def\xintexpralignlinesep {\noexpand\\}%
	\def\xintexpraligninnersep {&}%
	\let\xintexpralignleftbracket\empty \let\xintexpralignleftsep\empty
	\let\xintexpralignrightbracket\empty \let\xintexpralignrightsep\empty
	\xintthealign	\xintiiexpr Matrix\relax
	\]
}
\def\paragraph{\@startsection{paragraph}{4}%
	\z@\z@{-\fontdimen2\font}%
	{\normalfont\bfseries}}
\newlength\shlength
\newcommand\xshlongvec[2][0]{\setlength\shlength{#1pt}%
	\stackengine{-5.6pt}{$#2$}{\smash{$\kern\shlength%
			\stackengine{7.55pt}{$\mathchar"017E$}%
			{\rule{\widthof{$#2$}}{.57pt}\kern.4pt}{O}{r}{F}{F}{L}\kern-\shlength$}}%
	{O}{c}{F}{T}{S}}
\newcommand{\RN}[1]{%
	\textup{\uppercase\expandafter{\romannumeral#1}}%
}
\newcommand{\meqref}[1]{\text{Eq}.~\eqref{#1}}
\newcommand{\mref}[1]{Sec.~$ \!\ref{#1} $}
\newcommand{\mfig}[1]{Fig.~$ \!\ref{#1} $}
\newtheorem{thm}{Theorem}[section]
\newtheorem{defn}[thm]{Definition} 
\newtheorem{remark}[thm]{Remark}
\def\arraystretch{0.5}
\DeclareMathOperator{\Gfun}{G}
\DeclareMathOperator{\Wfun}{W}
\def\<{\langle}
\def\>{\rangle}
\numberwithin{equation}{section}
\pgfplotsset{compat=1.17}
\def\smallunderbrace#1{\mathop{\vtop{\m@th\ialign{##\crcr
				$\hfil\displaystyle{#1}\hfil$\crcr
				\noalign{\kern3\p@\nointerlineskip}%
				\tiny\upbracefill\crcr\noalign{\kern3\p@}}}}\limits}
\begin{document}
    \title{Generalized tensor transforms and their applications in classical and quantum computing}
	\author[1]{Alok Shukla \thanks{Corresponding author.}}
	\author[2]{Prakash Vedula}
	\affil[1]{School of Arts and Sciences, Ahmedabad University, India}
	\affil[1]{alok.shukla@ahduni.edu.in}
	\affil[2]{School of Aerospace and Mechanical Engineering, University of Oklahoma, USA}
	\affil[2]{pvedula@ou.edu}
	
\date{}

	\maketitle


\begin{abstract}

We introduce a novel framework for Generalized Tensor Transforms (GTTs), constructed through an $n$-fold tensor product of an arbitrary $b \times b$ unitary matrix $W$. This construction generalizes many established transforms, by providing a adaptable set of orthonormal basis functions. Our proposed fast classical algorithm for GTT achieves an exponentially lower complexity of $O(N \log_b N)$ in comparison to a naive classical implementation that has an associated computational cost of $O(N^2)$. For quantum applications, our GTT-based algorithm, implemented in the natural spectral ordering, achieves both gate complexity and circuit depth of $O(\log_b N)$, where $N = b^n$ denotes the length of the input vector. This represents a quadratic improvement over Quantum Fourier Transform (QFT), which requires $O((\log_b N)^2)$ gates and depth for $n$ qudits, and an exponential advantage over classical Fast Fourier Transform (FFT) based and Fast Walsh-Hadamard Transform (FWHT) based methods, which incur a computational cost of $O(N \log_b N)$.

We explore diverse applications of GTTs in quantum computing, including quantum state compression and transmission, function encoding and quantum digital signal processing. The proposed framework provides fine-grained control of the transformation through the adjustable parameters of the base matrix $W$. This versatility allows precise shaping of each basis function while preserving their effective Walsh-type structure, thus tailoring basis representations to specific quantum data and computational tasks. Our numerical results demonstrate that GTTs enable improved performance in quantum state compression and function encoding compared to fixed transforms (such as FWHT or FFT), achieving higher fidelities with fewer retained components. We also provided novel classical and quantum digital signal filtering algorithms based on our GTT framework.

\end{abstract}

\section{Introduction}
\label{sec:introduction}

The ability to efficiently analyze, process, and interpret complex data is fundamental across all areas of science and engineering.
Transform methods, which enable the representation of data using alternative sets of basis functions, are essential tools in a wide range of applications, including signal compression, noise reduction, feature extraction, and pattern recognition.
As data becomes increasingly high dimensional and intricate, there is a growing need for more flexible and customized transformation techniques. In this work, we present a novel framework for creating a class of orthonormal basis functions based on Generalized Tensor Transforms (GTTs). The GTT is constructed through the $n$-fold tensor product of an arbitrary $b \times b$ unitary matrix $W$. This approach extends classical transforms such as the Walsh-Hadamard and multidimensional Fourier transforms, and enables efficient computation through a fast butterfly-type algorithm (similar to the Fast Fourier Transform algorithm) with wide-ranging applications in several domains including signal and image processing.

Further, it is shown that the proposed GTT framework provides a powerful and adaptable mathematical 
tool for a range of applications in quantum computing, including in quantum signal processing and efficient function encoding on a quantum computer.

In the following, we describe the motivation of our work and our contributions in more detail.

\subsection{Background and motivation}
\label{ssec:background}
Unitary transforms form the cornerstone of modern signal processing, data analysis, and numerous scientific computing disciplines. By projecting data onto an alternative set of orthogonal or orthonormal basis functions, these transforms facilitate fundamental operations such as dimensionality reduction, noise filtering, and the extraction of salient features. Well-known examples include the Discrete Fourier Transform (DFT), crucial for frequency domain analysis, and the Walsh-Hadamard Transform (WHT), valued for its real-valued basis functions and applications in spectroscopy and coding theory. Their utility lies in their ability to decorrelate data, concentrate energy into fewer coefficients, and provide alternative perspectives that simplify subsequent processing tasks.

The motivation to study generalized transforms emerges from the field of quantum computing. In quantum computing, an $n$-qubit (or more generally, an $n$-qudit) system naturally exists in a tensor product Hilbert space. Consequently, many fundamental operations that act independently on each quantum subsystem are represented as tensor products of smaller unitary transforms. This inherent structure naturally prompts the question of performing a spectral analysis on these composite transforms and identifying the types of classical or quantum signals that can be efficiently represented or processed using their resulting basis functions. It turns out that such a generalized transform, derived from the tensor product of an arbitrary base unitary matrix, naturally encompasses and extends many well-known transforms, including the Walsh-Hadamard Transform (WHT) and the multidimensional Discrete Fourier Transform (DFT), which find diverse applications from spectroscopy to coding theory. This versatility highlights the imperative for a flexible and adaptable framework for generating transform bases to unlock new analytical capabilities in high-dimensional and quantum datasets.

While transforms such as the Walsh-Hadamard and Discrete Fourier Transforms have proven invaluable, their inherent design presents certain limitations. These traditional transforms are fixed and generate a predetermined set of basis functions derived from specific underlying algebraic structures ($\mathbb{Z}_2^n$ for WHT, $\mathbb{Z}_N$ for DFT). Consequently, their ability to optimally represent or efficiently process all forms of data is constrained. For instance, data exhibiting specific symmetries, correlations, or sparsity patterns that do not align with the intrinsic properties of sine/cosine waves or Walsh functions may not be optimally compressed or analyzed. The need for a more adaptable and customizable framework for generating transform bases, capable of being tailored to the specific nuances of a given dataset, thus becomes apparent. 

Further, while unitary gates and their tensor products naturally constitute the fabric of quantum computing, a systematic exploration of their spectral properties, and the implications for designing a generalized family of customizable transforms, has not been thoroughly investigated in the existing literature. This represents a significant gap, as understanding and harnessing the spectral characteristics of such generalized tensor transforms is important for unlocking the full potential of quantum computing. By leveraging these properties, we can develop highly adaptable and efficient quantum algorithms, enabling optimal data representation and processing tailored to specific problem structures. This innovative approach opens new avenues for addressing complex challenges across a wide range of domains, including quantum signal and image processing, control systems, and various data analysis and machine learning tasks, where classical frameworks like the Fast Fourier Transform (FFT) or Fast Walsh-Hadamard Transform (FWHT) are traditionally applied. Our work bridges this gap by systematically exploring these spectral properties, providing a novel mathematical foundation and demonstrating its practical utility.

While several generalizations of the Walsh-Hadamard Transform have been discussed in the literature \cite{grozdanov_inequality_nodate, horadam_generalised_2005, wang_improved_2010, watari1958generalized, episkoposian_greedy_2011, yuan2021generalized}, our approach to Generalized Tensor Transforms (GTTs) introduces a distinct and more powerful paradigm. Unlike these existing methods, which often impose specific structural or algebraic restrictions on the base matrix, the GTT framework allows for the selection of an arbitrary unitary base matrix $\Wfun$. 
This flexibility allows for fine-grained adaptation of basis functions to suit the specific requirements of diverse classical and quantum applications. In particular, the unique tunability is especially valuable for quantum computing, where, to the best of our knowledge, this is the first work to systematically study and analyze the spectral properties of unitary transformations with a focus on their application in quantum computing. This opens new avenues for quantum solutions in domains that rely on spectral techniques, such as signal processing, image analysis, and function encoding.

The Generalized Tensor Transform (GTT) framework fundamentally encompasses and extends well-established transforms such as the Walsh-Hadamard Transform (WHT) and the multi-dimensional Discrete Fourier Transform (DFT). This inherent generalization implies that GTTs inherit the broad applicability of WHT in diverse fields. For instance, some of the applications of WHT include signal processing \cite{beauchamp1975walsh, shapiro_walsh_1974, shukla2023quantumdsp, broadbent_analysis_1992}, image processing~\cite{smith_walsh_1979, shukla2022hybridimage, rohida2024hybrid, o1978edge}, differential equations~\cite{shukla2023hybridode}, control systems~\cite{palanisamy_minimum_1983, paraskevopoulos_transfer_1980}, genetic algorithms \cite{goldberg_genetic_nodate}, quantum search~\cite{grover1996fast, shukla2025efficientsearch}, uniform quantum state preparation~\cite{shukla2024efficient, bernstein1993quantum, shukla2023generalizationpbv} and cryptography~\cite{shor1994algorithms}. However, the GTT's unique tunability, derived from an arbitrary unitary base matrix, goes beyond mere equivalence by offering the potential for significantly enhanced performance and novel solutions in these domains, particularly by enabling optimal data representation where fixed transforms may be suboptimal.

\subsection{Generalized tensor transforms and their basis functions}
\label{ssec:contribution}
We introduce a novel and significantly generalized family of orthonormal basis functions. These Generalized Tensor Transform basis functions (or GTT basis functions) are recursively defined based on an arbitrary $b \times b$ matrix $\baseW$. If $\Wfun$ is a unitary matrix, these basis functions inherently form a complete orthonormal set. This allows us to precisely tailor the basis to specific data characteristics, inherent symmetries, or computational demands, offering a flexibility not afforded by fixed transforms.

We note that the Generalized Tensor Transform (GTT) is the $n$-fold tensor product of an arbitrary $b \times b$ unitary matrix, denoted as $\Gfun_N = \Wfun^{\otimes n}$, where $N=b^n$. Indeed, the columns of this GTT matrix are precisely obtained using these novel recursively defined basis functions. This unified framework significantly broadens the family of available unitary transforms, inherently encompassing and extending well-established methods such as the Walsh-Hadamard Transform (where $\Wfun$ is the $2 \times 2$ Hadamard matrix) and the multi-dimensional Discrete Fourier Transform (where $\Wfun$ is the $b \times b$ DFT matrix). Furthermore, we develop a fast, butterfly-type algorithm (like FFT) for the efficient computation of these generalized transforms, ensuring their practical applicability even for high-dimensional data. While these GTTs offer a unified approach to classical fast transforms with potential benefits across diverse signal and image processing tasks, their primary motivation and a central contribution of this work is their profound utility and unique advantages in quantum computing applications.

\subsection{Applications in quantum computing}
\label{ssec:applications_qc}

Quantum computing is a rapidly emerging field with the potential to transform numerous scientific domains. In recent years, several quantum algorithms have been developed that either demonstrate a theoretical advantage over the best-known classical algorithms or offer promising applications across diverse areas \cite{deutsch1992rapid, bernstein1993quantum, simon1997power, grover1996fast, Brassard_2002, shukla2025efficientsearch, suzuki2020amplitude,  shukla2024partialsum, shor1994algorithms, harrow2009quantum}.

The flexible structure of Generalized Tensor Transforms makes them particularly well suited for a wide range of applications in quantum computing. Quantum systems naturally employ tensor product structures to describe multi-qubit and multi-qudit states, and many fundamental quantum operations are expressed as tensor products of single-subsystem unitaries. Our framework builds directly on this structure, enabling the systematic design of novel unitary operators tailored for quantum architectures.

These transforms are especially relevant for quantum digital signal processing, where they can support tasks such as encoding classical signals into quantum states, extracting information from quantum-encoded signals, and implementing quantum filtering operations. By selecting an appropriate base unitary matrix \( \Wfun \), the resulting basis functions can be adapted to align with the structure of the underlying quantum data, potentially enhancing signal reconstruction, improving robustness to noise, and facilitating more effective feature separation in the quantum domain.

Moreover, the ability to generate a rich family of unitary transforms opens new possibilities in quantum data analysis and machine learning, enabling the construction of domain-specific transformations that support algorithmic innovation in areas such as quantum state compression and efficient transmission (ref.~\mref{ssec:state_compression}), quantum function encoding (ref.~\mref{sec:fun_encoding}), and signal filtering (ref.~\mref{ssec:filtering}). The efficient quantum circuit implementation of these transforms further underscores their practical relevance for near-term and future quantum hardware platforms.

\subsection*{A brief outline}
\label{ssec:structure}

In Section \ref{sec:gtt_basis_functions}, we provide the mathematical foundations, formally define the GTT basis functions, present their recursive definitions for a general base $b$, and prove their orthonormality. This section also explores GTT-Fourier Series and the discretization leading to the unitary GTT matrix.
Section \ref{sec:algorithm}, focuses on the computational aspects, presenting a fast butterfly-type algorithm, along with an analysis of its computational advantages.
Section \ref{sec:gtt_applications} demonstrates the practical utility of GTTs in quantum computing, covering quantum state compression (classical, hybrid, and fully quantum protocols with numerical examples), complexity analysis, and function encoding applications.
Finally, Section \ref{sec:conclusion}, summarizes the paper's key findings, emphasizing the significance of our GTT framework, its efficient algorithm, and its wide-ranging implications for quantum computing, while also suggesting future research directions.

\section{Generalized tensor transform basis functions}
\label{sec:gtt_basis_functions}

At the heart of any transform are the basis functions it employs. The Generalized Tensor Transform (GTT) is intrinsically linked to a complete orthonormal set of such basis functions. These functions, which we refer to as GTT basis functions, offer a robust generalization of established families including the Walsh-Hadamard and Vilenkin-Chrestenson functions. In this section, we provide a detailed account of their recursive construction and explore their fundamental mathematical properties.




\subsection{Notation for basis functions}
\label{ssec:notation_basis}
For clarity in defining these functions, we adopt the following notations:
\begin{itemize}
    \item $\{ x \} = x - \lfloor x \rfloor$: Denotes the fractional part of $x$, where $\lfloor x \rfloor$ is the floor function.
    \item Matrix elements $\Wfun_{i,j}$ refer to the element in the $i$-th row and $j$-th column of matrix $\Wfun$, with indices starting from $0$. 
    \item For a sequence $k_{n-1}, \ldots, k_1, k_0 \in \{0, 1, \ldots, b-1\}$, we can associate a base $b$ number $k = b^{n-1} k_{n-1} + \cdots + b k_1 + k_0$.
\end{itemize}

\subsection{Recursive definition for general base $b$}
\label{ssec:recursive_general_b}
The GTT basis functions of order $N=b^n$, denoted $f_j^{(b^n)}(x)$, are defined recursively over the interval $[0,1)$. The base case for $n=0$ (i.e., $N=b^0=1$) is:
\begin{equation} \label{eq:gtt_base_case_general}
f_0^{(1)} (x) = 1, \quad \quad 0 \leq x < 1,
\end{equation}
and $0$ otherwise. For $n \geq 1$, and for any index $j \in \{0, 1, \ldots, b^n-1\}$, the GTT basis functions are defined as:
\begin{equation} \label{eq:gtt_recursive_general_b_main}
f_{j}^{( b^{n} )} (x) = \Wfun_{\lfloor b x \rfloor, \lfloor j / b^{n-1} \rfloor} \cdot f_{j \pmod{b^{n-1}}}^{(b^{n-1})}\left(\{b x\}\right), \quad \text{for } 0 \leq x < 1.
\end{equation}
In this definition, the index $j$ represents the specific basis function, ranging from $0$ to $b^n-1$. Further, $\lfloor b x \rfloor$ determines the row index of the base matrix $\Wfun$ to use, corresponding to the current sub-interval of $x$. The expression $\lfloor j / b^{n-1} \rfloor$ determines the column index of the base matrix $\Wfun$ to use, based on the most significant ``digit'' of $j$ in base $b$. The term $j \pmod{b^{n-1}}$ extracts the less significant ``digits'' of $j$ for the recursive call and $\{b x\}$ maps the current sub-interval of $x$ back to the $[0,1)$ domain for the recursive call.

This concise form captures the essence of the tensor product operation, where the value of a function at a point $x$ is determined by a product of elements from $\Wfun$ corresponding to the ``digits'' of $x$ in base $b$ and the ``digits'' of the function index $j$ in base $b$. These GTT basis functions are piece-wise constant over $b^n$ sub-intervals of $[0,1)$.

\begin{remark}
    We note that \meqref{eq:gtt_recursive_general_b_main} is equivalent to the following form, which highlights the role of the base $b$, digits of $x$ and $j$:
    \begin{equation}
        f_{j_{n-1}j_{n-2}\ldots j_0}^{(b^n)} (0.x_1x_2\ldots) = \Wfun_{x_1, j_{n-1}} \cdot f_{j_{n-2}\ldots j_0}^{(b^{n-1})} (0.x_2x_3\ldots)
    \end{equation}
    where $j = j_{n-1} j_{n-2} \ldots j_0$ is the base $b$ representation of the index $j$, and $x = 0.x_1 x_2 \ldots $ is the base $b$ representation of $x$ in the interval $[0,1)$.
\end{remark}

\subsection{Specific cases: Base $b=2$ and $b=3$}
\label{ssec:specific_cases}
To further illustrate the general recursive definition, we provide specific examples for common bases.

\subsubsection{Case $b=2$}
When the base matrix $\Wfun$ is a $2 \times 2$ unitary matrix with entries $\Wfun_{0,0}$, $\Wfun_{0,1}$, $\Wfun_{1,0}$, and $\Wfun_{1,1}$, the general definition simplifies to the following. For $n \geq 1$, let $k = 2^{n-1}$. The functions are recursively defined for $0 \leq i \leq k-1$:
\begin{align}\label{eq:gtt_recur_first_b2}
f_{i}^{(2k)} (x) &=
\begin{cases}
    \Wfun_{0,0} \, f_i^{(k)} (2x), & 0 \leq x < \frac{1}{2}, \\
    \Wfun_{1,0} \, f_i^{(k)} (2x-1), & \frac{1}{2} \leq x < 1, \\
\end{cases} \\
f_{k+i}^{(2k)} (x) &=
\begin{cases}
    \Wfun_{0,1} \, f_i^{(k)} (2x), & 0 \leq x < \frac{1}{2}, \\
    \Wfun_{1,1} \, f_i^{(k)} (2x-1), & \frac{1}{2} \leq x < 1.\\
\end{cases}\label{eq:gtt_recur_second_b2}
\end{align}
For the simplest case where $N=2$ (i.e., $n=1$, $k=1$, $i=0$), these definitions yield $f_0^{(2)}(x)$ and $f_1^{(2)}(x)$, which correspond precisely to the columns of the base matrix $\Wfun$. Importantly, when $\Wfun_{0,0} = \Wfun_{0,1} = \Wfun_{1,0} =1$ and $\Wfun_{1,1} = -1$ (i.e., $\Wfun = \frac{1}{\sqrt{2}}\begin{bmatrix} 1 & 1 \\ 1 & -1 \end{bmatrix}$ if normalized, or a scaled version), these functions become the well-known Walsh-Hadamard basis functions in natural order. The Walsh basis functions in natural order,  for $N=8$, are shown in \mfig{fig:Walsh_bfN8}. 

Walsh basis functions are a set of orthonormal functions defined over the interval \([0,1]\) that take only the values \(\pm 1\). Unlike the sinusoidal basis used in Fourier analysis, Walsh functions are piecewise constant. 
This binary structure allows them to capture discontinuities and rapid transitions more naturally than smooth trigonometric functions. They form a complete orthogonal basis in \(L^2[0,1]\), making them suitable for a wide range of functional expansions \cite{beauchamp1975walsh, walsh1923closed, shukla2023hybridode}. 

Walsh functions have been widely employed in both classical and quantum domains due to their fast transform algorithms and binary simplicity. The Walsh--Hadamard Transform (WHT) is particularly useful in data compression, noise filtering~\cite{shukla2022hybridimage, shukla2023quantumdsp, rohida2024hybrid}, and pattern recognition, where it provides a computationally efficient alternative to the discrete Fourier transform, especially on digital hardware that favors additions and subtractions over multiplications. Because of their ability to represent abrupt transitions, Walsh functions are also effective in representing and solving certain types of differential equations~\cite{beer1981walsh,ahner1988walsh,shukla2023hybridode}, particularly those with discontinuous or piecewise-defined boundary conditions. Furthermore, Walsh expansions have been used in control theory, digital communication systems (e.g., code division multiple access), and biomedical signal processing (e.g., EEG compression), owing to their compact support and efficient hardware implementation. 
Besides the natural ordering of Walsh basis functions considered above, other orderings like sequency and dyadic based orderings have also been explored~\cite{geadah1977natural, shukla2024sequency,shukla2022quantumzerocrossings}. In specific applications, some ordering might be preferable over others.


\begin{figure}[htbp]
  \centering

  \begin{minipage}{0.23\textwidth}
    \centering
    \includegraphics[width=\linewidth]{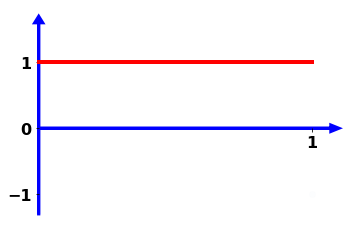}
    \subcaption{$f_0^{(8)}(x)$ }
  \end{minipage}
  \hfill
  \begin{minipage}{0.23\textwidth}
    \centering
    \includegraphics[width=\linewidth]{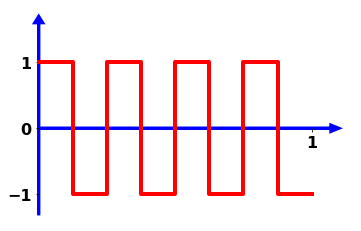}
    \subcaption{$f_1^{(8)}(x)$ }
  \end{minipage}
  \hfill
  \begin{minipage}{0.23\textwidth}
    \centering
    \includegraphics[width=\linewidth]{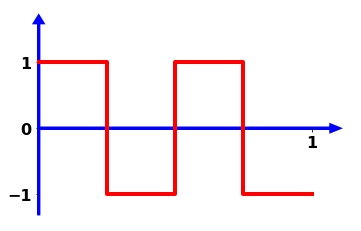}
    \subcaption{$f_2^{(8)}(x)$ }
  \end{minipage}
  \hfill
  \begin{minipage}{0.23\textwidth}
    \centering
    \includegraphics[width=\linewidth]{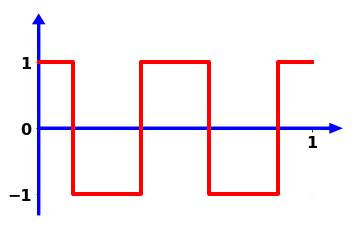}
    \subcaption{$f_3^{(8)}(x)$ }
  \end{minipage}

  \vspace{1em}

  \begin{minipage}{0.23\textwidth}
    \centering
    \includegraphics[width=\linewidth]{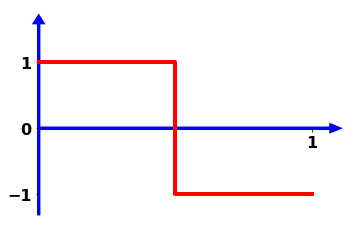}
    \subcaption{$f_4^{(8)}(x)$ }
  \end{minipage}
  \hfill
  \begin{minipage}{0.23\textwidth}
    \centering
    \includegraphics[width=\linewidth]{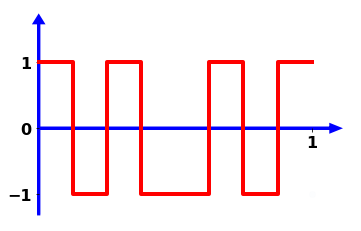}
    \subcaption{$f_5^{(8)}(x)$ }
  \end{minipage}
  \hfill
  \begin{minipage}{0.23\textwidth}
    \centering
    \includegraphics[width=\linewidth]{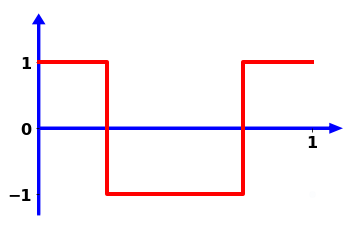}
    \subcaption{$f_6^{(8)}(x)$ }
  \end{minipage}
  \hfill
  \begin{minipage}{0.23\textwidth}
    \centering
    \includegraphics[width=\linewidth]{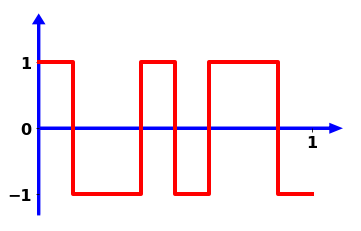}
    \subcaption{$f_7^{(8)}(x)$ }
  \end{minipage}

  \caption{Walsh basis functions in natural order labeled from $f_0^{(8)}(x)$  to $f_7^{(8)}(x)$. }

\label{fig:Walsh_bfN8}
  
\end{figure}

\begin{figure}[htbp]
  \centering
  \begin{minipage}{0.48\textwidth}
    \centering
    \includegraphics[width=\linewidth]{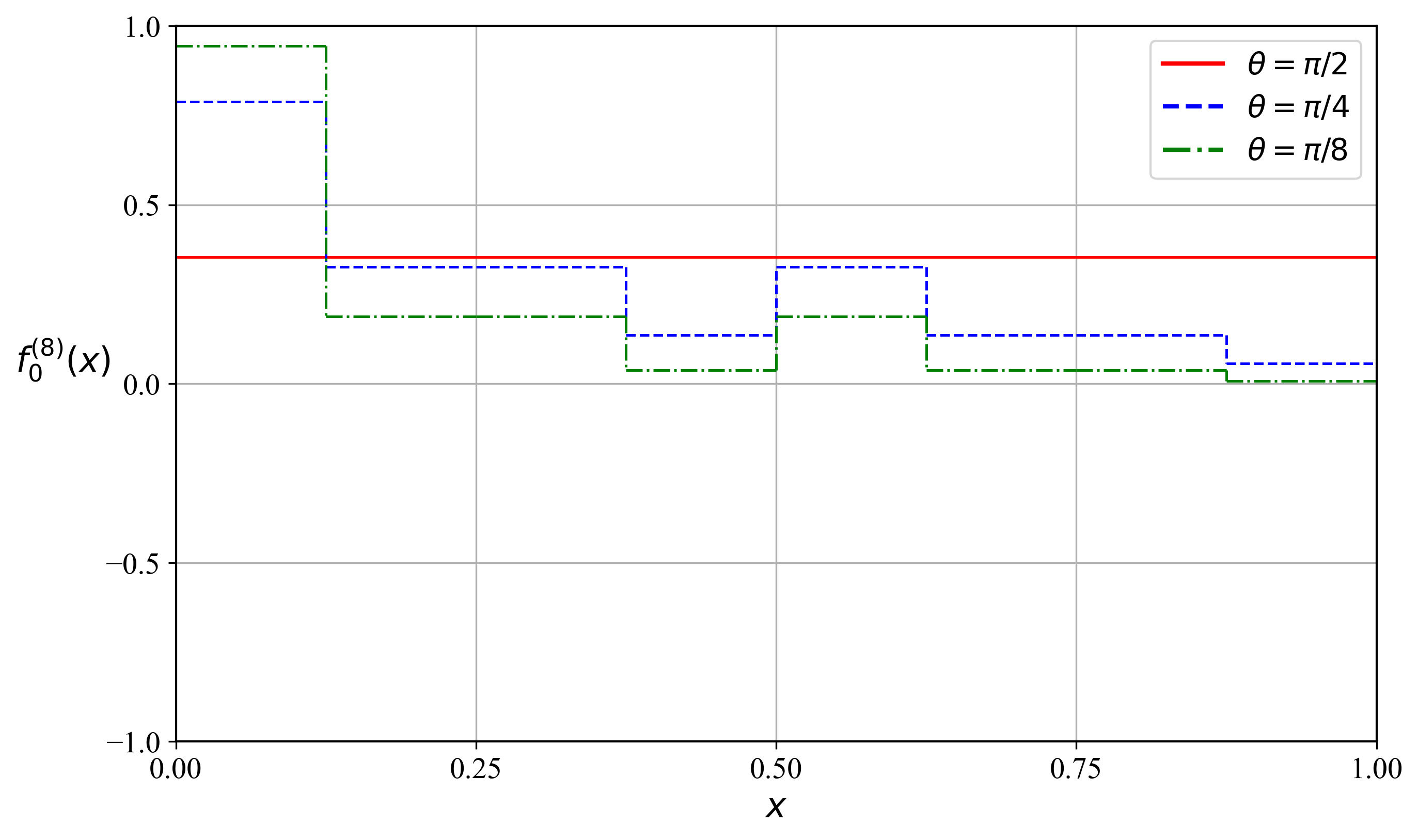}
    \subcaption{}
  \end{minipage}%
  \hfill
  \begin{minipage}{0.48\textwidth}
    \centering
    \includegraphics[width=\linewidth]{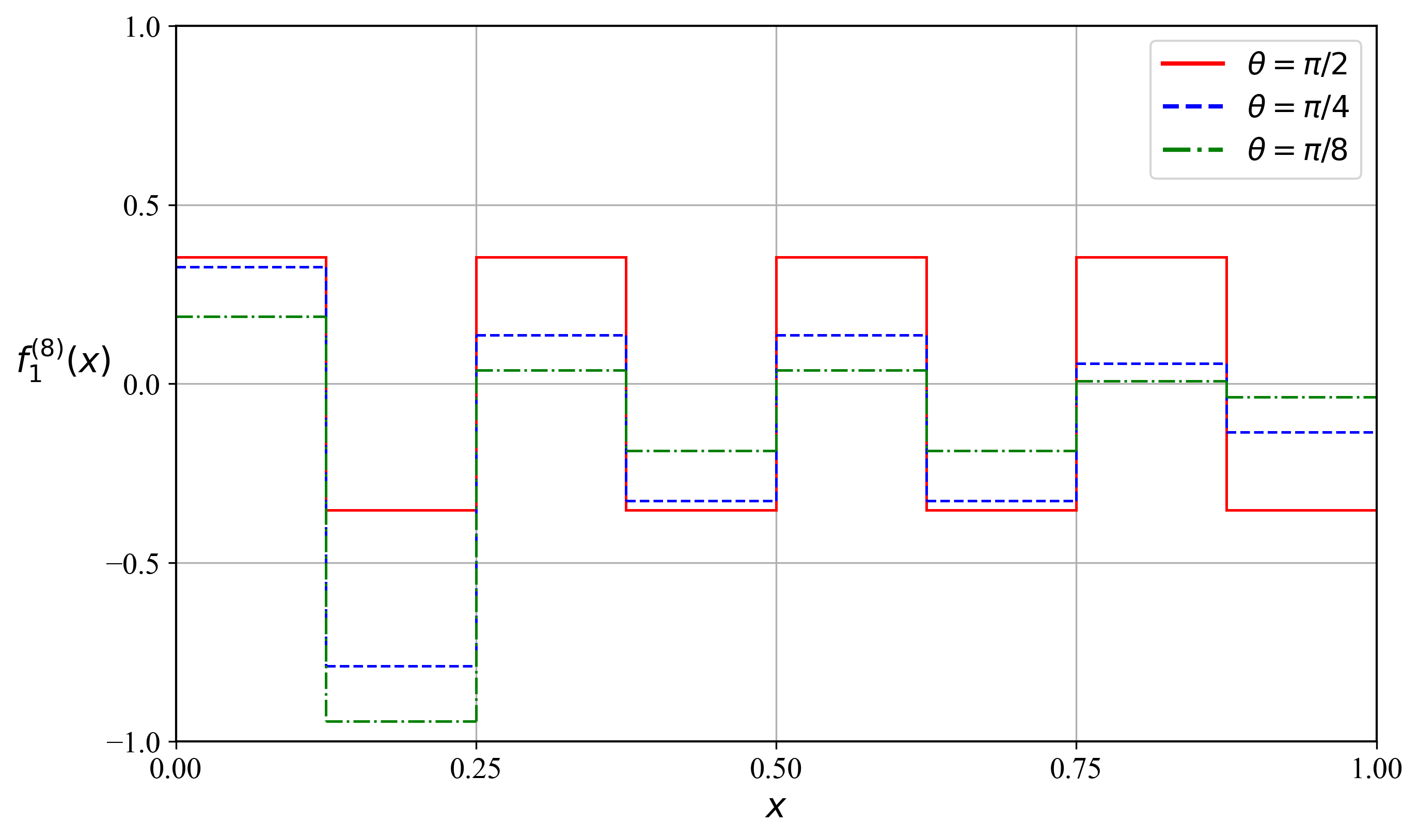}
    \subcaption{}
  \end{minipage}

  \vspace{1em}

  \begin{minipage}{0.48\textwidth}
    \centering
    \includegraphics[width=\linewidth]{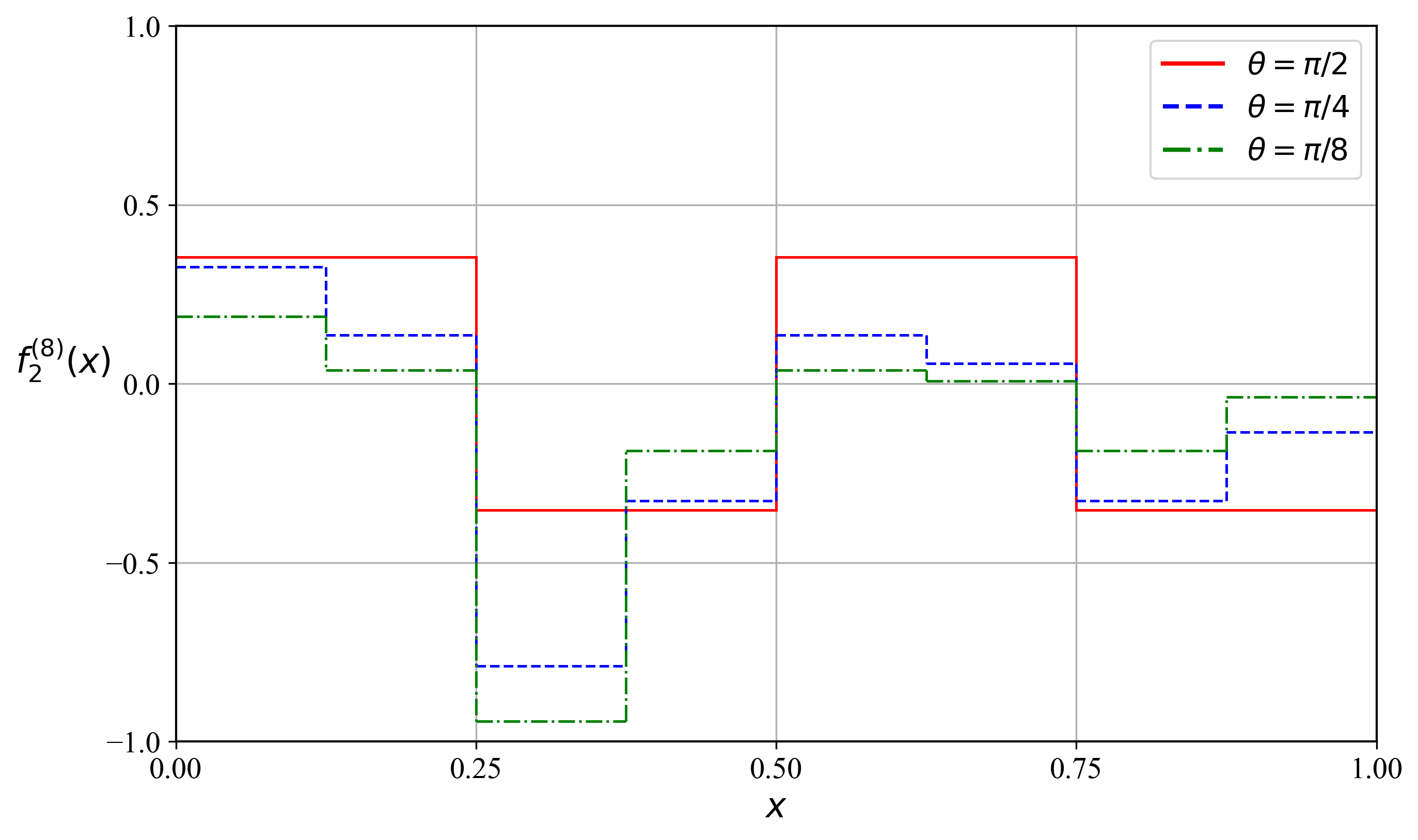}
    \subcaption{}
  \end{minipage}%
  \hfill
  \begin{minipage}{0.48\textwidth}
    \centering
    \includegraphics[width=\linewidth]{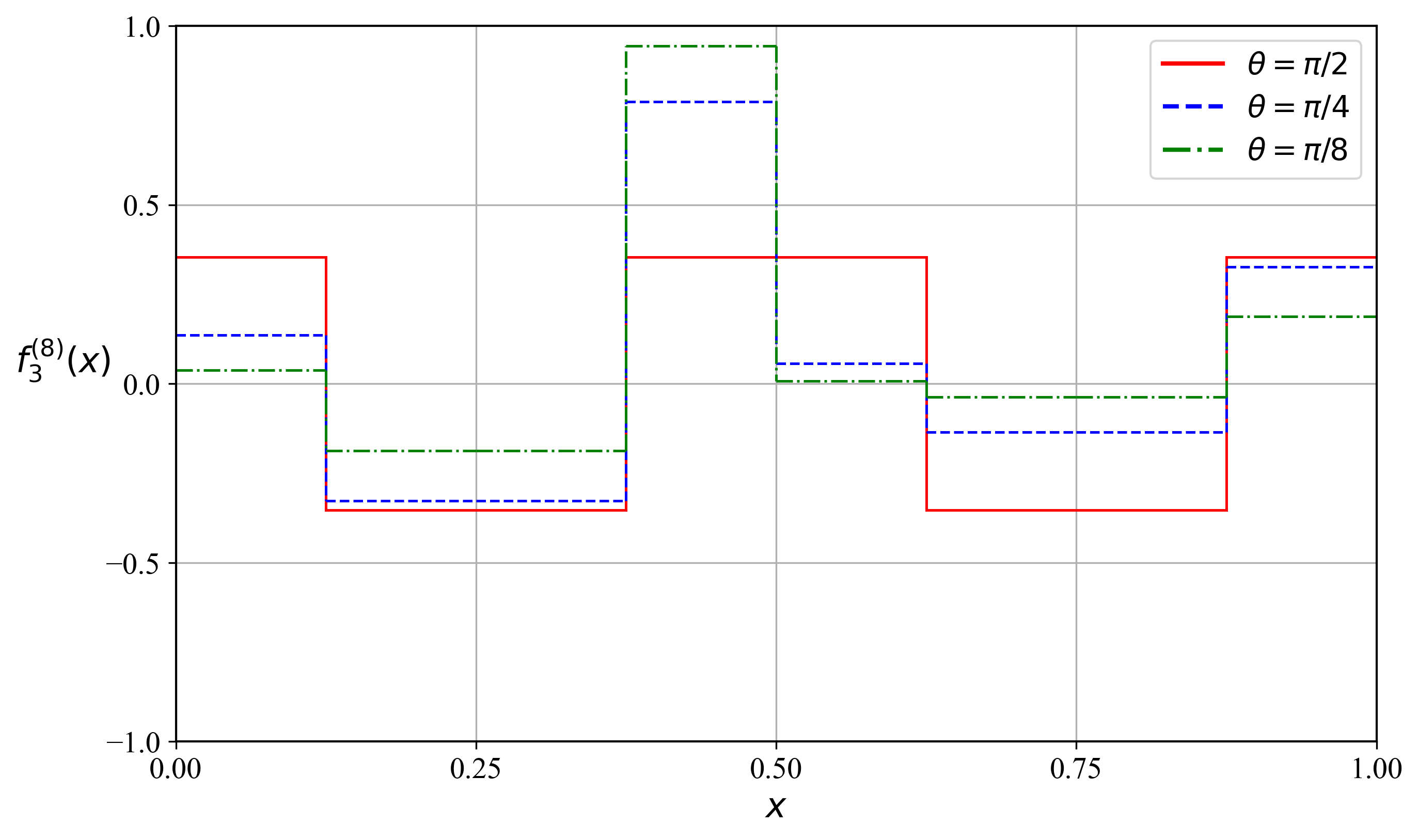}
    \subcaption{}
  \end{minipage}

  \vspace{1em}

  \begin{minipage}{0.48\textwidth}
    \centering
    \includegraphics[width=\linewidth]{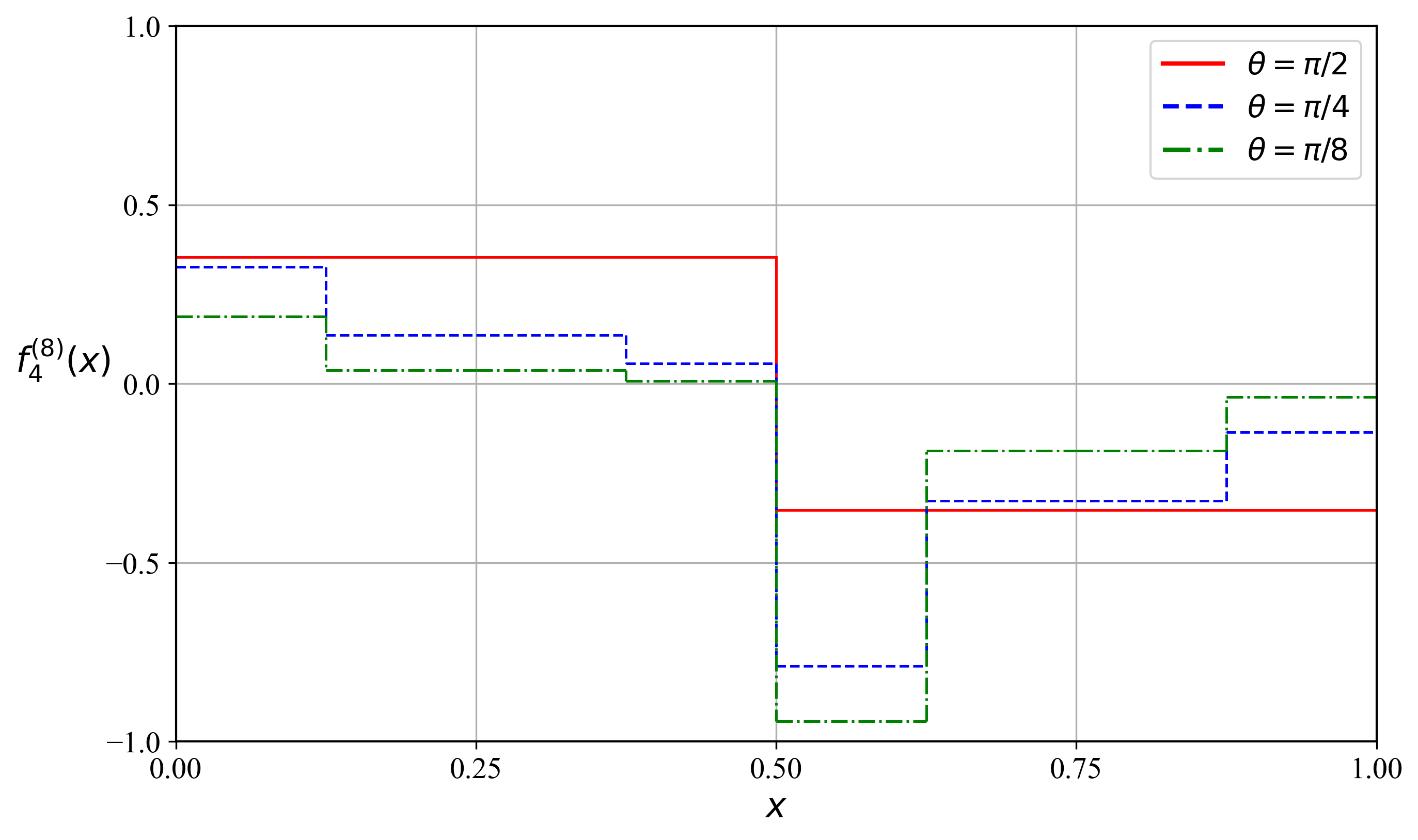}
    \subcaption{}
  \end{minipage}%
  \hfill
  \begin{minipage}{0.48\textwidth}
    \centering
    \includegraphics[width=\linewidth]{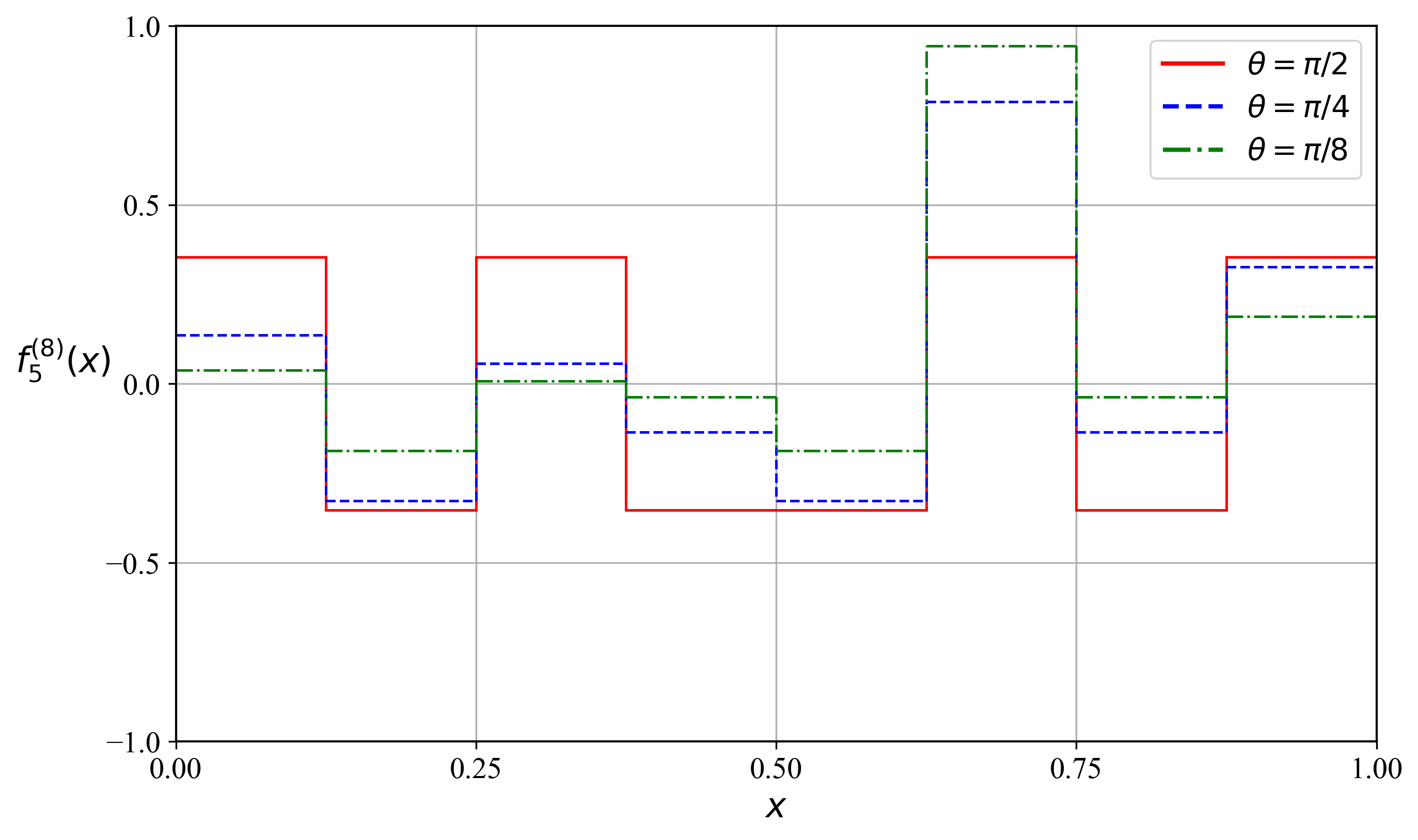}
    \subcaption{}
  \end{minipage}

  \vspace{1em}

  \begin{minipage}{0.48\textwidth}
    \centering
    \includegraphics[width=\linewidth]{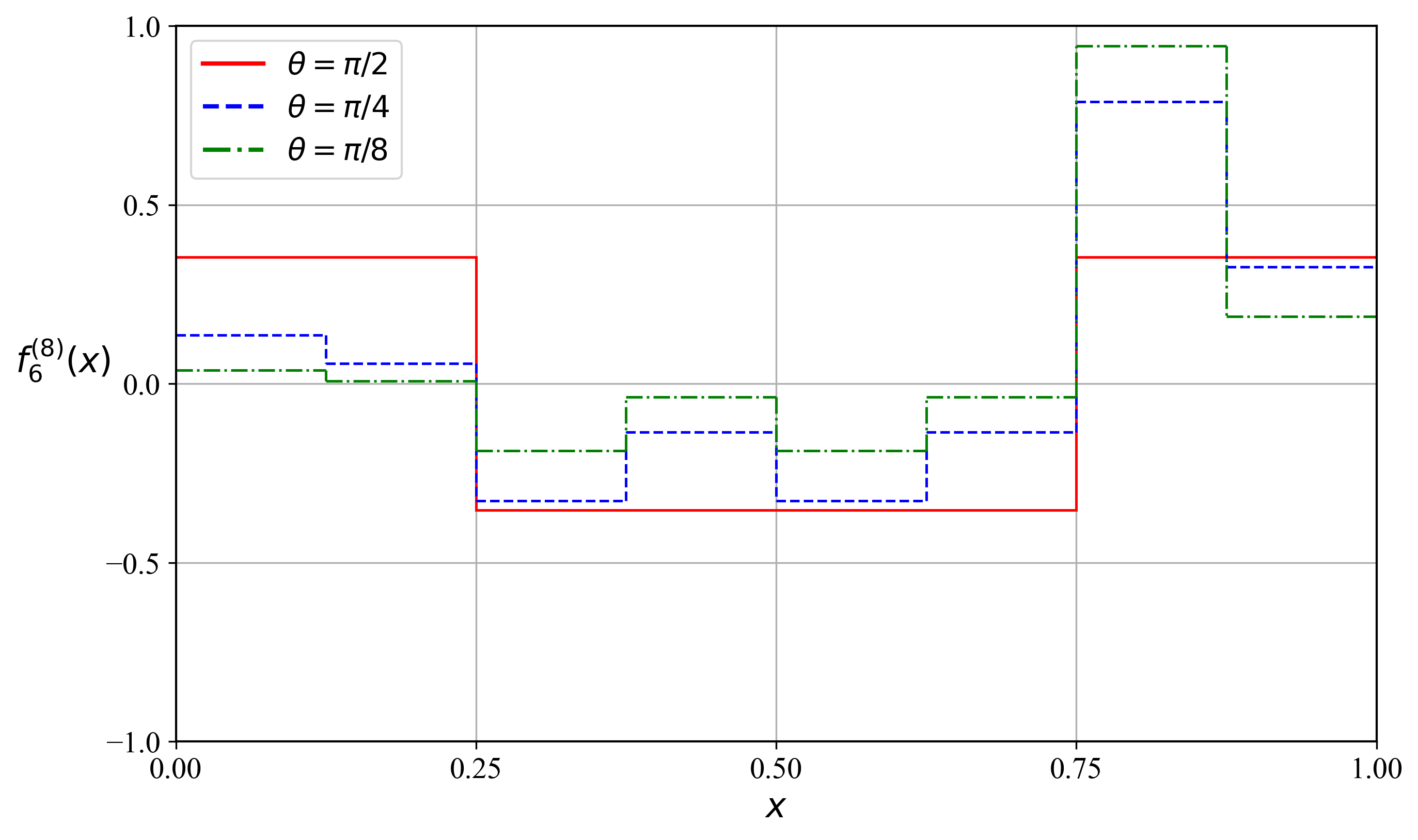}
    \subcaption{}
  \end{minipage}%
  \hfill
  \begin{minipage}{0.48\textwidth}
    \centering
    \includegraphics[width=\linewidth]{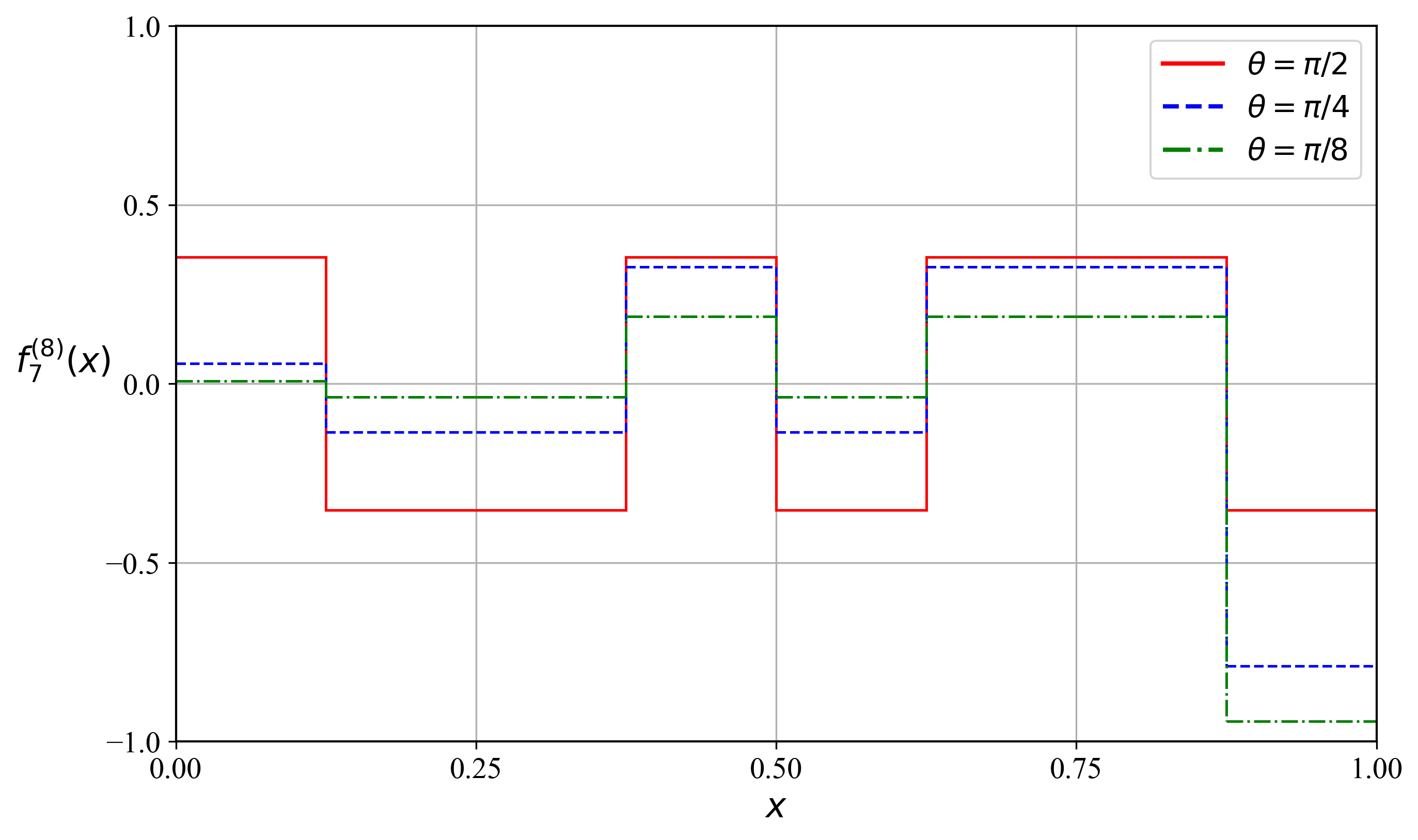}
    \subcaption{}
  \end{minipage}


\caption{
GTT basis functions of order N=8 for $\theta = \pi/2, ~\pi/4, ~\pi/8$ are shown in subfigures (a)--(h). Note that the $\theta = \pi/2$ case corresponds to Walsh basis functions in natural order shown in \mfig{fig:Walsh_bfN8}.
}

\label{fig:GTT_fun}

\end{figure}

GTT basis functions provide a powerful generalization of the Walsh basis by enabling the use of an arbitrary unitary base matrix $\Wfun$ to construct the basis set. As noted earlier, choosing 
\[
\Wfun = \frac{1}{\sqrt{2}}\begin{bmatrix} 1 & 1 \\ 1 & -1 \end{bmatrix}
\]
reproduces the canonical Walsh functions in natural order.
To illustrate the flexibility of the GTT framework, we consider a general unitary matrix of the form
\begin{equation} \label{eq:U3}
    \Wfun = U3(\theta, \phi, \lambda) =
    \begin{bmatrix}
        \cos\left(\frac{\theta}{2}\right) & -e^{i\lambda} \sin\left(\frac{\theta}{2}\right) \\
        e^{i\phi} \sin\left(\frac{\theta}{2}\right) & e^{i(\phi + \lambda)} \cos\left(\frac{\theta}{2}\right)
    \end{bmatrix},
\end{equation}
which corresponds to the \texttt{U3Gate} used in IBM's quantum computing framework Qiskit. For simplicity, we fix the phase parameters to $\phi = 0$ and $\lambda = \pi$, allowing only the angle $\theta$ to vary.
By adjusting $\theta$, one can systematically alter the structure of the resulting GTT basis functions. \mfig{fig:GTT_fun} visualizes the GTT basis functions of order $N = 8$ for three representative values of the tunable parameter $\theta$: $\pi/2$, $\pi/4$, and $\pi/8$, shown in subfigures (a)--(h). 
While all basis functions retain the hallmark discontinuous, piecewise-constant behavior characteristic of Walsh functions, their waveform geometries can be systematically varied with $\theta$, offering fine-grained control over basis structure. The special case $\theta = \pi/2$ recovers the standard Walsh basis, also shown separately in \mfig{fig:Walsh_bfN8}.

This example highlights the expressive versatility of the GTT framework: by varying $\Wfun$, one can tailor the shape of each basis function without sacrificing the discrete, piecewise-constant discontinuous behavior that makes Walsh-type bases useful in both classical and quantum applications.

\subsubsection{Case $b=3$}
For a $3 \times 3$ unitary base matrix $\Wfun$, the general recursive definition expands for $k=3^{n-1}$ and $0 \leq i \leq k-1$ to:
\begin{align}
f_{i}^{(3k)} (x) &=
\begin{cases}
    \Wfun_{0,0} \, f_i^{(k)} (3x), & 0 \leq x < \frac{1}{3}, \\
    \Wfun_{1,0} \, f_i^{(k)} (3x-1), & \frac{1}{3} \leq x < \frac{2}{3}, \\
    \Wfun_{2,0} \, f_i^{(k)} (3x-2), & \frac{2}{3} \leq x < 1, \\
\end{cases} \\
f_{k+i}^{(3k)} (x) &=
\begin{cases}
    \Wfun_{0,1} \, f_i^{(k)} (3x), & 0 \leq x < \frac{1}{3}, \\
    \Wfun_{1,1} \, f_i^{(k)} (3x-1), & \frac{1}{3} \leq x < \frac{2}{3}, \\
    \Wfun_{2,1} \, f_i^{(k)} (3x-2), & \frac{2}{3} \leq x < 1, \\
\end{cases} \\
f_{2k+i}^{(3k)} (x) &=
\begin{cases}
    \Wfun_{0,2} \, f_i^{(k)} (3x), & 0 \leq x < \frac{1}{3}, \\
    \Wfun_{1,2} \, f_i^{(k)} (3x-1), & \frac{1}{3} \leq x < \frac{2}{3}, \\
    \Wfun_{2,2} \, f_i^{(k)} (3x-2), & \frac{2}{3} \leq x < 1. \\
\end{cases}
\end{align}

\subsection{Properties of GTT basis functions}
\label{ssec:properties_basis}
The recursively defined GTT basis functions possess several important properties that underpin their utility as a powerful analytical tool. Foremost among these, if the base matrix $\Wfun$ is unitary, then the set of normalized GTT basis functions $  b^{n/2} \{f_j^{(b^n)}(x)\}_{j=0}^{b^n-1}$ forms a complete orthonormal set over the interval $[0,1)$. A detailed proof of this orthonormality is provided in Section~\ref{sssec:proof_orthonormality}. This property is a direct consequence of the tensor product construction, as the $N \times N$ matrix $\Wfun^{\otimes n}$ is itself unitary when $\Wfun$ is unitary, and its columns (when sampled appropriately over $b^n$ sub-intervals) constitute these basis functions. Each GTT basis function is also piece-wise constant across $b^n$ equally spaced sub-intervals of $[0,1)$, with its value within each sub-interval determined by the product of elements from the base matrix $\Wfun$. This piecewise constancy, combined with the tunability offered by the choice of $\Wfun$, provides immense flexibility for representing and transforming signals that exhibit localized features or specific frequency characteristics. The ability to tailor the basis through $\Wfun$ provides a potential advantage in a broad range of classical and quantum applications including efficient quantum state compression and transmission, quantum function encoding, and quantum digital signal processing applications (ref.~\mref{sec:gtt_applications}).

\subsubsection{Proof of orthonormality}
\label{sssec:proof_orthonormality}

We define the standard $L^2$ inner product on $[0,1)$ as:
\begin{equation}
    \braket{f | g} = \int_0^1 \overline{f(x)} g(x) \, dx.
\end{equation}
Next we define the normalized GTT basis functions $\phi_j^{(b^n)}(x)$ as:
\begin{equation}
    \phi_j^{(b^n)}(x) = b^{n/2} f_j^{(b^n)}(x).
    \label{eq:normalized_basis_phi}
\end{equation}

\begin{thm}[Orthonormality of GTT Basis Functions] \label{thm_ortho}
    Let $\Wfun$ be a unitary matrix. Then the normalized GTT basis functions $\phi_{j}^{(b^n)}(x)$, for $j = 0, 1, \dots, b^n - 1$, form a complete orthonormal set with respect to the $L^2$ inner product, such that:
    \begin{equation}
        \braket{\phi_j^{(b^n)} | \phi_k^{(b^n)}} = \delta_{jk}.
    \end{equation}
\end{thm}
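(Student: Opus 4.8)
The plan is to establish orthonormality by induction on $n$, with the unitarity of $\Wfun$ entering precisely through the orthonormality of its columns, and then to deduce completeness by a dimension count. The base case $n=0$ is immediate: $\phi_0^{(1)}(x)=1$ on $[0,1)$, so $\braket{\phi_0^{(1)} | \phi_0^{(1)}}=\int_0^1 1\,dx = 1 = \delta_{00}$.

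For the inductive step I assume the functions $\phi_i^{(b^{n-1})}$, $i=0,\dots,b^{n-1}-1$, are orthonormal. Given indices $j,k\in\{0,\dots,b^n-1\}$, I write each in terms of its most significant base-$b$ digit and remainder, $j=q_j b^{n-1}+r_j$ with $q_j=\lfloor j/b^{n-1}\rfloor\in\{0,\dots,b-1\}$ and $r_j=j\bmod b^{n-1}$ (and likewise for $k$), so that the recursion reads $f_j^{(b^n)}(x)=\Wfun_{\lfloor bx\rfloor,\,q_j}\, f_{r_j}^{(b^{n-1})}(\{bx\})$. I then split $\int_0^1$ into the $b$ sub-intervals $[\ell/b,(\ell+1)/b)$, $\ell=0,\dots,b-1$, on which $\lfloor bx\rfloor=\ell$ and $\{bx\}=bx-\ell$, and substitute $u=bx-\ell$ to map each sub-interval back to $[0,1)$.

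This substitution factorizes the inner product into a product of two pieces. The powers of $b$ (a factor $b^n$ from the normalization $\phi=b^{n/2}f$, a factor $1/b$ from $du=b\,dx$, and a factor $b^{-(n-1)}$ from rewriting the inner remainder integral in terms of $\phi^{(b^{n-1})}$) collapse to exactly $1$, so that after applying the inductive hypothesis $\braket{\phi_{r_j}^{(b^{n-1})} | \phi_{r_k}^{(b^{n-1})}}=\delta_{r_j r_k}$ I am left with
\[
\braket{\phi_j^{(b^n)} | \phi_k^{(b^n)}}=\delta_{r_j r_k}\sum_{\ell=0}^{b-1}\overline{\Wfun_{\ell,q_j}}\,\Wfun_{\ell,q_k}.
\]
The remaining sum is exactly the Hermitian inner product of columns $q_j$ and $q_k$ of $\Wfun$; since $\Wfun$ is unitary its columns are orthonormal, so this sum equals $\delta_{q_j q_k}$. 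Finally, uniqueness of the quotient--remainder decomposition of an integer in $\{0,\dots,b^n-1\}$ gives $\delta_{r_j r_k}\delta_{q_j q_k}=\delta_{jk}$, closing the induction. For completeness I observe that every $\phi_j^{(b^n)}$ is constant on each of the $b^n$ equal sub-intervals of $[0,1)$ (equivalently, its sample vector is a column of the unitary matrix $\Wfun^{\otimes n}$), so all $b^n$ of them lie in the $b^n$-dimensional space of such step functions; having exhibited $b^n$ mutually orthonormal, hence linearly independent, elements of this space, they form a basis of it, which is the asserted completeness.

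The main difficulty is not conceptual but lies in the careful handling of the change of variables on each sub-interval together with the exact cancellation of the powers of $b$ down to the factor $1$; a secondary point requiring care is correctly recognizing the residual sum as a column inner product of $\Wfun$ and then merging the two Kronecker deltas through the base-$b$ digit decomposition of $j$ and $k$.
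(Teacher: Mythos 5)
Your proof is correct and follows essentially the same route as the paper: induction on $n$, splitting $[0,1)$ into $b$ sub-intervals, the substitution $u = bx - \ell$, column orthonormality of the unitary $\Wfun$, and merging the two Kronecker deltas via the base-$b$ digit decomposition; the only cosmetic differences are that you carry the normalization $b^{n/2}$ through the induction (so the powers of $b$ cancel to $1$), whereas the paper works with the unnormalized $f_j^{(b^n)}$ and proves $\braket{f_j^{(b^n)} | f_k^{(b^n)}} = \delta_{jk}/b^n$ first, and that you start the induction at $n=0$ rather than $n=1$. One small bonus: your closing dimension-count argument actually justifies the completeness claim (within the $b^n$-dimensional space of functions constant on the $b^n$ sub-intervals), a point the theorem statement asserts but the paper's proof never addresses.
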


\begin{proof}
We will prove this by first demonstrating that $\braket{f_j^{(b^n)} | f_k^{(b^n)}} = \frac{\delta_{jk}}{b^n}$ by induction on $n$, as the orthonormality of $\phi_j^{(b^n)}(x)$ follows form this.

\noindent \textit{Base Case ($n=1$):}
For $n=1$, the basis functions are $f_j^{(b)}(x) = \Wfun_{\lfloor b x \rfloor, j}$. These functions take the value $\Wfun_{m,j}$ for $x \in [\frac{m}{b}, \frac{m+1}{b})$, where $m = \lfloor b x \rfloor$.
The inner product is:
\begin{align*}
\braket{f_j^{(b)} | f_k^{(b)}} &= \int_0^1 \overline{\Wfun_{\lfloor b x \rfloor, j}} \Wfun_{\lfloor b x \rfloor, k} \, dx \\
&= \sum_{m=0}^{b-1} \int_{\frac{m}{b}}^{\frac{m+1}{b}} \overline{\Wfun_{m,j}} \Wfun_{m,k} \, dx \\
&= \sum_{m=0}^{b-1} \overline{\Wfun_{m,j}} \Wfun_{m,k} \left(\frac{m+1}{b} - \frac{m}{b}\right) \\
&= \frac{1}{b} \sum_{m=0}^{b-1} \overline{\Wfun_{m,j}} \Wfun_{m,k}.
\end{align*}
Since $\Wfun$ is a unitary matrix, its columns are orthonormal, meaning $\sum_{m=0}^{b-1} \overline{\Wfun_{m,j}} \Wfun_{m,k} = \delta_{jk}$.
Therefore, for $n=1$, 
\begin{equation}
\braket{f_j^{(b)} | f_k^{(b)}} = \frac{\delta_{jk}}{b}.
\label{eq:base_case_f_ortho}
\end{equation}
This establishes the base case for $f_j^{(b^n)}(x)$.

\noindent \textit{Inductive Step:}
Next we assume that the orthogonality condition holds for some integer $n-1 \geq 1$, i.e., $\braket{f_{\tilde{j}}^{(b^{n-1})} | f_{\tilde{k}}^{(b^{n-1})}} = \frac{\delta_{\tilde{j}\tilde{k}}}{b^{n-1}}$ for all $0 \leq \tilde{j}, \tilde{k} < b^{n-1}$.
We now show it holds for $n$. We use the recursive definition for $f_j^{(b^n)}(x)$ from Equation~\eqref{eq:gtt_recursive_general_b_main}.
Recall that $j = b^{n-1} j_n + \tilde{j}$ and $k = b^{n-1} k_n + \tilde{k}$, where $j_n = \lfloor j / b^{n-1} \rfloor$, $k_n = \lfloor k / b^{n-1} \rfloor$, $\tilde{j} = j \pmod{b^{n-1}}$, and $\tilde{k} = k \pmod{b^{n-1}}$.

Consider the inner product $\braket{f_j^{(b^n)} | f_k^{(b^n)}}$:
\begin{align*}
\braket{f_j^{(b^n)} | f_k^{(b^n)}} &= \int_{0}^{1} \overline{f_j^{(b^n)}(x)} f_k^{(b^n)}(x) \, dx \\
&= \sum_{m=0}^{b-1} \int_{\frac{m}{b}}^{\frac{m+1}{b}} \overline{f_j^{(b^n)}(x)} f_k^{(b^n)}(x) \, dx .\\
\intertext{Substituting the recursive definition from Equation~\eqref{eq:gtt_recursive_general_b_main}, where $\lfloor b x \rfloor = m$ for $x \in [\frac{m}{b}, \frac{m+1}{b})$,}
\braket{f_j^{(b^n)} | f_k^{(b^n)}} &= \sum_{m=0}^{b-1} \int_{\frac{m}{b}}^{\frac{m+1}{b}} \overline{\Wfun_{m,j_n} f_{\tilde{j}}^{(b^{n-1})}(\{b x\})} \Wfun_{m,k_n} f_{\tilde{k}}^{(b^{n-1})}(\{b x\}) \, dx \\
&= \sum_{m=0}^{b-1} \overline{\Wfun_{m,j_n}} \Wfun_{m,k_n} \int_{\frac{m}{b}}^{\frac{m+1}{b}} \overline{f_{\tilde{j}}^{(b^{n-1})}(\{b x\})} f_{\tilde{k}}^{(b^{n-1})}(\{b x\}) \, dx .\\
\intertext{Let $y = bx - m$. Then $dy = b \, dx$, so $dx = \frac{1}{b} \, dy$. When $x = \frac{m}{b}$, $y = 0$. When $x = \frac{m+1}{b}$, $y = 1$. And $\{b x\} = y$.}
\braket{f_j^{(b^n)} | f_k^{(b^n)}} &= \sum_{m=0}^{b-1} \overline{\Wfun_{m,j_n}} \Wfun_{m,k_n} \frac{1}{b} \int_{0}^{1} \overline{f_{\tilde{j}}^{(b^{n-1})}(y)} f_{\tilde{k}}^{(b^{n-1})}(y) \, dy. \\
\intertext{By the induction hypothesis, $\int_{0}^{1} \overline{f_{\tilde{j}}^{(b^{n-1})}(y)} f_{\tilde{k}}^{(b^{n-1})}(y) \, dy = \frac{\delta_{\tilde{j} \tilde{k}}}{b^{n-1}}$. Therefore,}
\braket{f_j^{(b^n)} | f_k^{(b^n)}} &= \frac{1}{b} \sum_{m=0}^{b-1} \overline{\Wfun_{m,j_n}} \Wfun_{m,k_n} \frac{\delta_{\tilde{j} \tilde{k}}}{b^{n-1}} \\
&= \frac{\delta_{\tilde{j} \tilde{k}}}{b^n} \sum_{m=0}^{b-1} \overline{\Wfun_{m,j_n}} \Wfun_{m,k_n} \\
& = \frac{\delta_{\tilde{j} \tilde{k}} \delta_{j_n k_n}}{b^n}.
\end{align*}
The last equality follows from the fact that $\Wfun$ is unitary, and its columns are orthonormal, meaning $\sum_{m=0}^{b-1} \overline{\Wfun_{m,j_n}} \Wfun_{m,k_n} = \delta_{j_n k_n}$.
Finally, note that $\delta_{\tilde{j} \tilde{k}} \delta_{j_n k_n} = \delta_{jk}$. This is because $j=k$ if and only if both their most significant base-$b$ digits are equal ($j_n=k_n$) and their remaining less significant digits are equal ($\tilde{j}=\tilde{k}$). If either condition fails, $j \neq k$, and thus $\delta_{jk}=0$.
Therefore, we have established that:
\begin{equation}
\braket{f_j^{(b^n)} | f_k^{(b^n)}} = \frac{\delta_{jk}}{b^n}.
\label{eq:f_orthogonality_result}
\end{equation}
This completes the proof.
\end{proof}

\subsection{GTT-Fourier series}
\label{ssec:gtt_fourier_series}

Building upon the orthonormality established in Theorem~\ref{thm_ortho}, the normalized Generalized Tensor Transform basis functions provide a powerful framework for approximating continuous or square-integrable functions. Any function \( g(x) \in L^2[0,1) \) can be accurately represented by a Fourier series expansion in terms of these orthonormal GTT basis functions \( \phi_m^{(b^n)}(x) \). This GTT-Fourier Series approximation is given by:
\[
    g(x) \approx \sum_{m=0}^{b^n - 1} C_m \phi_m^{(b^n)}(x),
\]
where the Fourier coefficients \( C_m \) are computed using the standard $L^2$ inner product for orthonormal bases:
\[
    C_m = \braket{\phi_m^{(b^n)} | g } = \int_0^1 \overline{\phi_m^{(b^n)}(x)} g(x) \, dx.
\]
This expansion provides a versatile method for representing and analyzing functions by decomposing them into components along the GTT basis functions, leveraging their customizable nature derived from the choice of $\Wfun$.

\subsubsection{Example: GTT-Fourier series approximation ($b=2$, $n=1$)}
\label{sssec:example_gtt_fourier}


We illustrate the GTT-Fourier Series for a simple case with base $b=2$ and $n=1$, meaning we use $N=2^1=2$ basis functions. To highlight the generalization beyond well-known transforms, consider the following unitary base matrix $\Wfun$:
\[
\Wfun = \frac{1}{\sqrt{2}} \begin{bmatrix} 1 & i \\ 1 & -i \end{bmatrix}.
\]
We first determine the unnormalized GTT basis functions $f_j^{(2)}(x)$ for $N=2$:
\begin{align*}
f_0^{(2)}(x) &= \begin{cases} \Wfun_{0,0} = 1/\sqrt{2}, & 0 \leq x < 1/2 \\ \Wfun_{1,0} = 1/\sqrt{2}, & 1/2 \leq x < 1 \end{cases} = \frac{1}{\sqrt{2}} \quad \text{for } x \in [0,1), \\
f_1^{(2)}(x) &= \begin{cases} \Wfun_{0,1} = i/\sqrt{2}, & 0 \leq x < 1/2 \\ \Wfun_{1,1} = -i/\sqrt{2}, & 1/2 \leq x < 1 \end{cases}.
\end{align*}
According to Equation~\eqref{eq:normalized_basis_phi}, the orthonormal GTT basis functions for $n=1$ ($b^{n/2} = 2^{1/2} = \sqrt{2}$) are obtained by scaling $f_j^{(2)}(x)$:
\begin{align*}
\phi_0^{(2)}(x) &= \sqrt{2} f_0^{(2)}(x) = \sqrt{2} \cdot \frac{1}{\sqrt{2}} = 1 \quad \text{for } x \in [0,1), \\
\phi_1^{(2)}(x) &= \sqrt{2} f_1^{(2)}(x) = \sqrt{2} \begin{cases} i/\sqrt{2}, & 0 \leq x < 1/2 \\ -i/\sqrt{2}, & 1/2 \leq x < 1 \end{cases} = \begin{cases} i, & 0 \leq x < 1/2 \\ -i, & 1/2 \leq x < 1 \end{cases}.
\end{align*}
Now, we approximate a simple step function $g(x)$ defined as:
\[
g(x) = \begin{cases} 1, & 0 \leq x < 1/2 \\ 0, & 1/2 \leq x < 1 \end{cases}.
\]
We compute the Fourier coefficients $C_0$ and $C_1$ as follows:
\begin{align*}
C_0 &= \int_0^1 \overline{\phi_0^{(2)}(x)} g(x) \, dx = \int_0^{1/2} (1) (1) \, dx + \int_{1/2}^{1} (1) (0) \, dx 
= 1 \cdot \frac{1}{2} = \frac{1}{2} \\
C_1 &= \int_0^1 \overline{\phi_1^{(2)}(x)} g(x) \, dx 
= \int_0^{1/2} \overline{(i)} (1) \, dx + \int_{1/2}^{1} \overline{(-i)} (0) \, dx,
= \int_0^{1/2} (-i) (1) \, dx = (-i) \cdot \frac{1}{2} = -\frac{i}{2}.
\end{align*}
Now, we construct the GTT-Fourier Series approximation $g(x) \approx C_0 \phi_0^{(2)}(x) + C_1 \phi_1^{(2)}(x)$.
Substituting the coefficients and basis functions:
\begin{align*}
g(x) &\approx \frac{1}{2} (1) + \left(-\frac{i}{2}\right) \begin{cases} i, & 0 \leq x < 1/2 \\ -i, & 1/2 \leq x < 1 \end{cases} 
\quad \text{ or } \quad  g(x) \approx  \begin{cases} 1, & 0 \leq x < 1/2 \\ 0, & 1/2 \leq x < 1 \end{cases}.
\end{align*}
This example demonstrates that even with a more general unitary base matrix $\Wfun$ involving complex values, the GTT-Fourier Series efficiently and perfectly reconstructs a piecewise constant function, underscoring the flexibility and power of this generalized transform framework.

\subsection{Discretization and the GTT matrix}
\label{ssec:discretization}

To facilitate practical computations and connect the continuous GTT basis functions to a discrete transform, we introduce the concept of discretization. Let $N = b^n$. Assume that the interval $[0, 1]$ is divided uniformly into $N$ subintervals of length $1/N$.

The Generalized Tensor Transform (GTT) matrix $\Gfun_N$ is constructed by sampling the unnormalized GTT basis functions $f_q^{(N)}(x)$ defined previously. Recall that $f_q^{(N)}(x)$ is piecewise constant over these $N$ subintervals. The $(p, q)$-th element of $\Gfun_N$ is obtained by sampling $f_q^{(N)}$ at any point within the $p$-th sub-interval, typically at its midpoint $x_p = \frac{2p+1}{2N}$, where $p$ and $q$ vary from $0$ to $N-1$.
In mathematical terms, the $(p, q)$-th element of $\Gfun_N$ is given by:
\begin{equation} \label{eq:defGNOne}
\Gfun_N(p, q) = f_q^{(N)} \left( x_p \right), \quad \text{where } x_p = \frac{2p+1}{2N} \text{ and } p, q = 0, 1, \dots, N-1.
\end{equation}

\begin{thm} \label{thm_unitary_G_N}
    The matrix $\Gfun_N$ defined by sampling the unnormalized GTT basis functions is unitary.
\end{thm}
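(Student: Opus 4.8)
The plan is to prove the stronger structural identity $\Gfun_N = \Wfun^{\otimes n}$ and then invoke the elementary fact that a tensor product of unitary matrices is unitary. This reduces the theorem to showing that sampling the unnormalized basis functions at the midpoints $x_p = (2p+1)/(2N)$ reproduces exactly the entries of the $n$-fold tensor power of $\Wfun$, after which unitarity is immediate.

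First I would establish the base case $n=1$. Here \eqref{eq:defGNOne} together with the recursion \eqref{eq:gtt_recursive_general_b_main} and the base case \eqref{eq:gtt_base_case_general} give $\Gfun_b(p,q) = f_q^{(b)}(x_p) = \Wfun_{\lfloor b x_p\rfloor,\, q}$. Since $x_p = (2p+1)/(2b)$ lies strictly inside the $p$-th subinterval, $\lfloor b x_p\rfloor = p$, so $\Gfun_b = \Wfun = \Wfun^{\otimes 1}$.

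For the inductive step I would assume $\Gfun_{b^{n-1}} = \Wfun^{\otimes(n-1)}$. Writing $p = m\,b^{n-1} + p'$ with $m = \lfloor p/b^{n-1}\rfloor$ and $p' = p \bmod b^{n-1}$, a short computation gives $b x_p = m + (2p'+1)/(2b^{n-1})$, so that $\lfloor b x_p\rfloor = m$ and, crucially, the residual $\{b x_p\} = (2p'+1)/(2b^{n-1}) = x_{p'}$ is precisely the midpoint of the $p'$-th subinterval at resolution $b^{n-1}$. Applying \eqref{eq:gtt_recursive_general_b_main} with $q_{\mathrm{top}} = \lfloor q/b^{n-1}\rfloor$ and $\tilde q = q \bmod b^{n-1}$ then yields
\[
\Gfun_{b^n}(p,q) = \Wfun_{m,\,q_{\mathrm{top}}}\, f_{\tilde q}^{(b^{n-1})}(x_{p'}) = \Wfun_{m,\,q_{\mathrm{top}}}\,\bigl(\Wfun^{\otimes(n-1)}\bigr)_{p',\,\tilde q},
\]
where the second equality uses the induction hypothesis. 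Recognizing the right-hand side as the standard tensor-product index factorization (with $p = m\,b^{n-1}+p'$ and $q = q_{\mathrm{top}}\,b^{n-1}+\tilde q$), this equals $\bigl(\Wfun \otimes \Wfun^{\otimes(n-1)}\bigr)_{p,q} = \bigl(\Wfun^{\otimes n}\bigr)_{p,q}$, completing the induction.

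Finally, since $\Wfun$ is unitary, $(\Wfun^{\otimes n})^{\dagger}\, \Wfun^{\otimes n} = (\Wfun^{\dagger}\Wfun)^{\otimes n} = I_b^{\otimes n} = I_{b^n}$, so $\Gfun_N = \Wfun^{\otimes n}$ is unitary. I expect the main thing to get right to be the bookkeeping in the inductive step: verifying that the midpoint $x_p$ simultaneously selects the correct row digit $m = \lfloor b x_p\rfloor$ and rescales to exactly the coarser-grid midpoint $x_{p'}$, which is what makes the induction hypothesis directly applicable. The choice of midpoint sampling (rather than an endpoint) is precisely what guarantees these digit identifications are unambiguous; everything else is a routine identification of indices with the tensor-product ordering.
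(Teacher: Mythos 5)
Your proposal is correct, but it takes a genuinely different route from the paper. The paper never identifies $\Gfun_N$ with $\Wfun^{\otimes n}$ inside the proof; instead it shows that the discrete column inner product $\sum_{p}\overline{\Gfun_N(p,j)}\,\Gfun_N(p,k)$ equals $N\braket{f_j^{(N)}\,|\,f_k^{(N)}}$, using the fact that each $f_j^{(N)}$ is constant on each subinterval of width $1/N$ (so the sampled product times $1/N$ equals the integral over that subinterval), and then invokes the relation $\braket{f_j^{(N)}\,|\,f_k^{(N)}}=\delta_{jk}/N$ already established in the proof of Theorem~\ref{thm_ortho}. Your argument instead proves the stronger structural identity $\Gfun_N=\Wfun^{\otimes n}$ by induction, with the key observation that midpoint sampling is compatible with the recursion: $\lfloor b x_p\rfloor$ recovers the leading digit $m$ of $p$ and $\{b x_p\}=x_{p'}$ is again a midpoint at the coarser resolution, after which unitarity follows from $(\Wfun^\dagger\Wfun)^{\otimes n}=I_{b^n}$. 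Each approach has a distinct payoff: the paper's proof is shorter given that Theorem~\ref{thm_ortho} is already in hand, and it works for any sample point in each subinterval (piecewise constancy is all that is used); your proof is self-contained (no $L^2$ machinery) and, more importantly, it actually establishes the Kronecker-product identity that the paper only asserts immediately after the theorem (``It can be shown that\ldots'') before adopting it as Definition~\ref{Def_GTT_tensor} --- so your route proves the theorem and closes that unproven claim simultaneously.
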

\begin{proof}
Consider the inner product of the $j$-th and the $k$-th columns of $\Gfun_N$. This is given by the sum:
\[
\sum_{p=0}^{N-1} \overline{\Gfun_N(p, j)} \Gfun_N(p, k) = \sum_{p=0}^{N-1} \overline{f_j^{(N)} \left( x_p \right)} f_k^{(N)} \left( x_p \right).
\]
Since $f_j^{(N)}(x)$ is constant over each sub-interval $\left[\frac{p}{N}, \frac{p+1}{N}\right)$, and $x_p$ is a point within this interval, we can relate the sum to the $L^2$ inner product:
\[
\overline{f_j^{(N)} \left( x_p \right)} f_k^{(N)} \left( x_p \right) = N \int_{\frac{p}{N}}^{\frac{p+1}{N}} \overline{f_j^{(N)} \left( x \right)} f_k^{(N)} \left( x \right) \, dx.
\]
Summing over all sub-intervals:
\begin{align*}
\sum_{p=0}^{N-1} \overline{\Gfun_N(p, j)} \Gfun_N(p, k) &= \sum_{p=0}^{N-1} N \int_{\frac{p}{N}}^{\frac{p+1}{N}} \overline{f_j^{(N)} \left( x \right)} f_k^{(N)} \left( x \right) \, dx \\
&= N \int_{0}^{1} \overline{f_j^{(N)} \left( x \right)} f_k^{(N)} \left( x \right) \, dx \\
&= N \braket{f_j^{(N)} | f_k^{(N)}}.
\end{align*}
From Equation~\eqref{eq:f_orthogonality_result} in the proof of Theorem~\ref{thm_ortho} (which establishes that $\braket{f_j^{(N)} | f_k^{(N)}} = \frac{\delta_{jk}}{N}$), the final expression becomes:
\[
N \braket{f_j^{(N)} | f_k^{(N)}} = N \left( \frac{\delta_{jk}}{N} \right) = \delta_{jk}.
\]
Thus, the columns of $\Gfun_N$ are orthonormal, which implies that $\Gfun_N$ is a unitary matrix.
\end{proof}

It can be shown that the GTT matrix $\Gfun_N$ as defined in \meqref{eq:defGNOne} is precisely the $n$-th Kronecker product (tensor product) of the base matrix $\Wfun$. This property is fundamental to the GTT's structure and efficient computation. We have the following definition.

\begin{defn} \label{Def_GTT_tensor}
The GTT matrix $\Gfun_N$ is defined as the $n$-th Kronecker product (tensor product) of the base matrix $\Wfun$, i.e., $\Gfun_N = \Wfun^{\otimes n}$. 
\end{defn}

\begin{remark} \label{Remark_tunable_parameter}
\textit{Degrees of Freedom Associated with Tunable Parameters}:
It is well known that the unitary group \( U(b) \) forms a real Lie group of dimension \( b^2 \). A distinctive feature of the GTT framework is its flexibility in adapting the basis functions via tunable parameters. In classical contexts where the global phase carries physical meaning, such as in certain signal processing applications, a \( b \times b \) unitary matrix \( W \) admits \( b^2 \) real degrees of freedom. By contrast, in quantum computing, where global phase has no physical consequence, the effective number of parameters reduces to \( b^2 - 1 \), reflecting the fact that transformations differing only by a global phase are physically indistinguishable.
\end{remark}

\begin{remark} \label{Remark_general_tensor}
While the Generalized Tensor Transform (GTT) is defined as the $n$-fold tensor product of an identical $b \times b$ unitary matrix $\Wfun$ ($\Gfun= \Wfun^{\otimes n}$), an even more encompassing generalization can be considered. This involves allowing each constituent matrix in the tensor product to be distinct, potentially even of different dimensions. Specifically, one could define a GTT as:
$$ \Gfun_N = \bigotimes_{k=0}^{n-1} \Wfun_k $$
where $\Wfun_k$ is a $b_k \times b_k$ unitary matrix for each $k \in \{0, \dots, n-1\}$. In this advanced form, the total dimension of the transform becomes $N = \prod_{k=0}^{n-1} b_k$. This allows for an even greater degree of flexibility in constructing tailored transform bases, potentially unlocking applications in quantum information processing where heterogeneous local dimensions or specific non-uniform transformations are beneficial. Exploring the implications and applications of such an ultra-generalized GTT in quantum algorithms represents a promising direction for future research.
\end{remark}

It follows from Definition~\ref{Def_GTT_tensor} that
\begin{align}
    \Gfun_N(p, q) = \prod_{j=0}^{n-1} \Wfun_{p_j, q_j},
\end{align}
where $p = \sum_{j=0}^{n-1} p_j b^j$ and $q = \sum_{j=0}^{n-1} q_j b^j$ are the base-$b$ expansions of $p$ and $q$, respectively, with digits $p_j, q_j \in \{0, 1, \dots, b-1\}$.
Equivalently, the above can be written using a counting convention for digit pairs. Let $\alpha(i, j)$ denote the number of times the digit pair $(p_k, q_k)$ equals $(i, j)$ across the $n$ positions ($k=0, \dots, n-1$) in the base-$b$ expansions of $p$ and $q$. This can be formally defined as:
\begin{align}
    \Gfun_N(p, q) = \prod_{i=0}^{b - 1} \prod_{j=0}^{b - 1} \Wfun_{i, j}^{\alpha(i, j)},
    \label{eq:G_N_alternative_product_form}
\end{align}
where
\begin{align}
    \alpha(i, j) = \sum_{k = 0}^{n - 1} \chi_{i,j}(p_k, q_k),
\end{align}
and $\chi_{i,j}(p_k, q_k)$ is an indicator function defined as:
\begin{align}
    \chi_{i,j}(p_k, q_k) =
    \begin{cases}
        1, & \text{if } p_k = i \text{ and } q_k = j, \\
        0, & \text{otherwise}.
    \end{cases}
\end{align}
This form explicitly shows how the product of matrix elements is accumulated based on the specific digits in the base-$b$ representation of $p$ and $q$.

\begin{defn}[Action of the Generalized Tensor Transform]
Let $\Wfun$ be a $b \times b$ unitary matrix. Let $N=b^n$. The action of the Generalized Tensor Transform on a computational basis state $\ket{q}$, where $q=0, 1, \ldots, N-1$, is given by:
\begin{align}
    \Gfun_N \ket{q} = \sum_{p=0}^{N-1} \Gfun_N(p,q) \ket{p} = \sum_{p=0}^{N-1} \left( \prod_{i=0}^{b - 1} \prod_{j=0}^{b - 1} \Wfun_{i, j}^{\sum_{k = 0}^{n - 1} \chi_{i,j}(p_k, q_k)} \right) \ket{p}.
\end{align}
\end{defn}

We note that, for the specific base matrix $\Wfun = \frac{1}{\sqrt{2}} \begin{bmatrix} 1 & 1 \\ 1 & -1 \end{bmatrix}$ (the Hadamard matrix), the general product term $\prod_{i=0}^{b - 1} \prod_{j=0}^{b - 1} \Wfun_{i, j}^{\alpha(i, j)}$ reduces to the well-known Hadamard transform element. For $b=2$, the matrix elements are $\Wfun_{00} = 1/\sqrt{2}$, $\Wfun_{01} = 1/\sqrt{2}$, $\Wfun_{10} = 1/\sqrt{2}$, and $\Wfun_{11} = -1/\sqrt{2}$.
Substituting these into the product form:
\begin{align*}
\prod_{i=0}^{1} \prod_{j=0}^{1} \Wfun_{i, j}^{\alpha(i, j)} &= \Wfun_{0,0}^{\alpha(0,0)} \Wfun_{0,1}^{\alpha(0,1)} \Wfun_{1,0}^{\alpha(1,0)} \Wfun_{1,1}^{\alpha(1,1)} \\
&= \left(\frac{1}{\sqrt{2}}\right)^{\alpha(0,0)} \left(\frac{1}{\sqrt{2}}\right)^{\alpha(0,1)} \left(\frac{1}{\sqrt{2}}\right)^{\alpha(1,0)} \left(-\frac{1}{\sqrt{2}}\right)^{\alpha(1,1)} \\
&= \left(\frac{1}{\sqrt{2}}\right)^{\alpha(0,0)+\alpha(0,1)+\alpha(1,0)+\alpha(1,1)} (-1)^{\alpha(1,1)}.
\end{align*}
The sum of all $\alpha(i,j)$ for fixed $p$ and $q$ is simply the total number of digits, $n$ (since for each $k$, exactly one $\chi_{i,j}(p_k,q_k)$ will be 1). So, $\sum_{i,j} \alpha(i,j) = n$.
The term $\left(\frac{1}{\sqrt{2}}\right)^n = \frac{1}{\sqrt{2^n}} = \frac{1}{\sqrt{N}}$.

Furthermore, $\alpha(1,1) = \sum_{k=0}^{n-1} \chi_{1,1}(p_k, q_k)$ where $\chi_{1,1}(p_k, q_k)=1$ only if $p_k=1$ and $q_k=1$. For binary digits, $p_k q_k=1$ if and only if $p_k=1$ and $q_k=1$. Thus, $\alpha(1,1) = \sum_{k=0}^{n-1} p_k q_k$, which is precisely the bit-wise dot product $p \cdot q$.
Therefore, for the Hadamard base matrix, the action of the GTT reduces to:
\begin{align*}
    \Gfun_N \ket{q} = \sum_{p=0}^{N-1} \frac{1}{\sqrt{N}} (-1)^{p \cdot q} \ket{p},
\end{align*}
which is the standard unitary Hadamard transform.

\section{Fast algorithm for generalized tensor transforms}
\label{sec:algorithm}

We note that, given an input vector $\mathbf{x} \in \mathbb{C}^N$, computing the matrix-vector product $\mathbf{y} = \Gfun_N \mathbf{x}$ by explicitly constructing the full $\Gfun_N$ matrix is computationally expensive, requiring $\mathcal{O}(N^2)$ operations. However, the inherent separability of the Kronecker product enables a significantly more efficient recursive algorithm, achieving a computational complexity of $\mathcal{O}(N \log_b N)$. This approach is analogous to the well-established butterfly structure prevalent in Fast Fourier Transform algorithms.

The algorithm proceeds by iteratively applying local unitary transformations and recursing across successive tensor factors. The steps for computing $\mathbf{y} = \Gfun_N \mathbf{x}$ are outlined as follows:

\begin{enumerate}
    \item \textit{Input Vector Reshaping:} Given an input vector $\mathbf{x} \in \mathbb{C}^N$, it is interpreted as an element within a tensor product space, specifically $\mathbb{C}^{N} \cong \mathbb{C}^{N/b} \otimes \mathbb{C}^{b}$. This is realized by reshaping $\mathbf{x}$ into a matrix $\mathbf{X} \in \mathbb{C}^{(N/b) \times b}$. The element $\mathbf{X}_{p,q}$ corresponds to $\mathbf{x}_{pb+q}$, where $p$ serves as the `block' index (representing the first $n-1$ base-$b$ digits) and $q$ denotes the `intra-block' position (representing the last base-$b$ digit).

    \item \textit{Initial Stage Transformation (Last Tensor Factor):} The unitary base matrix $\Wfun \in \mathbb{C}^{b \times b}$ is applied along the final tensor index, which corresponds to a column-wise operation on $\mathbf{X}$. This effectively transforms the least significant (last) base-$b$ digit of the original index. This operation yields an intermediate matrix $\mathbf{Y} \in \mathbb{C}^{(N/b) \times b}$, defined as:
    \[
    \mathbf{Y} = \mathbf{X} \Wfun^\top.
    \]
    The presence of $\Wfun^\top$ in this formulation arises from performing a right-multiplication on $\mathbf{X}$, where $\Wfun$ is applied to each row of $\mathbf{X}$ (corresponding to a block of elements in $\mathbf{x}$ whose last index varies).

    \item \textit{Recursive Transformation (Remaining Tensor Factors):} Subsequently, for each fixed column index $j \in \{0, 1, \dots, b-1\}$, the corresponding column vector $\mathbf{Y}^{(j)} \in \mathbb{C}^{N/b}$ (i.e., the $j$-th column of $\mathbf{Y}$) is isolated. The generalized transform $\Gfun_{N/b}$ is then recursively applied to this column vector, resulting in a new matrix $\mathbf{Z} \in \mathbb{C}^{(N/b) \times b}$ whose columns are determined by:
    \[
    \mathbf{Z}^{(j)} = \text{FastGTT}(\mathbf{Y}^{(j)}, \Wfun, n-1).
    \]
    This step systematically transforms the initial $n-1$ base-$b$ digits for each specific value of the last digit.

    \item \textit{Output Vector Reassembly:} Finally, the matrix $\mathbf{Z}$ is reassembled into the output vector $\mathbf{y} \in \mathbb{C}^N$ by concatenating its rows in sequential order:
    \[
    \mathbf{y} =
    \begin{bmatrix}
    \mathbf{Z}_{0,:}^\top \\
    \mathbf{Z}_{1,:}^\top \\
    \vdots \\
    \mathbf{Z}_{(N/b)-1,:}^\top
    \end{bmatrix}
    \in \mathbb{C}^N.
    \]
    This re-concatenation effectively reorganizes the transformed tensor elements back into a flat vector representation.
\end{enumerate}

This recursive process terminates when $n = 1$, at which point the base case for $\Gfun_b$ simplifies to a direct matrix-vector multiplication by $\Wfun$, i.e., $\Gfun_b \mathbf{x} = \Wfun \mathbf{x}$. This formulation yields an efficient and structured algorithm for computing large-scale unitary transforms $\Gfun_N = \Wfun^{\otimes n}$, requiring only $\mathcal{O}(N \log_b N)$ operations and circumventing the computationally intensive explicit construction of the full Kronecker product matrix.

\subsection{Algorithm: Fast GTT}
\label{ssec:fast_GTT}

The following algorithm outlines the recursive procedure for the Fast Generalized Tensor Transform (FastGTT), using the approach  described above. 

\begin{algorithm}[H]
\RestyleAlgo{algoruled}
\SetAlgoLined
\KwIn{A vector $\mathbf{x} \in \mathbb{C}^N$, where $N = b^n$, and a unitary base matrix $\Wfun \in \mathbb{C}^{b \times b}$.}
\KwOut{The transformed vector $\mathbf{y} = \Gfun_N \mathbf{x} \in \mathbb{C}^N$, where $\Gfun_N = \Wfun^{\otimes n}$.}
\SetKwFunction{FTransform}{FastGTT}
\SetKwProg{Fn}{Function}{:}{}
\Fn{\FTransform{\(\mathbf{x}, \Wfun, n\)}}{
   \tcp{Base case: direct matrix multiplication}
    \If{\( n = 1 \)}{
        \Return \( \Wfun \cdot \mathbf{x} \)\; 
    }

    Let \( N_{curr} = b^n \)\;
    Let \( N_{prev} = N_{curr}/b \)\;

    \tcp{Step 1: Reshape $\mathbf{x}$ into a matrix $\mathbf{X}$}
    Define \( \mathbf{X} \in \mathbb{C}^{N_{prev} \times b} \) such that $\mathbf{X}_{p,q} = \mathbf{x}_{p \cdot b + q}$\;

    \tcp{Step 2: Apply $\Wfun$ to the ``column'' dimension of $\mathbf{X}$}
    Compute \( \mathbf{Y} = \mathbf{X} \cdot \Wfun^\top \in \mathbb{C}^{N_{prev} \times b} \)\;

    \tcp{Step 3: Recursively transform each column of $\mathbf{Y}$ and form $\mathbf{Z}$}
    Initialize \( \mathbf{Z} \in \mathbb{C}^{N_{prev} \times b} \) with zeros\;
    \For{column index \( j \) from \( 0 \) to \( b-1 \)}{
        Let \( \mathbf{y}^{(j)} = \mathbf{Y}_{:,j} \) (the $j$-th column of $\mathbf{Y}$)\;
        Set \( \mathbf{Z}_{:,j} = \text{\FTransform}(\mathbf{y}^{(j)}, \Wfun, n - 1) \)\;
    }

    \tcp{Step 4: Flatten the matrix $\mathbf{Z}$ into the output vector $\mathbf{y}$}
    Define \( \mathbf{y} \in \mathbb{C}^{N_{curr}} \) by concatenating the rows of \( \mathbf{Z} \) in order\;

    \Return \( \mathbf{y} \)\;
}
\caption{Fast Generalized Tensor Transform 
}
\label{alg_fast_GTT}
\end{algorithm}

\subsubsection*{Computational Complexity of Algorithm~\ref{alg_fast_GTT}.}  

At each recursive level, the algorithm reshapes the input vector of size \( N \) into a matrix of shape \( (N/b) \times b \), multiplies this matrix on the right by \( \Wfun^\top \), and then applies the same algorithm recursively to each of the \( b \) resulting columns of length \( N/b \). Let \( T(N) \) denote the total number of arithmetic operations required for an input of size \( N \). Then the recurrence relation satisfied by \( T(N) \) is
\[
T(N) = b \cdot T\left(\frac{N}{b}\right) + O(Nb),
\]
where the first term accounts for the \( b \) recursive calls on subproblems of size \( N/b \), and the second term accounts for the matrix multiplication and reshaping costs at the current level.

Unrolling this recurrence yields \( \log_b N = n \) recursive levels, each incurring a cost of \( O(Nb) \). Thus, the total complexity is
\[
T(N) = O(Nb \log_b N) = O(n N b).
\]
This is significantly more efficient than the naive approach, which would require \( O(N^2) \) operations, and matches the structure of other fast tensor product transforms, such as the Fast Fourier Transform when \( \Wfun \) is the DFT matrix of order \( b \).

\section{Applications of the generalized tensor transform}
\label{sec:gtt_applications}

In this section, we consider three distinct applications of Generalized Tensor Transform (GTT). The first application is quantum state compression and efficient transmission, which is an important problem in quantum information theory with potential applications in quantum communications and distributed quantum computing. It remains an active area of research, and many general and specific methods have been explored in the literature \cite{Bai2020QuantumCompression, Rozema2014QuantumData, huang2025quantum, Cruzeiro2025CompressionEntanglement, vanLoo2017QuantumState}.
We note that the GTT-based quantum state compression protocol presented in \mref{ssec:state_compression} offers a distinctive spectral approach compared to existing methods. Our approach uniquely leverages the tunability of GTT basis functions to achieve adaptive sparsity. Unlike approaches that rely on fixed bases or inherent state structures for compression, the GTT framework, with its arbitrary unitary base matrix $W$, allows for a fine-grained adaptation of the transformation to induce optimal sparsity in the transformed domain. This capability is important, as demonstrated by our numerical results, which show considerably higher reconstruction fidelities with fewer retained components for states sparse in the GTT basis compared to compression utilizing fixed transforms like the Hadamard or Quantum Fourier Transform. This fundamental distinction positions our GTT-based protocol as a versatile tool for enhancing efficiency in quantum information processing and communication.

Next, we consider the applicaiton of GTT in efficient encoding of the input function on a quantum computer. Efficient function encoding is a key step, and also a bottleneck, in realizing the potential of quantum computing in the solution of real-world problems, and remains an active area of research \cite{schuld2021effect, shin2023exponential, gonzalez2024efficient}. From a quantum differential equation solver to a quantum machine learning model, practically almost all quantum algorithms would need the classical input to be encoded in a way that is amenable to being processed on a quantum computer. Here we present an efficient way to encode a function using the adaptability of the GTT. We demonstrate in \mref{sec:fun_encoding} that the GTT-based approach is superior to the WHT-based approach for the examples considered.

Finally as our third application, in \mref{ssec:filtering} we provide an algorithm (ref.~Algorithm \ref{alg_gtt_natural_filtering}) for low-pass and high-pass filtering of digital signals using GTT. This method is applicable for both classical and quantum signal processing, contributing to the growing field of quantum signal processing \cite{shukla2023quantumdsp}. A computational example demonstrates the effectiveness of our proposed GTT-based approach for filtering applications.
We note that, the GTT-based filtering approach (via the fast GTT algorithm, Algorithm \ref{alg_fast_GTT}, presented in \mref{ssec:fast_GTT}) also offers a novel approach for classical digital filtering  applications \cite{proakis2007digital}. 

The proposed Generalized Tensor Transform (GTT) significantly reduces computational costs: classically, its fast algorithm achieves an $O(N \log_b N)$ complexity, mirroring Fast Fourier Transform (FFT) and Fast Walsh-Hadamard Transform (FWHT) efficiency, a substantial improvement over the naive $O(N^2)$ classical implementation. For quantum applications, the GTT algorithm, where $N=b^n$, attains both gate complexity and circuit depth of $O(\log_b N)$, representing a quadratic improvement over the Quantum Fourier Transform (QFT)'s $O((\log_b N)^2)$ complexity and an exponential advantage over classical $O(N \log_b N)$ methods.

\subsection{GTT-based quantum state compression and efficient transmission}
\label{ssec:state_compression}

An important application of the Generalized Tensor Transform (GTT) is enabling the compression and efficient transmission of quantum states that exhibit sparsity in a GTT-transformed basis. This is especially relevant for quantum communication, where minimizing the resources for quantum state transfer can considerably reduce exposure to decoherence, improve quantum bandwidth utilization, and enhance reliability.

Consider an $n$-qudit quantum state, $|\psi\rangle = \sum_{x=0}^{N - 1} a_x |x\rangle$ (with $N=b^n$), which may not be sparse in the computational basis. However, suppose there is a specific GTT, $\Gfun_N = \Wfun^{\otimes n}$, that transforms $|\psi\rangle$ into a sparse state, $|\hat{\psi}\rangle = \Gfun_N |\psi\rangle = \sum_{y=0}^{N - 1} \hat{a}_y |y\rangle$. Here, only $k \prec\prec N$ of the coefficients $\hat{a}_y$ are significantly non-zero. This property is analogous to how classical signals like images can be sparse in a Fourier or wavelet domain. The GTT's inherent tunable parameters, which allow for adaptive basis construction, provide a significant benefit in achieving sparse representations compared to fixed transforms like the Hadamard or Fourier transform. The benefit of this adaptive sparsity will be further illustrated in  \mref{sec:fun_encoding}.

This sparsity allows for powerful compression strategies. We discuss three main approaches: a purely classical protocol, a hybrid quantum-classical protocol and a fully quantum protocol that offers theoretical advantages by preserving coherence throughout the compression process.

\subsubsection{Purely classical protocol for state transmission}
\label{ssec:classical_protocol}

Consider a purely classical protocol for transmitting information about an $N$-dimensional state, assuming it is sparse in a GTT basis.

\begin{enumerate}
    \item \textit{Classical GTT Calculation}: Alice computes the Generalized Tensor Transform on the $N$-dimensional classical vector. This operation inherently calculates all $N$ coefficients and typically costs $O(N \log_b N)$ operations.
    \item \textit{Classical Coefficient Selection}: From the $N$ coefficients, Alice identifies the $k$ dominant ones. This typically involves sorting coefficients with the associated cost ($O(N \log_b N)$). 
    \item \textit{Classical Data Transmission}: Alice transmits the $k$ indices and their corresponding complex amplitude values to Bob. 
    \item \textit{Classical Reconstruction}: Bob receives the classical data. He then reconstructs the $N$-dimensional vector from these $k$ coefficients and computes the inverse GTT classically, which also costs $O(N \log_b N)$ operations.
\end{enumerate}
The total classical complexity for transmitting a sparse state by first computing its full classical GTT representation is therefore dominated by $O(N \log_b N)$.

\subsubsection{Hybrid quantum-classical compression protocol}
\label{sec:hybrid_compression}

This protocol uses the GTT to reveal sparsity, but then employs classical processing to identify and extract the dominant components before re-encoding them into a smaller quantum state for transmission.

The process involves the following steps:

\begin{enumerate}
    \item \textit{Transform (Quantum)}: Alice applies the Generalized Tensor Transform ($\Gfun_N = \Wfun^{\otimes n}$) to her initial $n$-qudit state $|\psi\rangle$, obtaining the transformed state $|\hat{\psi}\rangle = \Gfun_N |\psi\rangle$. This step is purely quantum.

    \item \textit{Analyze and Truncate (Hybrid/Lossy Step)}: Alice then performs a measurement or amplitude estimation on $|\hat{\psi}\rangle$ to obtain classical information about its coefficients $\hat{a}_y$. Based on this classical data, she identifies the set $S_k \subset \{0, \ldots, N - 1\}$ of indices corresponding to the $k$ most dominant coefficients (e.g., largest in magnitude). Information from discarded coefficients (those not in $S_k$) is lost here, making this a lossy compression step. This classical analysis informs the subsequent quantum state preparation.

    \item \textit{Encode (Quantum State Preparation)}: Using the $k$ selected classical amplitudes and their corresponding indices from $S_k$, Alice prepares a new, normalized, $k$-dimensional quantum state. This typically involves preparing a state of $\lceil \log_b k \rceil$ qudits. For instance, if $S_k = \{y_0, y_2, \ldots, y_{k-1}\}$, she prepares the state $\sum_{j=1}^k \hat{a}_{y_j} \ket{j}$. This is the step where classical information is converted back into a quantum state for efficient transmission.

    \item \textit{Transmit}: Alice sends the compressed $\lceil \log_b k \rceil$-qudit state to Bob via a quantum channel. Alice must also classically transmit the set of chosen indices $S_k$ (or equivalent mapping information) to Bob as side information. This classical information is essential for Bob to correctly interpret and reconstruct the received state.

    \item \textit{Decode and Reconstruct}: Upon receipt, Bob first uses the classical side information ($S_k$) to decode the compressed state. He maps the received $\lceil \log_b k \rceil$-qudit state back to its corresponding $k$-dimensional subspace representation in the original $n$-qudit space, spanning $\{|y\rangle : y \in S_k\}$. This gives him an approximate transformed state $|\hat{\psi}_{\text{approx}}\rangle$. Bob then applies the inverse GTT, $\Gfun_N^\dagger$, to this approximate state to reconstruct an approximation of the original state: $|\psi_{\text{approx}}\rangle = \Gfun_N^\dagger |\hat{\psi}_{\text{approx}}\rangle$.
\end{enumerate}

\subsubsection{Fully quantum compression protocol}
Next, we describe a fully quantum algorithm (ref.~Algorithm \ref{alg:quantum_compression}) for compressing states that are sparse in a GTT basis, assuming the indices of the dominant sparse components ($S_k$) are known via classical heuristics.  In what follows,  for simplicity of explanation we assume that $b=2$.
The protocol uses three quantum registers:
\begin{itemize}
    \item {$R_{\text{orig}}$}: The original $n$-qubit register holding the state to be compressed.
    \item {$q_a$}: A single ancilla qubit used as a probabilistic flag.
    \item {$R_{\text{comp}}$}: The $\lceil \log_b k \rceil$-qubit register for the compressed state, which will be transmitted.
\end{itemize}

\begin{algorithm}
\caption{Quantum sparse state compression with known sparse indices}
\label{alg:quantum_compression}

\KwInput{$n$-qubit quantum state $\ket{\psi} = \sum_{x=0}^{2^n-1} a_x \ket{x}$ (such that its GTT- transformed space is sparse).}
\KwData{Known index set $S_k = \{y_0, y_1, \ldots, y_{k-1}\}$ of dominant GTT coefficients.}
\KwOutput{Compressed $\lceil \log_2 k \rceil$-qubit state and classical $S_k$ information.}

\BlankLine
\tcc{\textbf{0. Initial System State (Alice's side):}}

$\ket{\Psi_0} = \ket{\psi}_{R_{\text{orig}}} \otimes \ket{0}_{q_a} \otimes \ket{0}_{R_{\text{comp}}}$, where $\ket{\psi}_{R_{\text{orig}}} = \sum_{x=0}^{N-1} a_x \ket{x}$.

\BlankLine
\tcc{\textbf{1. Transform Phase:}}

Apply GTT: $R_{\text{orig}} \leftarrow \Gfun_{2^n} R_{\text{orig}}$.

Current state: $\ket{\Psi_1} = \ket{\hat{\psi}}_{R_{\text{orig}}} \otimes \ket{0}_{q_a} \otimes \ket{0}_{R_{\text{comp}}}$,
where $\ket{\hat{\psi}}_{R_{\text{orig}}} = \sum_{y=0}^{N-1} \hat{a}_y \ket{y} = \left(\sum_{y_j \in S_k} \hat{a}_{y_j} \ket{y_j} + \sum_{y \notin S_k} \hat{a}_y \ket{y}\right)_{R_{\text{orig}}}$.

\BlankLine
\tcc{\textbf{2. Compression Phase:}}

Define a bijective mapping $f: S_k \rightarrow \{0, \ldots, k-1\}$.

Apply an oracle $O_S$ that flips $q_a$ if the state in $R_{\text{orig}}$ is in $S_k$:
$O_S \ket{y}_{R_{\text{orig}}} \ket{0}_{q_a} = \ket{y}_{R_{\text{orig}}} \ket{1}_{q_a}$ if $y \in S_k$
$O_S \ket{y}_{R_{\text{orig}}} \ket{0}_{q_a} = \ket{y}_{R_{\text{orig}}} \ket{0}_{q_a}$ if $y \notin S_k$.

Current state after flagging:
$\ket{\Psi_2} = \sum_{y_j \in S_k} \hat{a}_{y_j} \ket{y_j}_{R_{\text{orig}}} \ket{1}_{q_a} \ket{0}_{R_{\text{comp}}} + \sum_{y \notin S_k} \hat{a}_y \ket{y}_{R_{\text{orig}}} \ket{0}_{q_a} \ket{0}_{R_{\text{comp}}}$.

Apply controlled unitary $C_{q_a}U_{\text{map}}$ to $R_{\text{orig}}$ and $R_{\text{comp}}$
(where $U_{\text{map}}$ performs $\ket{y_j}_{R_{\text{orig}}} \ket{0}_{R_{\text{comp}}} \rightarrow \ket{0}^{\otimes n}_{R_{\text{orig}}} \ket{f(y_j)}_{R_{\text{comp}}}$ for $y_j \in S_k$).

Current state after controlled transfer:
$\ket{\Psi_3} = \sum_{y_j \in S_k} \hat{a}_{y_j} \ket{0}^{\otimes n}_{R_{\text{orig}}} \ket{1}_{q_a} \ket{f(y_j)}_{R_{\text{comp}}} + \sum_{y \notin S_k} \hat{a}_y \ket{y}_{R_{\text{orig}}} \ket{0}_{q_a} \ket{0}_{R_{\text{comp}}}$.

Measure $q_a$ \tcp*{$q_a=\ket{1}$ with probability $\mathcal{N}^2 = \sum_{y_j \in S_k} |\hat{a}_{y_j}|^2$}

\If{$q_a=\ket{1}$ }
{
    The state collapses to:
    $\frac{1}{\mathcal{N}} \sum_{y_j \in S_k} \hat{a}_{y_j} \ket{0}^{\otimes n}_{R_{\text{orig}}} \ket{1}_{q_a} \ket{f(y_j)}_{R_{\text{comp}}}$.

    Transmitted state (by tracing out $R_{\text{orig}}$ and $q_a$): $\ket{\hat{\psi}'_{\text{trunc}}}_{\text{transmitted}} = \frac{1}{\mathcal{N}} \sum_{y_j \in S_k} \hat{a}_{y_j} \ket{f(y_j)}_{R_{\text{comp}}}$.
}
\Else(\tcp*[f]{i.e., $q_a = \ket{0}$, with probability $1-\mathcal{N}^2$}) {
     \Return The message: ``the process failed for this trial."
}

\BlankLine
\tcc{\textbf{3. Transmit Phase:}}

Send $R_{\text{comp}}$ (i.e., $\ket{\hat{\psi}'_{\text{trunc}}}_{\text{transmitted}}$) via quantum channel to Bob (if $q_a=\ket{1}$).

Send classical data $S_k$ to Bob via classical channel.

\BlankLine
\tcc{\textbf{4. Decode and Reconstruct Phase:}}

Bob receives $\ket{\hat{\psi}'_{\text{trunc}}}_{\text{transmitted}}$ and $S_k$.

Bob's system starts as:
$\ket{\Psi_4} = \ket{\hat{\psi}'_{\text{trunc}}}_{\text{transmitted}} \otimes \ket{0}_{R'_{\text{orig}}}$.

Apply controlled unitary $C_{R_{\text{comp}}}U_{\text{decomp}}$ to $R_{\text{comp}}$ and $R'_{\text{orig}}$,
(where $U_{\text{decomp}}$ performs $\ket{f(y_j)}_{R_{\text{comp}}} \ket{0}_{R'_{\text{orig}}} \rightarrow \ket{f(y_j)}_{R_{\text{comp}}} \ket{y_j}_{R'_{\text{orig}}}$ for $y_j \in S_k$).

Current state after decompression:
$\ket{\Psi_5} = \frac{1}{\mathcal{N}} \sum_{y_j \in S_k} \hat{a}_{y_j} \ket{f(y_j)}_{R_{\text{comp}}} \ket{y_j}_{R'_{\text{orig}}}$.

Reconstructed sparse state in $R'_{\text{orig}}$ is:
$\ket{\hat{\psi}_{\text{approx}}}_{R'_{\text{orig}}} = \frac{1}{\mathcal{N}} \sum_{y_j \in S_k} \hat{a}_{y_j} \ket{y_j}_{R'_{\text{orig}}}$.

Apply inverse GTT: $R'_{\text{orig}} \leftarrow \Gfun_{2^n}^{\dagger} R'_{\text{orig}}$.

Final reconstructed state:
$\ket{\psi_{\text{approx}}}_{R'_{\text{orig}}} = \Gfun_{2^n}^{\dagger} \left( \frac{1}{\mathcal{N}} \sum_{y_j \in S_k} \hat{a}_{y_j} \ket{y_j} \right)_{R'_{\text{orig}}}$.

\Return Approximate reconstruction $\ket{\psi_{\text{approx}}}$.
\end{algorithm}

\begin{remark}
The bijective mapping $f: S_k \rightarrow \{0, \ldots, k-1\}$ can be chosen to be the simplest possible, $f(y_j) = j$, where $j$ is the ordered index of $y_j$ within the set $S_k$ (e.g., if $S_k = \{y_0, y_1, \ldots, y_{k-1}\}$ with $y_0 < y_1 < \ldots < y_{k-1}$, then $f(y_0)=0, f(y_1)=1$, etc.). This simplifies the classical encoding of the mapping and the quantum circuit design for $U_{\text{map}}$ and $U_{\text{decomp}}$.
\end{remark}

\subsubsection*{Key steps of the quantum compression algorithm:}

The algorithm begins with the application of the Generalized Tensor Transform (GTT). In the transformed domain the signal is sparse.  Next, a series of multi-controlled-X (C$^n$X) gates flip an ancilla qubit $q_a$ to $\ket{1}$ for those basis states corresponding to the $k$ dominant coefficients. This flags the desired components without disturbing quantum coherence. A controlled unitary $C_{q_a}U_{\text{map}}$ is then applied: if $q_a = \ket{1}$, it transfers the amplitude to a compressed $\lceil \log_b k \rceil$-qubit register and resets the original register to $\ket{0}^{\otimes n}$, a crucial uncomputation step ensuring reversibility. Finally, the inverse GTT is applied, and the ancilla qubit is measured. A successful measurement ($\ket{1}$) projects the system onto the desired subspace, leaving the compressed state in $R_{\text{comp}}$ ready for transmission.
At the receiver's end the inverse of compression step is applied to reconstruct the received signal.

\subsubsection{Example: Fully quantum compression protocol}

We will illustrate the fully quantum compression protocol with a small, concrete example, focusing on the conceptual construction of the $O_S$ oracle and the $U_{\text{map}}$ unitary.

Consider an initial $n=3$ qubit state in the $R_{\text{orig}}$ register, denoted by $Q_2 Q_1 Q_0$. We assume its Generalized Tensor Transform (GTT) results in a state $\ket{\hat{\psi}}$ that is sparse. For demonstration, let the dominant components be at indices $y_0=2=(010)_2$ and $y_1=5=(101)_2$, such that $S_k = \{2, 5\}$, implying $k=2$. The compressed state will reside in a single qubit register $R_{\text{comp}}$, denoted by $C_0$, as $\lceil \log_2 2 \rceil = 1$. A fixed mapping $f: S_k \rightarrow \{0, \ldots, k-1\}$ is chosen such that $f(2)=0$ and $f(5)=1$.

\subsubsection*{Compression phase:} 
The system's state after the GTT (Step 1 of the algorithm), assuming negligible amplitudes for non-$S_k$ components for illustrative simplicity, is:
$$
\ket{\Psi_1} = (\hat{a}_2 \ket{010} + \hat{a}_5 \ket{101})_{R_{\text{orig}}} \otimes \ket{0}_{q_a} \otimes \ket{0}_{R_{\text{comp}}}.
$$

\noindent \textit{Oracle Construction ($O_S$):} The oracle $O_S$ flags states corresponding to $y \in S_k$ by flipping the ancilla qubit $q_a$. Its action is defined as:
\begin{align*}
O_S \ket{y}_{R_{\text{orig}}} \ket{0}_{q_a} = 
\begin{cases}
\ket{y}_{R_{\text{orig}}} \ket{1}_{q_a} & \text{if } y \in S_k, \\
\ket{y}_{R_{\text{orig}}} \ket{0}_{q_a} & \text{if } y \notin S_k.
\end{cases}
\end{align*}
For our example, this means 
$O_S \ket{010}_{R_{\text{orig}}} \ket{0}_{q_a} = \ket{010}_{R_{\text{orig}}} \ket{1}_{q_a}$ and 
$O_S \ket{101}_{R_{\text{orig}}} \ket{0}_{q_a} = \ket{101}_{R_{\text{orig}}} \ket{1}_{q_a}$.
This can be typically implemented using multi-controlled X (MCX) gates. 
After $O_S$, the state becomes:
$$
\ket{\Psi_2} = \hat{a}_2 \ket{010}_{R_{\text{orig}}} \ket{1}_{q_a} \ket{0}_{R_{\text{comp}}} + \hat{a}_5 \ket{101}_{R_{\text{orig}}} \ket{1}_{q_a} \ket{0}_{R_{\text{comp}}} 
.
$$
\textit {Controlled Unitary ($C_{q_a}U_{\text{map}}$):} This unitary, controlled by $q_a$, performs the amplitude transfer and uncomputation. When $q_a=\ket{1}$, $U_{\text{map}}$ maps the relevant basis states from $R_{\text{orig}}$ to $R_{\text{comp}}$ while resetting $R_{\text{orig}}$ to $\ket{000}$. Specifically, it performs:
\begin{align*}
\ket{010}_{R_{\text{orig}}} \ket{0}_{R_{\text{comp}}} &\xrightarrow{U_{\text{map}}} \ket{000}_{R_{\text{orig}}} \ket{0}_{R_{\text{comp}}}, \\
\ket{101}_{R_{\text{orig}}} \ket{0}_{R_{\text{comp}}} &\xrightarrow{U_{\text{map}}} \ket{000}_{R_{\text{orig}}} \ket{1}_{R_{\text{comp}}}.
\end{align*}
The circuit for $U_{\text{map}}$ can be synthesized using multi-controlled gates that detect the specific $y_j$ states. For $\ket{010}$, the circuit would set $R_{\text{comp}}$ to $\ket{0}$ and uncompute $R_{\text{orig}}$ to $\ket{000}$. For $\ket{101}$, it would set $R_{\text{comp}}$ to $\ket{1}$ and uncompute $R_{\text{orig}}$ to $\ket{000}$. 
After $C_{q_a}U_{\text{map}}$, the state is:
$$
\ket{\Psi_3} = \hat{a}_2 \ket{000}_{R_{\text{orig}}} \ket{1}_{q_a} \ket{0}_{R_{\text{comp}}} + \hat{a}_5 \ket{000}_{R_{\text{orig}}} \ket{1}_{q_a} \ket{1}_{R_{\text{comp}}}.
$$

\noindent \textit{Measurement of $q_a$:} Alice measures $q_a$. A successful outcome ($q_a=\ket{1}$), occurring with probability $\mathcal{N}^2 = |\hat{a}_2|^2 + |\hat{a}_5|^2$, projects the state. The transmitted compressed state (by tracing out $R_{\text{orig}}$ and $q_a$) is:
$$
\ket{\hat{\psi}'_{\text{trunc}}}_{\text{transmitted}} = \frac{1}{\mathcal{N}} (\hat{a}_2 \ket{0}_{R_{\text{comp}}} + \hat{a}_5 \ket{1}_{R_{\text{comp}}}).
$$

\subsubsection*{Decode and reconstruct phase:}

Bob receives $\ket{\hat{\psi}'_{\text{trunc}}}_{\text{transmitted}}$ and the classical side information $S_k=\{2,5\}$.

\noindent \textit{Controlled Unitary ($C_{R_{\text{comp}}}U_{\text{decomp}}$):} Bob applies the inverse operation, $C_{R_{\text{comp}}}U_{\text{decomp}}$, which is controlled by $R_{\text{comp}}$ and maps to a new register $R'_{\text{orig}}$. $U_{\text{decomp}}$ effectively performs:
\begin{align*}
\ket{0}_{R_{\text{comp}}} \ket{000}_{R'_{\text{orig}}} &\xrightarrow{U_{\text{decomp}}} \ket{0}_{R_{\text{comp}}} \ket{010}_{R'_{\text{orig}}}, \\
\ket{1}_{R_{\text{comp}}} \ket{000}_{R'_{\text{orig}}} &\xrightarrow{U_{\text{decomp}}} \ket{1}_{R_{\text{comp}}} \ket{101}_{R'_{\text{orig}}}.
\end{align*}
After this, the state is:
$$
\ket{\Psi_5} = \frac{1}{\mathcal{N}} (\hat{a}_2 \ket{0}_{R_{\text{comp}}} \ket{010}_{R'_{\text{orig}}} + \hat{a}_5 \ket{1}_{R_{\text{comp}}} \ket{101}_{R'_{\text{orig}}}).
$$
Tracing out $R_{\text{comp}}$, Bob obtains the approximate sparse state in $R'_{\text{orig}}$:
$$
\ket{\hat{\psi}_{\text{approx}}}_{R'_{\text{orig}}} = \frac{1}{\mathcal{N}} (\hat{a}_2 \ket{010} + \hat{a}_5 \ket{101}).
$$

\noindent \textit{Inverse GTT:} Finally, Bob applies the inverse GTT, $\Gfun_N^\dagger$, to $\ket{\hat{\psi}_{\text{approx}}}_{R'_{\text{orig}}}$ to reconstruct the approximate original state $\ket{\psi_{\text{approx}}}$.

\subsubsection{Numerical example: Quantum state compression for $n=3$ qubits}
\label{sssec:numerical_example_compression}

In the following we present a numerical simulation using Qiskit to illustrate the Generalized Tensor Transform (GTT)-based quantum state compression protocol. This example demonstrates the key steps of transforming a state to reveal sparsity, truncating it, and then perfectly reconstructing it in a noiseless environment.

We consider an $n=3$ qubit system (total dimension $N=2^3=8$). The single-qubit unitary component $\Wfun$ of our GTT, $\Gfun_N = \Wfun^{\otimes n}$, is  defined to be $W = U3$ gate (Eq.~\meqref{eq:U3}) with parameters $\theta = \pi/4$, $\phi = \pi/3$, and $\lambda = \pi/6$.
For these parameters, $\Wfun$ approximates:
\[
\Wfun \approx
\begin{bmatrix}
0.924 & -0.331-0.191i \\
0.191+0.331i & 0.924i
\end{bmatrix}.
\]
While this $\theta$ choice results in non-uniform magnitudes for $\Wfun$ elements, the resulting $\Gfun_N$ remains a valid unitary tensor-product transform, suitable for demonstrating sparsity-based compression.

The simulation proceeds as follows:
\begin{enumerate}
	\item \textit{$\Gfun_N$ operator:} The matrix representation of the Generalized Tensor Transform $\Gfun_N$ for $n=3$ qubits, derived from the specified $\Wfun$ parameters, is:
	\begin{center}
		\footnotesize
		\renewcommand{\arraystretch}{1.1}
		\scalebox{0.85}{$
			\begin{bmatrix}
				0.789 & -0.283-0.163i & -0.283-0.163i & 0.068+0.117i & -0.283-0.163i & 0.068+0.117i & 0.068+0.117i & -0.056i \\
				0.163+0.283i & 0.789i & -0.135i & 0.163-0.283i & -0.135i & 0.163-0.283i & -0.028+0.049i & -0.117+0.068i \\
				0.163+0.283i & -0.135i & 0.789i & 0.163-0.283i & -0.135i & -0.028+0.049i & 0.163-0.283i & -0.117+0.068i \\
				-0.068+0.117i & -0.283+0.163i & -0.283+0.163i & -0.789 & 0.049-0.028i & 0.135 & 0.135 & 0.283+0.163i \\
				0.163+0.283i & -0.135i & -0.135i & -0.028+0.049i & 0.789i & 0.163-0.283i & 0.163-0.283i & -0.117+0.068i \\
				-0.068+0.117i & -0.283+0.163i & 0.049-0.028i & 0.135 & -0.283+0.163i & -0.789 & 0.135 & 0.283+0.163i \\
				-0.068+0.117i & 0.049-0.028i & -0.283+0.163i & 0.135 & -0.283+0.163i & 0.135 & -0.789 & 0.283+0.163i \\
				-0.056 & -0.117-0.068i & -0.117-0.068i & -0.163-0.283i & -0.117-0.068i & -0.163-0.283i & -0.163-0.283i & -0.789i
			\end{bmatrix}$.}
	\end{center}
	
		\item \textit{Alice's initial normalized state $|\psi\rangle$:}
	The initial 3-qubit state $|\psi\rangle$ in the computational basis, before compression, is:
	\begin{center}
		\small
		\scalebox{0.8}{$\begin{bmatrix}
				0.693-0.048i & -0.373+0.083i & -0.373+0.083i & -0.258-0.107i & -0.239+0.161i & 0.117-0.107i & 0.117-0.107i & 0.115-0.015i
			\end{bmatrix}^{\text{T}}$.}
	\end{center}

	\item \textit{Alice's transformed state $|\hat{\psi}\rangle$:}
	After applying $\Gfun_N$ to $|\psi\rangle$, the transformed state $|\hat{\psi}\rangle$ in the GTT basis is obtained:
	\begin{center}
		\small
		$\begin{bmatrix}
			0.914 & 0.0 & 0.0 & 0.406 & 0.0 & 0.0 & 0.0 & 0.0
		\end{bmatrix}^{\text{T}}.$
	\end{center}
	This state exhibits clear sparsity, with two dominant coefficients corresponding to indices 0 and 3. 
	
	\item \textit{Alice's truncated and compressed state:}
	Alice truncates $|\hat{\psi}\rangle$ by keeping only the $k=2$ dominant coefficients (at indices 0 and 3). This truncated state, after re-normalization, is:\\
	\begin{center}
		coefficients (approximated):
		\small
		$\begin{bmatrix}
			0.914 & 0.0 & 0.0 & 0.406 & 0.0 & 0.0 & 0.0 & 0.0
		\end{bmatrix}^{\text{T}}$, with   indices retained ($S_k$): $[0 \ 3]$.
	\end{center}
	This state is then encoded into a compressed form, which requires only $\lceil \log_2 2 \rceil = 1$ qubit for transmission. The coefficients of this 1-qubit compressed state are:
	\small
	$\begin{bmatrix}
		0.914 & 0.406
	\end{bmatrix}^{\text{T}}$.
	The classical side information ($S_k = [0 \ 3]$) is transmitted alongside the 1-qubit quantum state.

		\item \textit{Bob's decoded and reconstructed state $|\psi_{\text{approx}}\rangle$:}
	Upon receipt, Bob decodes the compressed 1-qubit state back into the 8-dimensional GTT basis (using the classical side information). The decoded state $|\hat{\psi}_{\text{approx}}\rangle$ is:
	\begin{center}
		\small
		$\begin{bmatrix}
			0.914 & 0.0 & 0.0 & 0.406 & 0.0 & 0.0 & 0.0 & 0.0
		\end{bmatrix}^{\text{T}}$.
	\end{center}
	Finally, Bob applies the inverse GTT ($\Gfun_N^\dagger$) to reconstruct an approximation of the original state $|\psi\rangle$:
	\begin{center}
		\small
		\scalebox{0.8}{$
			\begin{bmatrix}
				0.693-0.048i & -0.373+0.083i & -0.373+0.083i & -0.258-0.107i & -0.239+0.161i & 0.117-0.107i & 0.117-0.107i & 0.115-0.015i
			\end{bmatrix}^{\text{T}}$.}
	\end{center}
	
\end{enumerate}

\subsubsection*{Reconstruction fidelity:}
We recall that the fidelity $F(|\psi\rangle, |\phi\rangle)$ between two normalized quantum states $|\psi\rangle$ and $|\phi\rangle$ is a measure that quantifies their similarity or overlap. It is formally defined as the squared magnitude of their inner product:
$$
F(|\psi\rangle, |\phi\rangle) = |\langle \psi | \phi \rangle|^2.
$$
The fidelity value ranges from $0$ to $1$. A fidelity of $F=1$ indicates that the states are identical (or differ only by an unobservable global phase), while $F=0$ signifies that they are perfectly orthogonal and thus maximally dissimilar. In the context of quantum state compression, fidelity is used to quantify how well a reconstructed state approximates the original state after compression.

For the case under consideration, the fidelity between the original state $|\psi\rangle$ and the reconstructed state $|\psi_{\text{approx}}\rangle$ is calculated to be $1.0$.
A fidelity of $1.0$ indicates perfect reconstruction in this noiseless simulation, achieved because the initial state was designed to be exactly sparse in the chosen GTT basis, and all dominant components were retained. This example clearly demonstrates the potential for significant quantum state compression using the Generalized Tensor Transform.

\subsubsection*{Comparison with Hadamard-based truncation:}
To further highlight the strength of the GTT-based compression for states designed to be sparse in its basis, we can compare its performance with a fixed, well-known transform like the Hadamard transform ($H^{\otimes n}$). Using the same initial state $|\psi\rangle$ and applying the Hadamard transform yields the following state in the Hadamard basis:
\begin{center}
\small \scalebox{0.85}{
$\begin{bmatrix}
  -0.072-0.020i & 0.211+0.083i & 0.211+0.083i & 0.291+0.014i & -0.149+0.028i & 0.461-0.041i & 0.461-0.041i & 0.544-0.240i
\end{bmatrix}^{\text{T}}$}.
\end{center}
Unlike the GTT-transformed state, this Hadamard-transformed state does not exhibit the same level of clear sparsity, indicating that $|\psi\rangle$ is not naturally sparse in the Hadamard basis. To quantify the impact on compression, we apply the same truncation strategy: retaining only the two ($k=2$) dominant coefficients.
For the Hadamard-transformed state, the two dominant coefficients are found at indices [7 6]. The state after truncating to these two components (and re-normalizing) is given below.
Truncated coefficients (approximated in Hadamard basis, $k=2$):
\begin{center}
\small
$\begin{bmatrix}
  0.0 & 0.0 & 0.0 & 0.0 & 0.0 & 0.0 & 0.612-0.055i & 0.722-0.319i
\end{bmatrix}^{\text{T}}$.
\end{center}
The norm of discarded coefficients is $0.4315$, which is substantial.
Upon applying the inverse Hadamard transform to this truncated state, and comparing the result with the original state $|\psi\rangle$, the reconstruction fidelity for Hadamard-based compression (retaining $k=2$ coefficients) is calculated to be approximately: $0.5685$.

\subsubsection*{Comparison with Quantum Fourier Transform (QFT):}
Another fundamental quantum transform, widely used in various quantum algorithms (e.g., phase estimation, Shor's algorithm), is the Quantum Fourier Transform (QFT). We extend our comparison to include the QFT to further demonstrate the efficacy of our GTT for this particular state. Applying the QFT to the same initial state $|\psi\rangle$ yields the following state in the QFT basis:
\begin{center}
\small\scalebox{0.85}
{$\begin{bmatrix}
    -0.072-0.020i & 0.209-0.392i & 0.217+0.009i & 0.401-0.187i & 0.211+0.083i & 0.316-0.102i & 0.286+0.088i & 0.392+0.386i
\end{bmatrix}^{\text{T}}$.}
\end{center}

Similar to the Hadamard case, the QFT-transformed state does not exhibit the strong sparsity observed in the GTT basis for this specific initial state $|\psi\rangle$. To quantify the compression performance, we again truncate the QFT-transformed state by retaining only its two ($k=2$) dominant coefficients.
For the QFT-transformed state, the two dominant coefficients are found at indices [1, 7]. The truncated state (keeping $k=2$ coefficients) in the QFT basis, after re-normalization, is given below.
Truncated coefficients (approx. in QFT basis, $k=2$):
\begin{center}
\small
$\begin{bmatrix}
  0.0 & 0.295-0.555i & 0.0 & 0.0 & 0.0 & 0.0 & 0.0 & 0.554+0.546i
\end{bmatrix}^{\text{T}}$.
\end{center}
The norm of the discarded coefficients is $0.4999$.
After applying the inverse QFT to this truncated state and comparing it with the original state $|\psi\rangle$, the reconstruction fidelity for QFT-based compression (retaining $k=2$ coefficients) is approximately: $0.5001$.

\subsubsection*{Summary of fidelity comparison (retaining $k=2$ coefficients):}

To summarize, consider Alice's initial state:
\begin{equation*}
\small
\scalebox{0.8}{$ 
\ket{\psi} = \ket{S_1} = \begin{bmatrix}
  0.693 - 0.048i & -0.373 + 0.083i & -0.373 + 0.083i & -0.258 - 0.107i & -0.239 + 0.161i & 0.117 - 0.107i & 0.117 - 0.107i & 0.115 - 0.015i
\end{bmatrix}^{\text{T}}$.}
\end{equation*}
For this state, the fidelity obtained using GTT-based compression was $1.0000$, while Hadamard-based and QFT-based compression gave fidelities of $0.5685$ and $0.5001$, respectively.

Now consider the following two initial states:
\begin{equation*}
\small
\scalebox{0.8}{$ 
\ket{\psi} = \ket{S_2} = \begin{bmatrix}
  0.706 - 0.076i & -0.371 + 0.096i & -0.371 + 0.003i & -0.241 - 0.078i & -0.238 + 0.173i & 0.113 - 0.111i & 0.133 - 0.078i & 0.102 - 0.022i
\end{bmatrix}^{\text{T}}$.}
\end{equation*}
\begin{equation*}
\small
\scalebox{0.8}{$ 
\ket{\psi} = \ket{S_3} = \begin{bmatrix}
  0.718 - 0.101i & -0.370 + 0.108i & -0.370 + 0.015i & -0.242 - 0.082i & -0.237 + 0.101i & 0.128 - 0.085i & 0.147 - 0.052i & 0.091 - 0.028i
\end{bmatrix}^{\text{T}}$.}
\end{equation*}
For these states, GTT-based compression yielded fidelities of $0.9797$ and $0.9637$, respectively. The corresponding fidelities from Hadamard-based compression were $0.5737$ and $0.5815$, while QFT-based compression yielded $0.4845$ and $0.4879$. These results are summarized in Table~\ref{tab_fidelity}.

\begin{table}[h!]
\centering
\caption{Fidelity comparison for compression retaining $k=2$ coefficients.} \label{tab_fidelity}
\begin{tabular}{@{} cccc @{}}
\toprule
\textbf{Initial State} & \textbf{GTT Fidelity} & \textbf{Hadamard Fidelity} & \textbf{QFT/FFT Fidelity} \\
\midrule
$\ket{S_1}$  & 1.0000 & 0.5685 & 0.5001 \\
$\ket{S_2}$ & 0.9797 & 0.5737 & 0.4845 \\
$\ket{S_3}$ & 0.9637 & 0.5815 & 0.4879 \\
\bottomrule
\end{tabular}
\end{table}

To provide a clear visual assessment of the performance of the proposed GTT-based compression approach against standard quantum transforms for the chosen input quantum states, we present a comparison of the initial quantum states and their reconstructed versions. As shown collectively in Figure \ref{fig:all_state_comparisons_stacked}, we illustrate the magnitudes of the amplitudes of the original states (represented by blue bars) in the computational basis, alongside those of the states reconstructed after compression using GTT (green bars), Hadamard (orange bars), and Quantum Fourier Transform (QFT) (red bars). Each compression method involved retaining $k=2$ dominant coefficients, allowing for a direct comparison of their information retention capabilities under a fixed compression ratio.

Specifically, Figure \ref{fig:state_comparison_S1} shows the comparison for the initial signal $\ket{S_1}$, where the GTT-reconstructed state demonstrates a very close resemblance to the original, reflecting its high fidelity. This trend is consistently observed for the other two distinct initial quantum signals: Figure \ref{fig:state_comparison_S2} for $\ket{S_2}$ and Figure \ref{fig:state_comparison_S3} for $\ket{S_3}$. In each instance, the GTT-reconstructed states (green bars) visually align remarkably well with the original states (blue bars), underscoring the GTT's effectiveness in preserving essential information even with aggressive truncation. 

Conversely, the reconstructed states from both the Hadamard (orange bars) and QFT (red bars) transforms exhibit more deviations from the original signals' spectra. This visual evidence directly correlates with their lower numerical fidelities, reinforcing the argument that for the chosen intial quantum states, a customizable transform like the GTT offers superior compression and reconstruction quality compared to generic, fixed transforms. The ability of GTT to concentrate state information into fewer coefficients is thus visibly confirmed, leading to near-perfect reconstruction where other transforms suffer substantial loss.

\begin{figure}[htbp]
    \centering

    \begin{subfigure}{0.75\textwidth} %
        \centering
        \includegraphics[width=\linewidth]{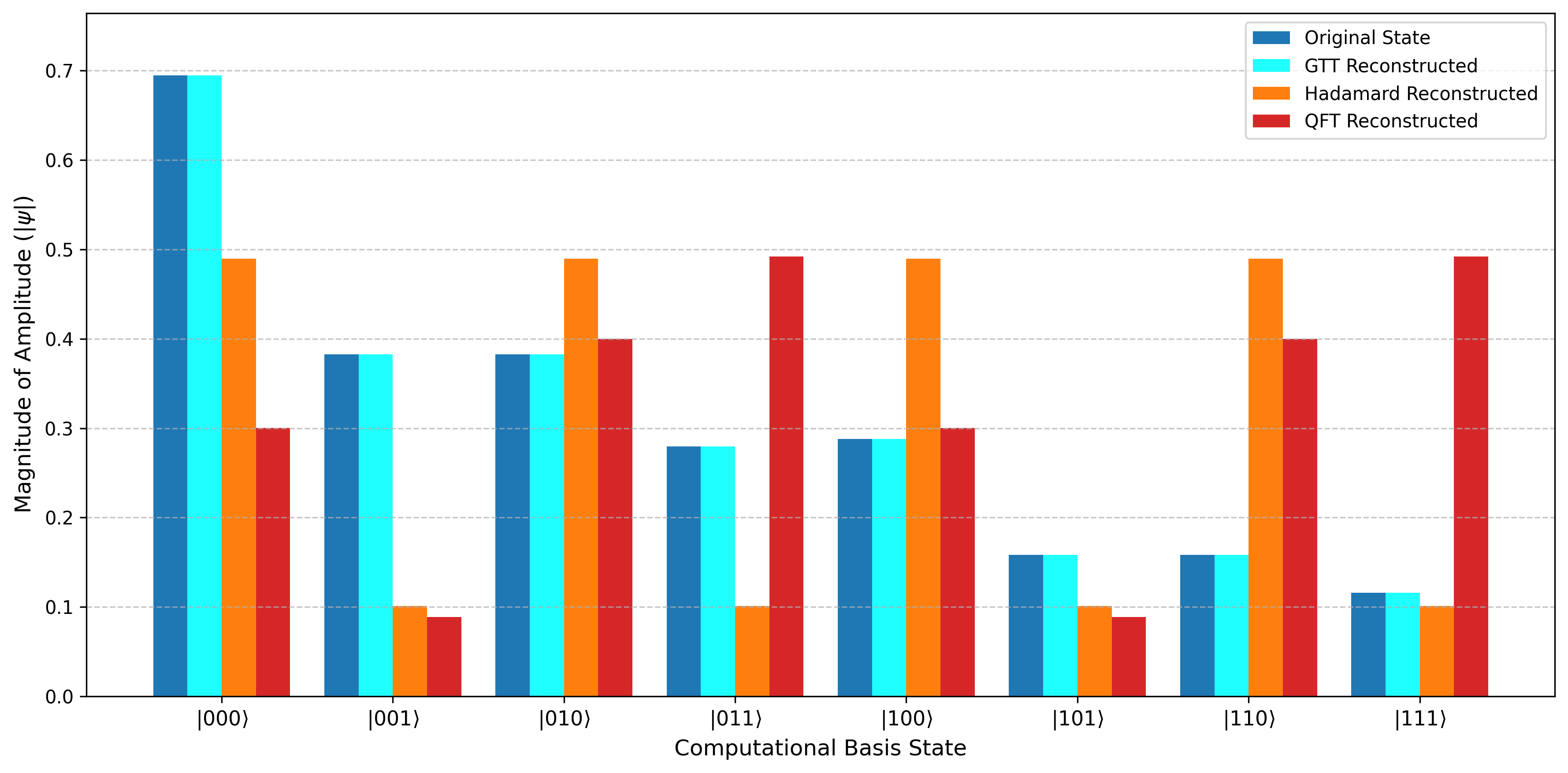}
        \caption{
        Comparison of original the initial quantum state  $\ket{S_1}$ and reconstructed signals.
        }
        \label{fig:state_comparison_S1}
    \end{subfigure}
    \par %
    \vspace{1em} %

    \begin{subfigure}{0.75\textwidth}
        \centering
        \includegraphics[width=\linewidth]{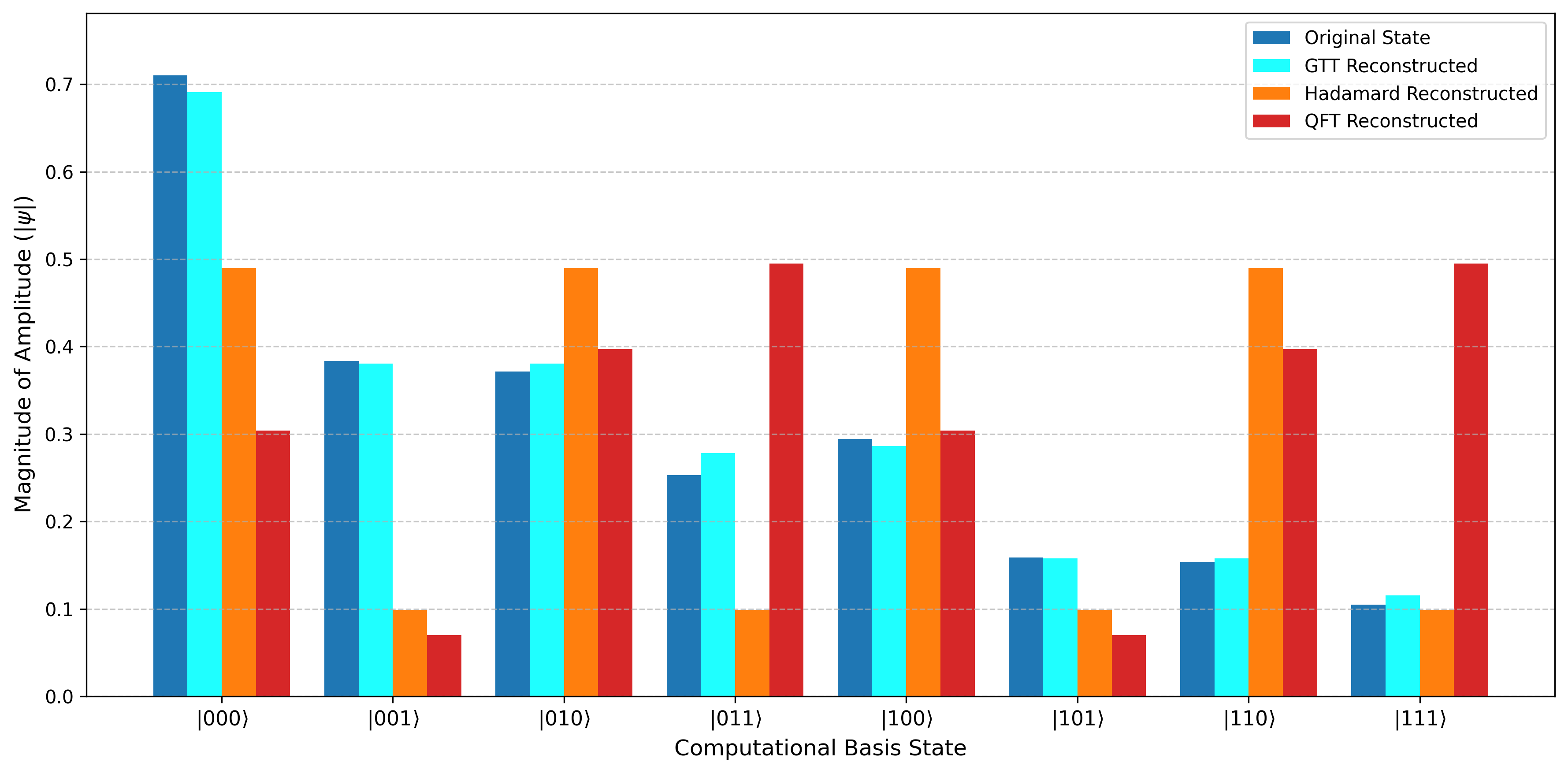}
         \caption{
        Comparison of original the initial quantum state  $\ket{S_2}$ and reconstructed signals.
        }
        \label{fig:state_comparison_S2}
    \end{subfigure}
    \par
    \vspace{1em}

    \begin{subfigure}{0.75\textwidth}
        \centering
        \includegraphics[width=\linewidth]
        {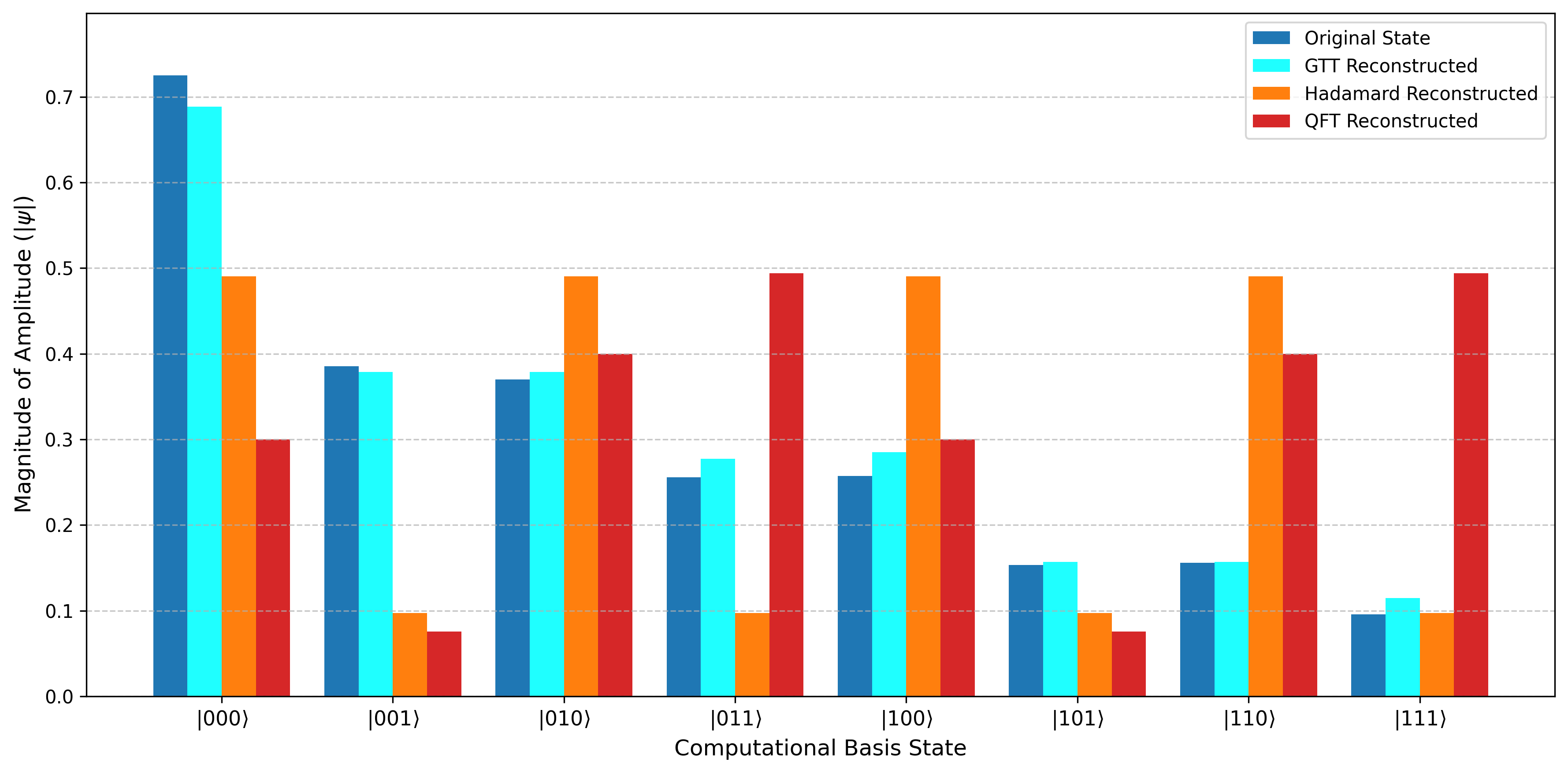}
         \caption{
        Comparison of original the initial quantum state  $\ket{S_3}$ and reconstructed signals.
        }
        \label{fig:state_comparison_S3}
    \end{subfigure}

    \caption{Comparison of original and reconstructed signals for different initial quantum signals. Each subfigure presents the magnitudes of the amplitudes of the original quantum state (blue bars) in the computational basis, along with its reconstructed approximations using GTT (green bars), Hadamard (orange bars), and QFT (red bars) based compression techniques. In all cases, only $k=2$ dominant coefficients were retained during compression.
    }
    \label{fig:all_state_comparisons_stacked}
\end{figure}

This multi-transform comparison conclusively demonstrates the power of the GTT for the examples chosen. While Hadamard and QFT are powerful, generic transforms, the GTT's customizable nature allows it to be capable of efficintly representing a broader class of signals.  This ability leads to higher compression efficiency and reconstruction fidelity for states not naturally sparse in generic bases, making the GTT a highly effective tool for quantum state compression.

\subsubsection{Computational complexity}
In this section, we analyze the computational and communication complexities of the proposed quantum state compression protocols, and compare them with a purely classical approach. We quantify gate complexity for quantum circuits and arithmetic operations for classical protocols, assuming an $n$-qudit state with $N=b^n$ dimensions and $k \prec \prec N$ dominant coefficients in a Generalized Tensor Transform (GTT) basis. 

\subsubsection*{Purely classical protocol complexity:}
For an $N$-dimensional state sparse in a GTT basis, a classical transmission protocol entails Alice computing the GTT on the classical amplitude vector ($\mathcal{O}(N \log_b N)$ arithmetic operations to obtain all $N$ coefficients). If the $k$ dominant indices are not \textit{a priori} known, identifying them necessitates sorting, incurring an additional $\mathcal{O}(N \log_b N)$ cost. Even if the indices are known, the GTT computation itself remains $\mathcal{O}(N \log_b N)$. Alice then transmits these $k$ indices and their complex amplitudes, requiring $\mathcal{O}(k(\log_b N + P))$ bits (where $P$ denotes amplitude precision). Bob's reconstruction involves an inverse GTT, also $\mathcal{O}(N \log_b N)$. Consequently, the overall classical computational complexity is dominated by $\mathcal{O}(N \log_b N)$, with a classical communication overhead of $\mathcal{O}(k(\log_b N + P))$ bits. 

\subsubsection*{Hybrid quantum-classical compression protocol complexity:}
The hybrid quantum-classical protocol integrates quantum transformation with classical coefficient identification. The initial quantum transformation, $\Gfun_N = \Wfun^{\otimes n}$, applied to the $n$-qudit state requires $\mathcal{O}(\log_b N)$ quantum gates (e.g., for Hadamard-like GTTs, since $n = \log_b N$). The subsequent ``Analyze and Truncate'' step is hybrid and lossy, converting quantum amplitudes to classical information.

If the $k$ dominant indices $S_k$ \textit{are} known \textit{a priori}, this step simplifies to Alice performing amplitude estimation only for these $k$ specified components. This involves $\mathcal{O}(k/\epsilon)$ quantum operations (query complexity), where $\epsilon$ is the desired precision for the amplitudes. Alice then prepares a $\lceil \log_b k \rceil$-qudit state from the $k$ selected classical amplitudes, a process requiring $\mathcal{O}(k)$ quantum gates using established state preparation techniques. For transmission, Alice sends this compressed quantum state along with $\mathcal{O}(k \log_b N)$ classical bits for the index set $S_k$. Bob's reconstruction involves applying an inverse GTT to the received state, costing $\mathcal{O}(\log_b N)$ quantum gates, guided by the classical $S_k$ information. In this scenario (indices known \textit{a priori}), the hybrid approach offers a quantum communication advantage of $\mathcal{O}(\log_b k)$ qudits (versus $\log_b N$ for the original state), with a dominant computational cost (quantum gates) of $\mathcal{O}(\log_b N + k/\epsilon + k)$.

If the $k$ dominant indices are \textit{not} known \textit{a priori}, discovering them from the quantum state through comprehensive amplitude amplification and estimation (e.g., iteratively finding and uncomputing amplitudes above a dynamically adjusted threshold) would typically incur a quantum query complexity scaling as $\mathcal{O}(\sqrt N)$ to find the dominant indices, plus $\mathcal{O}(k/\epsilon)$ to estimate their amplitudes. This discovery phase presents a  bottleneck for the overall protocol in the absence of \textit{a priori} knowledge. 

\subsubsection*{Fully quantum compression protocol complexity:}
The fully quantum protocol, designed for full coherence preservation, operates under the crucial assumption that the indices of the $k$ dominant sparse components ($S_k$) are known \textit{a priori}, presumably via classical heuristics as a pre-processing step. The ``Transform Phase,'' involving the application of $\Gfun_{2^n}$, demands $\mathcal{O}(\log_b N)$ quantum gates. The ``Compression Phase'' primarily involves an oracle $O_S$ and a controlled unitary $C_{q_a}U_{\text{map}}$.

Based on recent advancements in multi-controlled gate decomposition \cite{zindorf2025multi}, a multi-controlled X gate with $n_{\text{controls}}$ controls can be implemented with $\mathcal{O}(n_{\text{controls}})$ CNOT cost and constant depth (e.g., depth 2 for CNOTs, T, and H gates, for $n_{\text{controls}} \ge 5$).
\begin{itemize}
    \item The $O_S$ oracle, consisting of $k$ distinct $\log_b N$-qudit controlled-X (C$^{\log_b N}$X) gates, each flipping an ancilla qudit $q_a$ based on a specific $y_j \in S_k$, would therefore have a total gate complexity of $\mathcal{O}(k \cdot \log_b N)$ if these operations are performed sequentially. Its depth would be $\mathcal{O}(k)$ (constant depth per gate, applied $k$ times sequentially).
    \item The controlled unitary $C_{q_a}U_{\text{map}}$ performs $k$ distinct mappings, transferring amplitude and uncomputing $R_{\text{orig}}$. Assuming the dominant cost of synthesizing these mappings also arises from multi-controlled X-like operations that can leverage the $\mathcal{O}(\log_b N)$ gate cost and constant depth from \cite{zindorf2025multi}, its total gate complexity would also be $\mathcal{O}(k \cdot \log_b N)$.
\end{itemize}
The ``Transmit Phase'' entails sending $\lceil \log_b k \rceil$ qudits quantumly and $\mathcal{O}(k \log_b N)$ classical bits for $S_k$. Bob's ``Decode and Reconstruct Phase'' involves applying the inverse controlled unitary $C_{R_{\text{comp}}}U_{\text{decomp}}$ (similar gate complexity to $C_{q_a}U_{\text{map}}$) and the inverse GTT ($\mathcal{O}(\log_b N)$ gates). Therefore, the dominant quantum gate complexity of the fully quantum protocol is $\mathcal{O}(\log_b N + k \cdot \log_b N) = \mathcal{O}(k \log_b N)$, assuming $k \ge 1$. This notably avoids the $\mathcal{O}(N \log_b N)$ cost of classical coefficient identification by leveraging \textit{a priori} knowledge of $S_k$ and significantly improves upon prior multi-controlled gate complexities. The quantum communication cost is significantly reduced to $\mathcal{O}(\log_b k)$ qudits, a considerable advantage when $k \prec \prec N$.

\subsection{Function encoding and Generalized Tensor Transform (GTT)}

\label{sec:fun_encoding}

A continuous classical function $f(x)$ defined over an interval (e.g., $[0,1]$) can be encoded into quantum states. One way to achieve this is by discretizing the function into $N=b^n$ points, where $n$ is the number of qudits of dimension $b$. Each discrete function value $f(x_j)$ is then mapped to the amplitude of a computational basis state $|j\rangle$, forming a quantum state:
$$ |\psi_f\rangle = \frac{1}{\mathcal{N}} \sum_{j=0}^{N-1} f(x_j) |j\rangle ,$$
where $\mathcal{N} = \sqrt{\sum_{j=0}^{N-1} |f(x_j)|^2}$ is a normalization constant ensuring the state is a valid quantum state ($\langle\psi_f|\psi_f\rangle = 1$).

While this method allows for representation, the number of amplitudes $N$ grows exponentially with the number of qudits $n$. For practical applications on contemporary quantum hardware, which has limited qudit availability, it is imperative to develop methods for efficient state compression and representation, thereby mitigating the ``curse of dimensionality'' by capturing essential function information with fewer effective components in a transformed basis.

The Generalized Tensor Transform (GTT) offers a solution to this challenge by identifying an optimal unitary transformation that induces sparsity in the function's representation within the transformed domain. Unlike fixed transforms such as the Hadamard or Quantum Fourier Transform, the GTT provides inherent flexibility through its tunable parameters.

We recall that the GTT is a multi-qudit unitary operator $\Gfun_N$, defined as the $n$-fold tensor product of a general $b \times b$ unitary gate $\Wfun$, i.e.,
$ \Gfun_N = \Wfun^{\otimes n}$, where $N=b^n$. For the common case of $b=2$ (qubits), the single-qubit gate $\Wfun$ can be taken as a general unitary matrix 
\[
\Wfun = U3(\theta, \phi, \lambda) =
\begin{bmatrix}
	\cos\left(\frac{\theta}{2}\right) & -e^{i\lambda} \sin\left(\frac{\theta}{2}\right) \\
	e^{i\phi} \sin\left(\frac{\theta}{2}\right) & e^{i(\phi + \lambda)} \cos\left(\frac{\theta}{2}\right)
\end{bmatrix},
\]
which is the same matrix as used in \meqref{eq:U3}. 

We note that the tunable parameters ($\theta, \phi, \lambda$) are fundamental to the GTT's adaptability. By optimizing these parameters, $\Gfun_N$ can be precisely tailored to align with the specific characteristics of the function being encoded, facilitating maximal sparsity.

The GTT process for function compression involves the following steps:
\begin{enumerate}
    \item \textit{Initial encoding:} The discretized function $f(x)$ is encoded into an initial quantum state $|\psi_f\rangle$.
    \item \textit{Transformation:} The initial quantum state is transformed into a new basis by applying the GTT operator: $|\tilde{\psi}\rangle = \Gfun_N(\theta, \phi, \lambda) |\psi_f\rangle $.
    \item \textit{Sparsity and truncation:} The amplitudes of $|\tilde{\psi}\rangle$ are analyzed. If the chosen GTT is effective, $|\tilde{\psi}\rangle$ will exhibit sparsity, meaning the vast majority of its amplitudes are negligible. Only the $k$ components with the largest magnitudes are retained, forming a compressed state $|\tilde{\psi}_{\text{compressed}}\rangle$.
    \item \textit{Reconstruction:} To reconstruct the function's approximation, the inverse GTT operator $\Gfun_N^\dagger(\theta, \phi, \lambda)$ is applied to the compressed state: $|\psi_{\text{reconstructed}}\rangle = \Gfun_N^\dagger(\theta, \phi, \lambda) |\tilde{\psi}_{\text{compressed}}\rangle$.
    \item \textit{Optimization:} The GTT parameters $(\theta, \phi, \lambda)$ are classically optimized to maximize the fidelity $F$, (where $F = |\langle\psi_f|\psi_{\text{reconstructed}}\rangle|^2$) for a predetermined number of retained components $k$.
\end{enumerate}

The primary advantage of GTT over fixed transforms, such as the Hadamard transform, lies in its inherent flexibility to adapt to the specific structure of the function being encoded. The multi-qubit Hadamard transform $H^{\otimes n}$, for instance, is defined by a fixed single-qubit gate $H = \frac{1}{\sqrt{2}}\begin{bmatrix} 1 & 1 \\ 1 & -1 \end{bmatrix}$. While highly effective for functions already sparse in the Walsh-Hadamard basis, its performance is suboptimal for functions whose information is not concentrated within this predefined basis.

In contrast, GTT's parameters $\theta$, $\phi$, and $\lambda$ provide three degrees of freedom, enabling the single-qudit unitary operator $\Wfun$ (and consequently $\Gfun_N$) to be dynamically ``tuned'' or ``rotated.'' This adaptive mechanism allows GTT to discover a basis in which the function's representation becomes maximally sparse. This flexibility means that GTT can often concentrate the function's energy into a significantly smaller number of components ($k$) than any fixed transform, leading to superior compression.

\subsubsection{Effective representation with a few components}
The ability of GTT to represent a function effectively with only a few components directly stems from this adaptive sparsity. A function is considered sparse in a given transform domain if it can be expressed as a superposition of a small number of basis states within that domain. Such a suitable domain could be identified, by fine tuning and appropriate selection of the  parameters $(\theta, \phi, \lambda)$. Once this optimal GTT domain is identified, the compression is achieved by retaining only the $k$ components with the largest magnitudes in this domain, while discarding the remaining negligible coefficients. This is also relevant to the discussion presented in \mref{ssec:state_compression}.

By effectively discarding these negligible coefficients, GTT facilitates high compression ratios while maintaining high reconstruction fidelity. This efficiency is critical for quantum applications, as it allows the representation of complex functions using a minimal number of qubits or computational resources. This makes GTT a powerful tool for quantum digital signal processing and data compression on both current and future quantum computing platforms.

\subsubsection{Computational example: Function encoding}

Next we provide a detailed computational example to demonstrate the efficacy of the Generalized Tensor Transform (GTT) for compressing a function. We compare GTT's reconstruction performance against the fixed Hadamard transform.

The function under consideration is defined on the interval $[0,1]$ and discretized into $N = 2^{n}$ points, where $n$ is the number of qubits. For this example, we set $n = 4$, yielding $N = 16$ discrete points.
The function $f(x)$ is constructed to represent a signal that, while having an underlying structure potentially amenable to GTT compression, also includes perturbations to test the robustness of the method. It is defined as a polynomial function with added sinusoidal and exponential terms:
$$
f(x) = -978.7 x^7 + 3677 x^6 - 5575 x^5 + 4366 x^4 - 1875 x^3 + 431.6 x^2 - 47.57 x + 1.886 + 0.1\sin(0.1x) - 0.01e^{-x}.
$$
This function is discretized using the midpoints of the $N = 16$ subintervals within $[0,1]$. This discretized data is then normalized to form the initial quantum state vector $\ket{\psi_{\text{initial}}}$. 

The GTT optimization procedure is employed to discover the optimal GTT parameters $(\theta, \phi, \lambda)$ that maximize the fidelity of reconstruction for a specified number of retained components ($k$). This optimization was performed using the \texttt{minimize} function from the \texttt{scipy.optimize} library in Python. The objective function for the minimization was defined as $1 - F $  (where $F$ is the Fidelity) aiming to find parameters that yield the highest possible fidelity. An initial guess for the GTT parameters was set to $[\phi = 0.0, \lambda = \pi]$ and different initial guesses were tried for $\theta $ between $[0, \frac{\pi}{4}]$. The optimization used the L-BFGS-B method with bounds $[0, 2\pi]$ for each parameter.
For a comparative analysis, the reconstruction fidelity using the fixed Hadamard transform is also computed. The outcomes for $k=4, 8,$ and $12$ components are presented in Table \ref{tab:gtt_hadamard_results}.

\begin{table}[h!]
    \centering
    \caption{Compression results for the perturbed polynomial function ($N_{\text{components}}=16$)}
    \label{tab:gtt_hadamard_results}
    \begin{tabular}{ccccc}
        \toprule
        $k$ & Optimal $\theta$  & GTT fidelity & Hadamard fidelity \\
        \midrule
        4  & \num{0.2072} & \num{0.9181} & \num{0.2377} \\
        8  & \num{0.3242}  & \num{0.9848} & \num{0.5386} \\
        12 & \num{0.1560}  & \num{0.9964} & \num{0.8425} \\
        \bottomrule
    \end{tabular}
\end{table}

The results demonstrate that the GTT consistently achieves substantially higher reconstruction fidelity compared to the Hadamard transform across all tested compression levels. As the number of retained components ($k$) increases, the GTT fidelity rapidly approaches unity, signifying highly accurate reconstruction of the original function.

\subsubsection{Visual representation}
Figures \ref{fig:k4_reconstruction}, \ref{fig:k8_reconstruction}, and \ref{fig:k12_reconstruction} visually illustrate the original normalized function and its reconstructions using both GTT and Hadamard transforms for $k=4, 8,$ and $12$ retained components, respectively. These plots provide strong visual evidence corroborating the superior performance of GTT as indicated by the numerical fidelity values.

\begin{figure}[H]
    \centering
    \begin{subfigure}[b]{0.65\textwidth} %
        \centering
        \includegraphics[width=\textwidth]{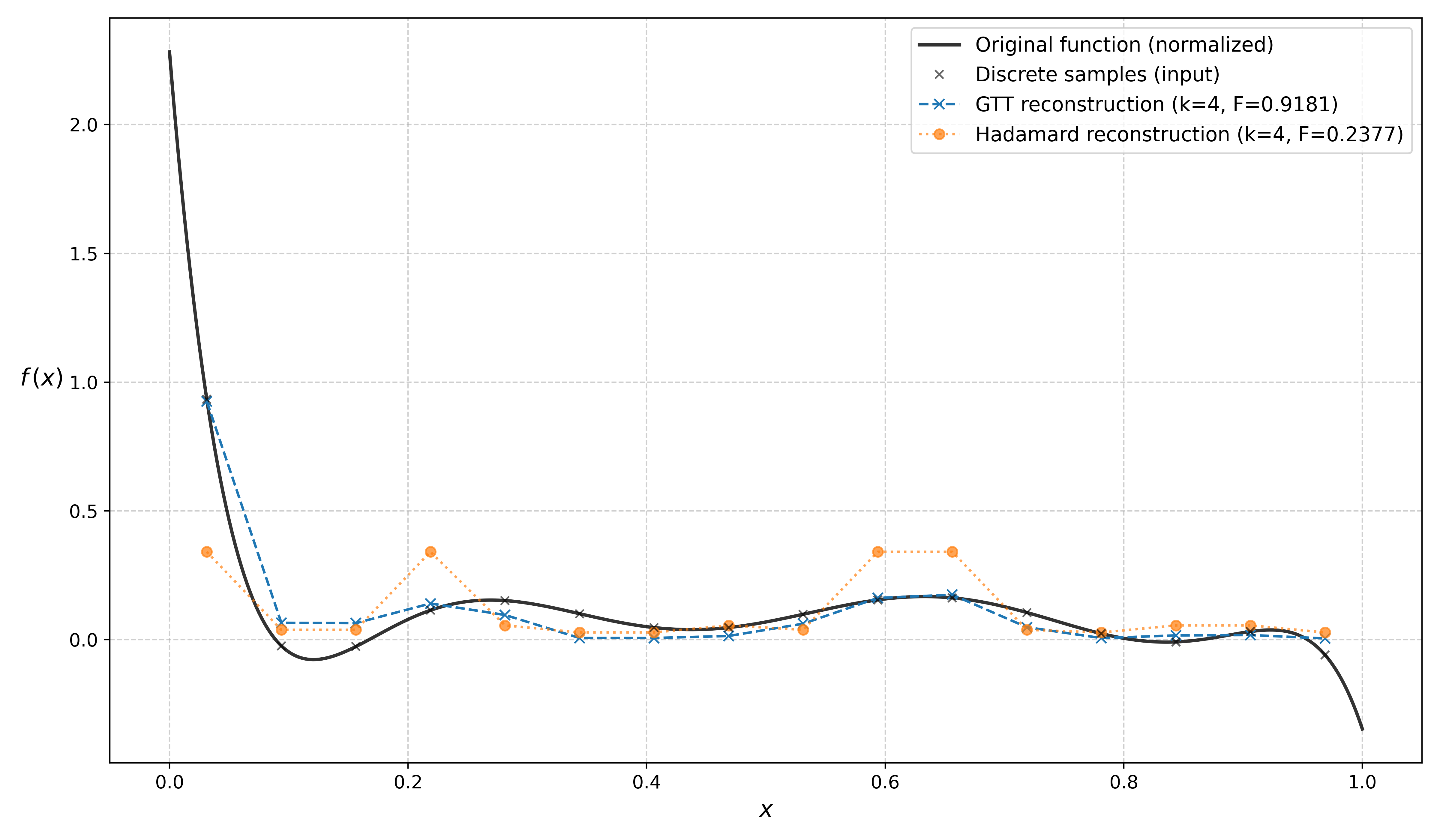}
        \caption{Reconstruction for $k=4$ components.}
        \label{fig:k4_reconstruction}
    \end{subfigure}
    \vfill %
    \begin{subfigure}[b]{0.65\textwidth}
        \centering
        \includegraphics[width=\textwidth]{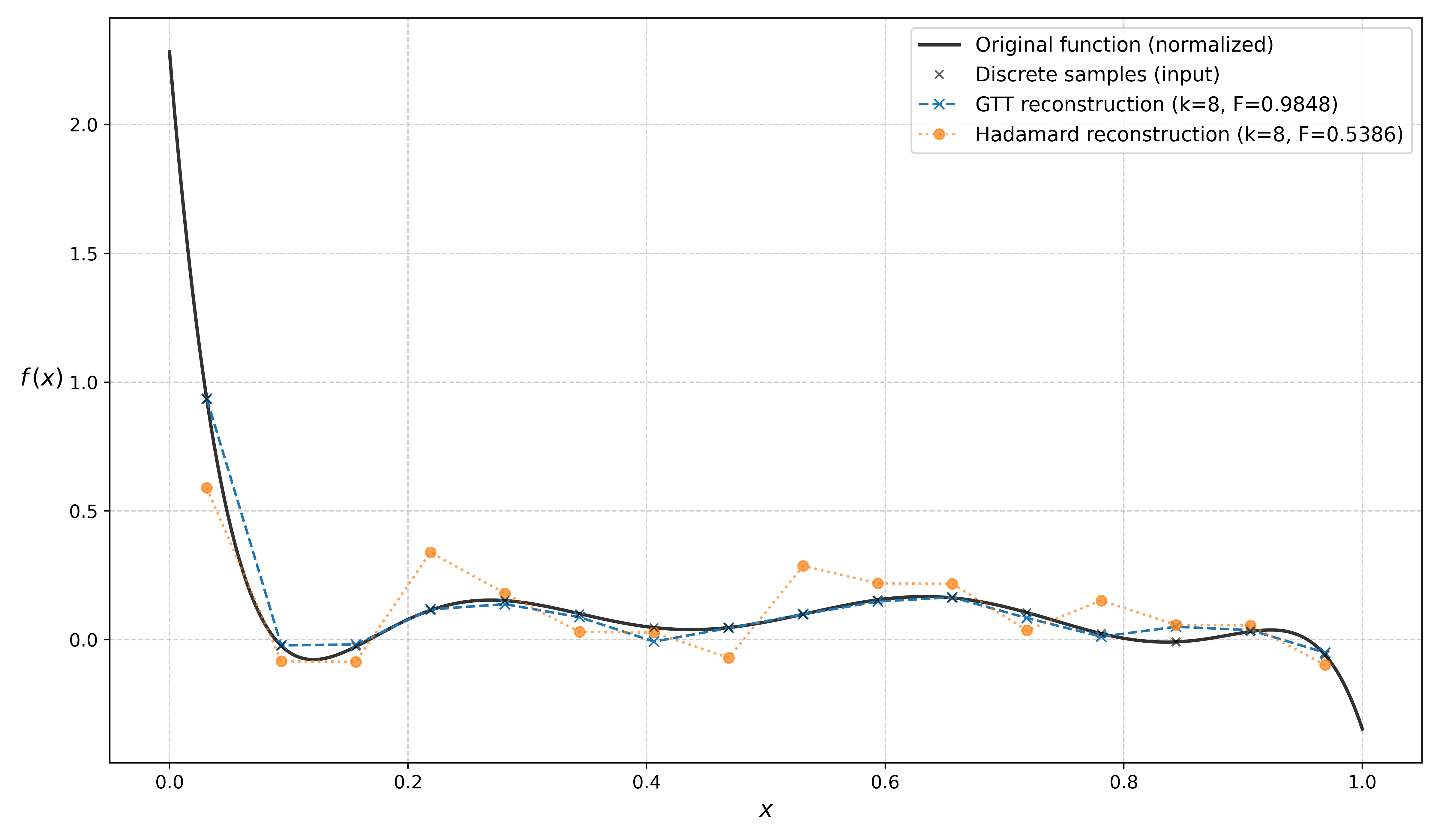}
        \caption{Reconstruction for $k=8$ components.}
        \label{fig:k8_reconstruction}
    \end{subfigure}
    \vfill
    \begin{subfigure}[b]{0.65\textwidth}
        \centering
        \includegraphics[width=\textwidth]{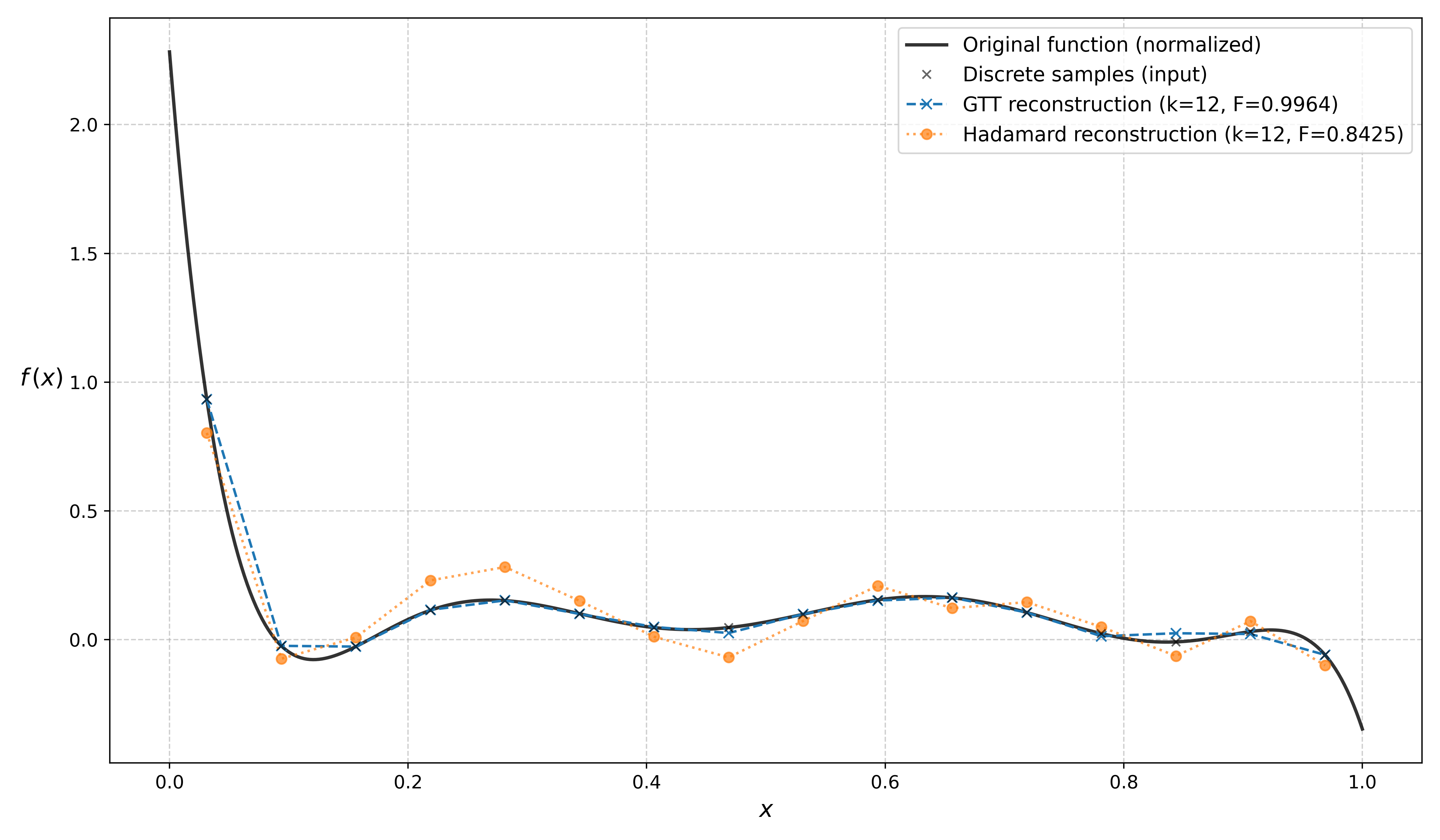}
        \caption{Reconstruction for $k=12$ components.}
        \label{fig:k12_reconstruction}
    \end{subfigure}
    \caption{Reconstruction of the function $f(x)$ for $N=16$ discrete points, comparing GTT and Hadamard transforms for varying numbers of retained components ($k$). The black `x' markers represent the normalized discretized input points, which lie on the normalized plot of the function $f(x)$. The optimal values of the tunable parameter $\theta $ used for GTT based reconstructions can be found in Table~\ref{tab:gtt_hadamard_results}.}
    \label{fig:overall_reconstruction}
\end{figure}

\subsection{Quantum digital signal filtering using GTT}
\label{ssec:filtering}

A fundamental task in signal processing is filtering, which involves isolating specific frequency components or features from a given signal. While traditional filtering often operates in a transformed domain (e.g., Fourier domain) where frequency components are naturally ordered, the Generalized Tensor Transform (GTT) offers a versatile alternative.  We introduce a quantum algorithm for signal filtering using the GTT, with a particular focus on filtering directly in the \textit{natural ordering} of the GTT domain. This approach is 
particually suited 
 for classes of signals whose energy is predominantly concentrated within the initial indices when represented in the natural GTT basis. For such signals, low-pass filtering in the natural order allows for the direct retention of the most significant components, effectively suppressing noise or unwanted features that manifest at higher natural indices. 
We present a quantum approach to low-pass and high-pass filtering using the GTT in natural ordering, as outlined in Algorithm~\ref{alg_gtt_natural_filtering}.

\subsubsection{Quantum digital signal filtering using GTT in natural order}

\begin{algorithm}[H]
    \DontPrintSemicolon
    \KwInput{The normalized input signal $\ket{\Psi}$. \\ The Generalized Tensor Transform (GTT) operator $G_N (\theta) = W^{\otimes n}$, where $W$ is a unitary gate with real entries (restricted to $\phi=0, \lambda=\pi$, and $\theta \in [0, \pi]$). \\ A cutoff value $c$ for filtering in the natural ordering.}
    \KwOutput{The state $\ket{0} \otimes \ket{\widetilde{\Psi}_{l}} + \ket{1} \otimes \ket{\widetilde{\Psi}_{h}} $, where
    $\ket{\widetilde{\Psi}_{l}} $ and $\ket{\widetilde{\Psi}_{h}} $ are low-pass and high-pass filtered signals in the time domain, respectively.}
    \Fn{Filter $\ket{\Psi}$, $G_N (\theta)$}{
        Prepare the state $\ket{0} \otimes \ket{\Psi}$ using $n +1$ qubits, where the ancilla qubit (the leftmost or the most significant qubit) is initialized to $\ket{0}$. \\
        Apply $ X \otimes G_N$ to the state $\ket{0} \otimes \ket{\Psi}$ to obtain the state $\ket{1} \otimes \ket{\widehat{\Psi}}$, where $\ket{\widehat{\Psi}} = G_N \ket{\Psi}$ is the Generalized Tensor Transform (GTT) of the input signal $\ket{\Psi}$ in natural ordering. \\
        Apply appropriate multi-controlled $X$ gates on the ancilla qubit, to split the state $\ket{1} \otimes \ket{\widehat{\Psi}}$ into low and high components based on their natural ordering indices:
        \begin{align*}
            \left(X \otimes \sum_{k < c } \ket{k}\bra{k} + I \otimes \sum_{k \geq c } \ket{k}\bra{k}\right) \left[\ket{1} \otimes \ket{\widehat{\Psi}}\right]
            = \ket{0} \otimes \ket{\widehat{\Psi}_{l}} + \ket{1} \otimes \ket{\widehat{\Psi}_{h}},
        \end{align*}
        where $\ket{\widehat{\Psi}_{l}} = \sum_{k < c } \braket{k \, | \,\widehat{\Psi}} \ket{k}$ is the low-index component and $ \ket{\widehat{\Psi}_{h}} = \sum_{k \geq c } \braket{k \, | \,\widehat{\Psi}} \ket{k}$ is the high-index component in the natural-ordered GTT domain. \\
        This is an optional step. Although not part of the low-pass or high-pass filtering, the states $\ket{\widehat{\Psi}_{l}}$ and $\ket{\widehat{\Psi}_{h}}$ can be further processed by applying appropriate quantum gates as needed. \\
        Apply inverse GTT operators to convert the filtered signals back to the time domain:
        \begin{align*}
            \left( I \otimes G_N^\dagger(\theta) \right) \left[ \ket{0} \otimes \ket{\widehat{\Psi}_{l}} + \ket{1} \otimes \ket{\widehat{\Psi}_{h}} \right] = \ket{0} \otimes \ket{\widetilde{\Psi}_{l}} + \ket{1} \otimes \ket{\widetilde{\Psi}_{h}},
        \end{align*}
        where $\ket{\widetilde{\Psi}_{l}} = G_N^\dagger(\theta) \ket{\widehat{\Psi}_{l}} $ and $\ket{\widetilde{\Psi}_{h}} = G_N^\dagger(\theta) \ket{\widehat{\Psi}_{h}} $ are low-pass and high-pass filtered signals in the time domain, respectively. \\
        \Return{$\ket{0} \otimes \ket{\widetilde{\Psi}_{l}} + \ket{1} \otimes \ket{\widetilde{\Psi}_{h}} $.}
    }
    \caption{A quantum algorithm for low-pass and high-pass filtering using the Generalized Tensor Transform (GTT) in natural ordering.}
    \label{alg_gtt_natural_filtering}
\end{algorithm}

\begin{remark}
\begin{enumerate}\leavevmode
    \item  We note that, if in Step 2 of Algorithm \ref{alg_gtt_natural_filtering}, instead of $\left( X \otimes G_N(\theta) \right)$, the unitary $\left( I \otimes G_N(\theta) \right)$ is applied to the state $\ket{0} \otimes \ket{\Psi}$, keeping all the remaining steps the same, then Algorithm \ref{alg_gtt_natural_filtering} returns the state $\ket{0} \otimes \ket{\widetilde{\Psi}_{h}} + \ket{1} \otimes \ket{\widetilde{\Psi}_{l}} $. In other words, in this case, the high-pass filtered component of the signal $\ket{\widetilde{\Psi}_{h}}$ is associated with the state of the ancilla qubit being $\ket{0} $ and the low-pass filtered component of the signal $\ket{\widetilde{\Psi}_{l}}$ is associated with the state of the ancilla qubit being $\ket{1} $.
    \item We note that, in an earlier work (ref.~~\cite{shukla2023quantumdsp}) we provided a quantum algorithm for filtering based on spectral analysis using Walsh-Hadamard basis functions.
\end{enumerate}
\end{remark}

\begin{figure}[H]
    \centering
    \begin{subfigure}[b]{0.95\textwidth}
        \centering
        \includegraphics[width=0.45\textwidth]{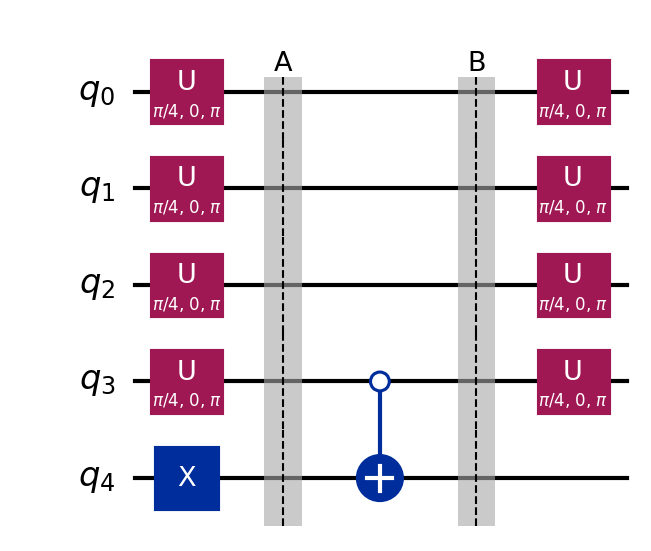}
        \caption{Quantum circuit for GTT-based low-pass filtering with a cutoff index of $N/2=8$. 
        }
        \label{fig:gtt_filter_N2}
    \end{subfigure}

    \begin{subfigure}[b]{0.95\textwidth}
        \centering
        \includegraphics[width=0.45\textwidth]{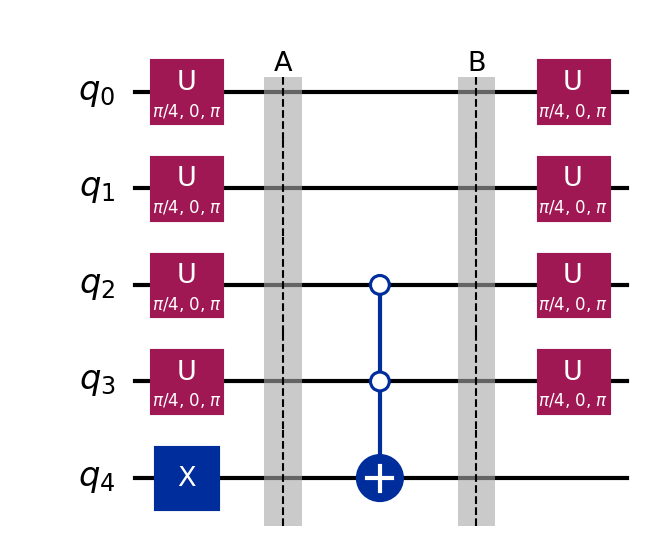}
        \caption{Quantum circuit for GTT-based low-pass filtering with a cutoff index of $N/4=4$. 
        }
        \label{fig:gtt_filter_N4}
    \end{subfigure}

    \begin{subfigure}[b]{0.95\textwidth}
        \centering
        \includegraphics[width=0.45\textwidth]{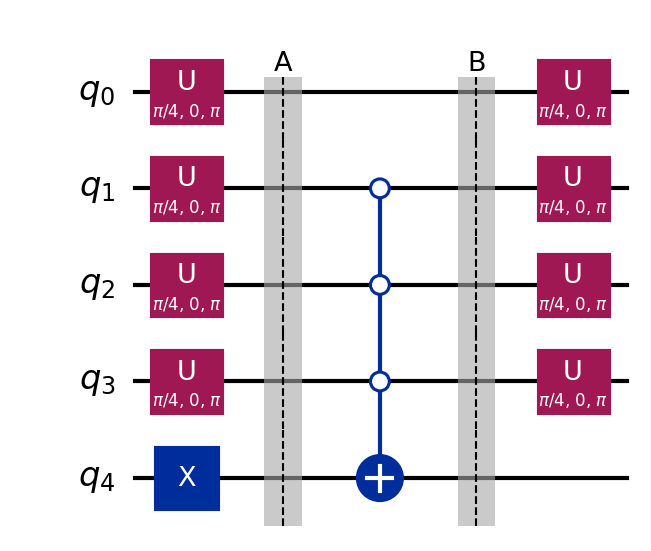}
        \caption{Quantum circuit for GTT-based low-pass filtering with a cutoff index of $N/8=2$. 
        }
        \label{fig:gtt_filter_N8}
    \end{subfigure}

    \caption{Quantum filter circuits illustrating GTT-based low-pass filtering (with $\theta = \frac{\pi}{4}$) for an $N=16$ dimensional signal in natural order, as described in Algorithm~\ref{alg_gtt_natural_filtering}. Each circuit demonstrates how different cutoff indices ($N/2$, $N/4$, $N/8$) are implemented by varying the control conditions on the signal qubits ($q_0-q_3$) that determine the auxiliary qubit's state ($q_4$), effectively separating low- and high-index components.
    }
    \label{fig:gtt_all_filters} %
\end{figure}

The quantum algorithm for GTT-based low-pass and high-pass filtering in natural ordering is implemented on a quantum circuit with $n$ signal qubits ($q_0$ to $q_{n-1}$) encoding the input signal $\ket{\Psi}$ and one auxiliary qubit ($q_n$). The process, as outlined in Algorithm \ref{alg_gtt_natural_filtering} and illustrated in Figure \ref{fig:gtt_all_filters}, proceeds as follows:

\begin{enumerate}
    \item \textit{Initialization:} The input signal $\ket{\Psi}$ is prepared on the $n$ signal qubits. Simultaneously, the auxiliary qubit ($q_n$), initially in $\ket{0}$, is flipped to $\ket{1}$ by applying an $X$ gate. This sets the initial state of the system to $\ket{1} \otimes \ket{\Psi}$. This step is visible at the very beginning of the quantum circuits in Figure \ref{fig:gtt_all_filters}, where the ancilla qubit ($q_4$ for $N=16$) receives an $X$ gate after the initial state preparation of the signal qubits.

    \item \textit{Generalized Tensor Transform (GTT) application:} The GTT operator, $G_N(\theta) = W^{\otimes n}$, is then applied to the $n$ signal qubits. Each $W$ gate is realized as a $U3(\theta, 0, \pi)$ gate applied individually to each signal qubit. This transforms the signal from the time domain to the GTT domain in natural ordering, resulting in the state $\ket{1} \otimes G_N \ket{\Psi} = \ket{1} \otimes \ket{\widehat{\Psi}}$. In Figure \ref{fig:gtt_all_filters}, this operation is represented by the series of $U3$ gates (with $\theta = \frac{\pi}{4}$) applied to $q_0$-$q_3$, following the initial state preparation and the $X$ gate on $q_4$.

    \item \textit{Filtering in the GTT domain (between barrier A and barrier B):} This is the crucial filtering stage where components are separated based on their natural ordering indices. Between the barriers A and B, a series of multi-controlled $X$ gates are applied. Each multi-controlled $X$ gate targets the auxiliary qubit ($q_n$), with the $n$ signal qubits ($q_0$ to $q_{n-1}$) serving as controls. The control state for each multi-controlled $X$ gate is determined by the binary representation of the natural ordering index $k$ that corresponds to a low-pass component ($k < c$). Since the auxiliary qubit began in $\ket{1}$, applying an $X$ gate when the signal qubits are in a low-pass state $|k\rangle$ (where $k < c$) flips the auxiliary qubit to $\ket{0}$. Conversely, for high-pass components ($k \geq c$), no $X$ gate is applied to the auxiliary qubit, ensuring it remains in $\ket{1}$. This process effectively transforms the state to $\ket{0} \otimes \ket{\widehat{\Psi}_{l}} + \ket{1} \otimes \ket{\widehat{\Psi}_{h}}$, where $\ket{\widehat{\Psi}_{l}}$ and $\ket{\widehat{\Psi}_{h}}$ are the low-pass and high-pass components in the GTT domain, respectively. Figures \ref{fig:gtt_filter_N2}, \ref{fig:gtt_filter_N4}, and \ref{fig:gtt_filter_N8} clearly show these controlled-$X$ gates between the barriers, with the number of such gates varying according to the cutoff value $c$. For instance, in Figure \ref{fig:gtt_filter_N4} (cutoff $N/4=4$), controlled-$X$ gates are applied for states $|0000\rangle$ through $|0011\rangle$ (natural indices 0 to 3).

    \item \textit{Inverse GTT application:} The final step involves applying the inverse GTT operator, $G_N^\dagger(\theta)$, to the signal qubits ($q_0$ to $q_{n-1}$) in both the $\ket{0}$ and $\ket{1}$ auxiliary subspaces. As the $W$ gate used here is its own inverse, the inverse GTT is realized by applying the same GTT operator again. This converts the filtered components back from the GTT domain to the time domain. The final state is $\ket{0} \otimes \ket{\widetilde{\Psi}_{l}} + \ket{1} \otimes \ket{\widetilde{\Psi}_{h}}$, where $\ket{\widetilde{\Psi}_{l}}$ is the low-pass filtered signal and $\ket{\widetilde{\Psi}_{h}}$ is the high-pass filtered signal. This stage is represented by another set of $U3$ gates on the signal qubits, after Barrier B, in Figure \ref{fig:gtt_all_filters}.

\end{enumerate}

\subsubsection*{Role of the ancilla qubit:}
The auxiliary qubit ($q_n$) is central to this filtering algorithm. By initially setting it to $\ket{1}$ and then selectively flipping it to $\ket{0}$ for low-pass components ($k < c$) during the filtering stage (between Barrier A and Barrier B), it acts as a quantum register that tags the frequency components. This channels the low-pass components into the auxiliary qubit's $\ket{0}$ subspace, while the high-pass components reside in its $\ket{1}$ subspace. Consequently, a measurement of the auxiliary qubit at the end of the circuit reveals which filtered signal is encoded on the main signal qubits.
is encoded on the main signal qubits.

\subsubsection{Computational example: Filtering based on natural spectral ordering}
We now demonstrate the filtering algorithm in the natural ordering of spectral components (Algorithm~\ref{alg_gtt_natural_filtering}). Using the same input signal of length $N=16$ ($n=4$ signal qubits) as before, $\ket{\Psi}$:
\[
 \ket{\Psi} = \scalebox{0.75}{$\left[
\begin{array}{rrrrrrrrrrrrrrrr}
0.9000 & 0.7000 & 0.5000 & 0.3000 & 0.1000 & -0.1000 & -0.3000 & -0.5000 & -0.4000 & -0.2000 & 0 & 0.2000 & 0.3000 & 0.1000 & -0.1000 & 0
\end{array}
\right]^T$.}
\]
We set the cutoff for filtering in the natural ordering of spectral components to $c = 4$. This means natural spectral components  $0, 1, 2, 3$ are retained in the low-pass channel, while components $4, 5, \dots, 15$ are diverted to the high-pass channel.
First, the normalized input signal $\ket{\Psi}$ is transformed into the GTT-based natural spectral domain. The entries in the following array correspond to the transformed vector in the natural spectral domain:
\[
 \scalebox{0.7}{$\left[
\begin{array}{rrrrrrrrrrrrrrrr}
0.5948 & -0.1243 & 0.0062 & -0.0363 & 0.2615 & -0.2185 & -0.3490 & 0.1497 & 0.4269 & -0.0261 & 0.1044 & -0.1462 & 0.3788 & -0.0754 & 0.0551 & 0.0413
\end{array}
\right]^T$.}
\]
The quantum circuit simulation yields two output signals in the time domain: $\widetilde{\Psi}_l$ (low-pass filtered, associated with the auxiliary qubit in state $\ket{0}$) and $\widetilde{\Psi}_h$ (high-pass filtered, associated with the auxiliary qubit in state $\ket{1}$). Their respective amplitudes are:
 	\[
 \widetilde{\Psi}_l = \scalebox{0.7}{$\left[
\begin{array}{rrrrrrrrrrrrrrrr}
0.3931 & 0.2818 & 0.1704 & 0.0835 & 0.1628 & 0.1167 & 0.0706 & 0.0346 & 0.1628 & 0.1167 & 0.0706 & 0.0346 & 0.0675 & 0.0483 & 0.0292 & 0.0143
\end{array}
\right]^T $} \text{ and}
\] 
\[
 \widetilde{\Psi}_h = \scalebox{0.67}{$\left[
\begin{array}{rrrrrrrrrrrrrrrr}
0.1940 & 0.1749 & 0.1557 & 0.1122 & -0.0976 & -0.1819 & -0.2663 & -0.3608 & -0.4238 & -0.2472 & -0.0706 & 0.0959 & 0.1282 & 0.0169 & -0.0945 & -0.0143
\end{array}
\right]^T$.}
\]
To verify the filtering accuracy, we transform $\widetilde{\Psi}_l$ and $\widetilde{\Psi}_h$ back into the GTT-based natural spectral domain. The components are shown below, preserving sign for the passed components and using magnitude for the suppressed components. Values close to zero indicate successful filtering. 

\noindent \textit{Low-pass filtered signal ($\widetilde{\Psi}_l$) (natural spectral components):}
\[
 \scalebox{0.75}{$\left[
\begin{array}{rrrrrrrrrrrrrrrr}
0.5948 & -0.1243 & 0.0062 & -0.0363 & 0 & 0 & 0 & 0 & 0 & 0 & 0 & 0 & 0 & 0 & 0 & 0
\end{array}
\right]^T$.}
\]

\noindent \textit{High-pass filtered signal ($\widetilde{\Psi}_h$) (natural spectral components):}
\[
 \scalebox{0.75}{$\left[
\begin{array}{rrrrrrrrrrrrrrrr}
0 & 0 & 0 & 0 & 0.2615 & -0.2185 & -0.3490 & 0.1497 & 0.4269 & -0.0261 & 0.1044 & -0.1462 & 0.3788 & -0.0754 & 0.0551 & 0.0413
\end{array}
\right]^T$.}
\]

As observed from the components in the GTT-based natural spectral domain, the low-pass filtered signal $\widetilde{\Psi}_l$ successfully retains components with natural indices less than the cutoff $c=4$, while components with natural index $4$ and above are suppressed (magnitudes close to zero). Conversely, the high-pass filtered signal $\widetilde{\Psi}_h$ primarily contains components with natural index $4$ and above, with lower natural components effectively filtered out. These results are in strong agreement with classical filtering predictions, confirming the correctness and effectiveness of the quantum algorithm for filtering based on natural ordering.

The effect of the filter is further illustrated by the stem plots of the GTT-based natural spectral components of the original, low-pass filtered, and high-pass filtered signals, as shown in 
Figure \ref{fig:gtt_natural_components}. 
As can be observed in Figures \ref{fig:gtt_natural_original}, \ref{fig:gtt_natural_low_pass}, and \ref{fig:gtt_natural_high_pass}, the filtering process effectively separates the GTT-based natural spectral components.

\begin{figure}[H]
\centering
\begin{subfigure}[b]{0.9\textwidth}
\centering
\includegraphics[scale=0.45]{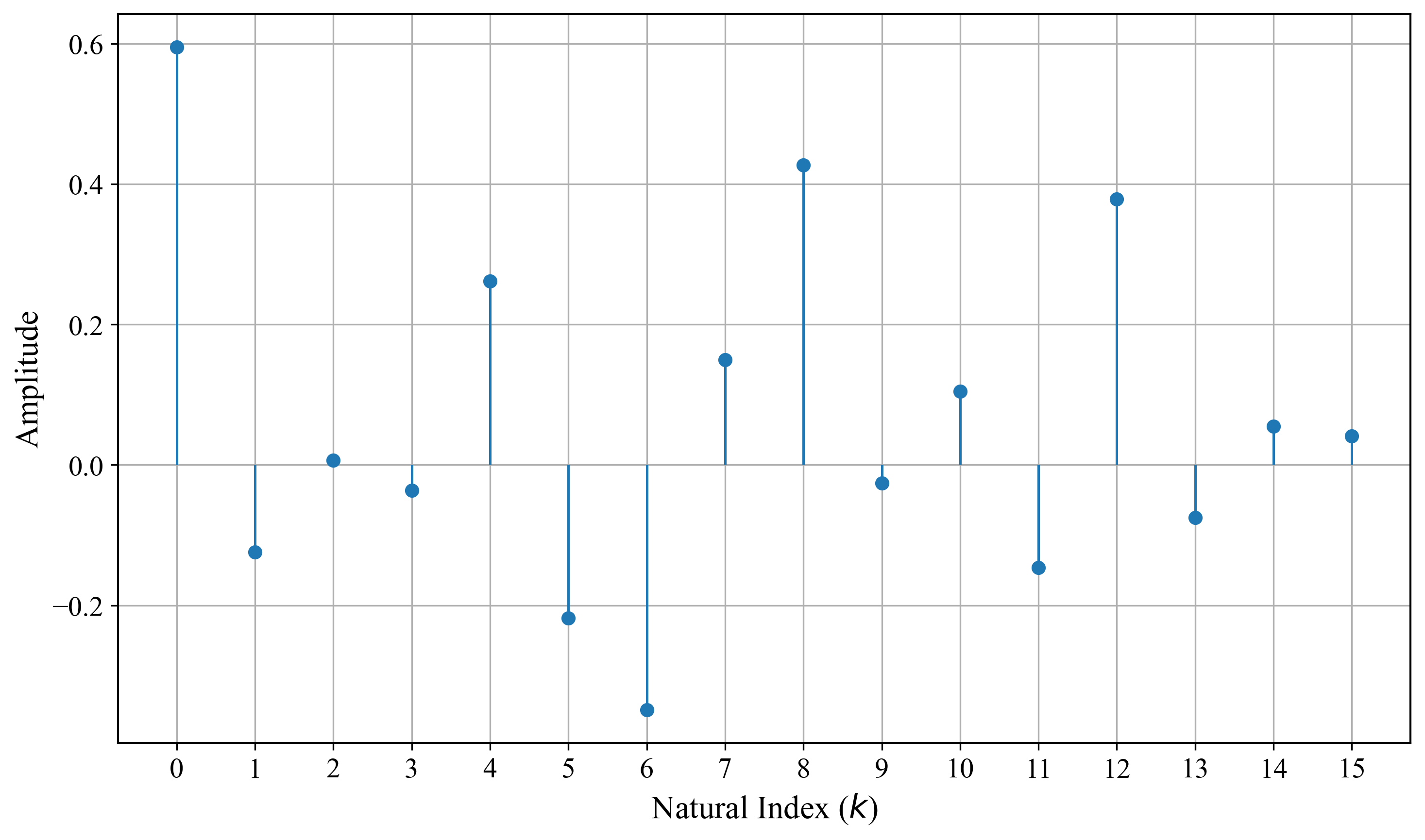}
\caption{GTT-based natural spectral components of the original input signal $\ket{\Psi}$.}
\label{fig:gtt_natural_original}
\end{subfigure}
\hfill
\begin{subfigure}[b]{0.9\textwidth}
\centering
\includegraphics[scale=0.45]{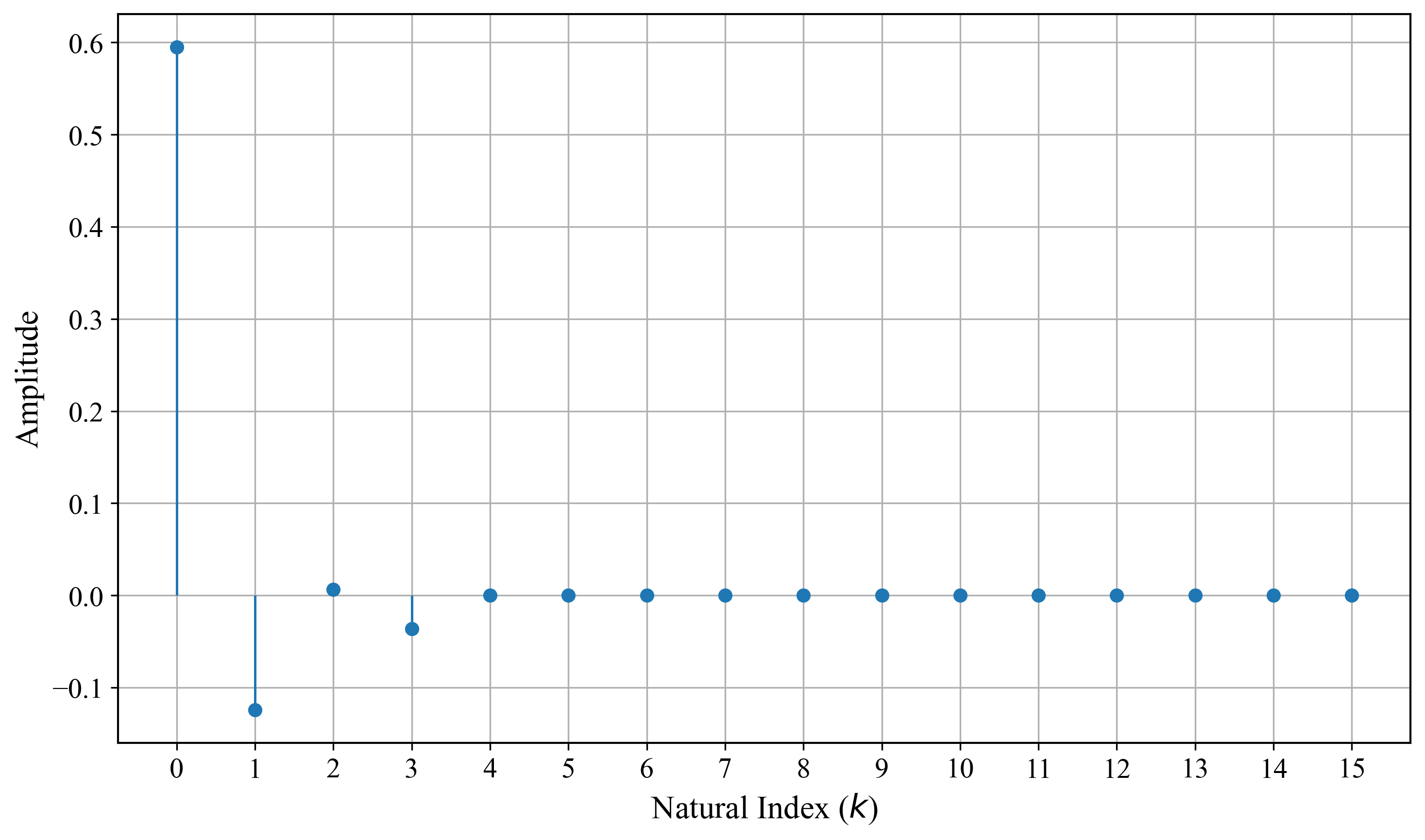}
\caption{GTT-based natural spectral components of the low-pass filtered signal.  Components with natural index 4 and above are suppressed.}
\label{fig:gtt_natural_low_pass}
\end{subfigure}
\hfill
\begin{subfigure}[b]{0.9\textwidth}
\centering
\includegraphics[scale=0.45]{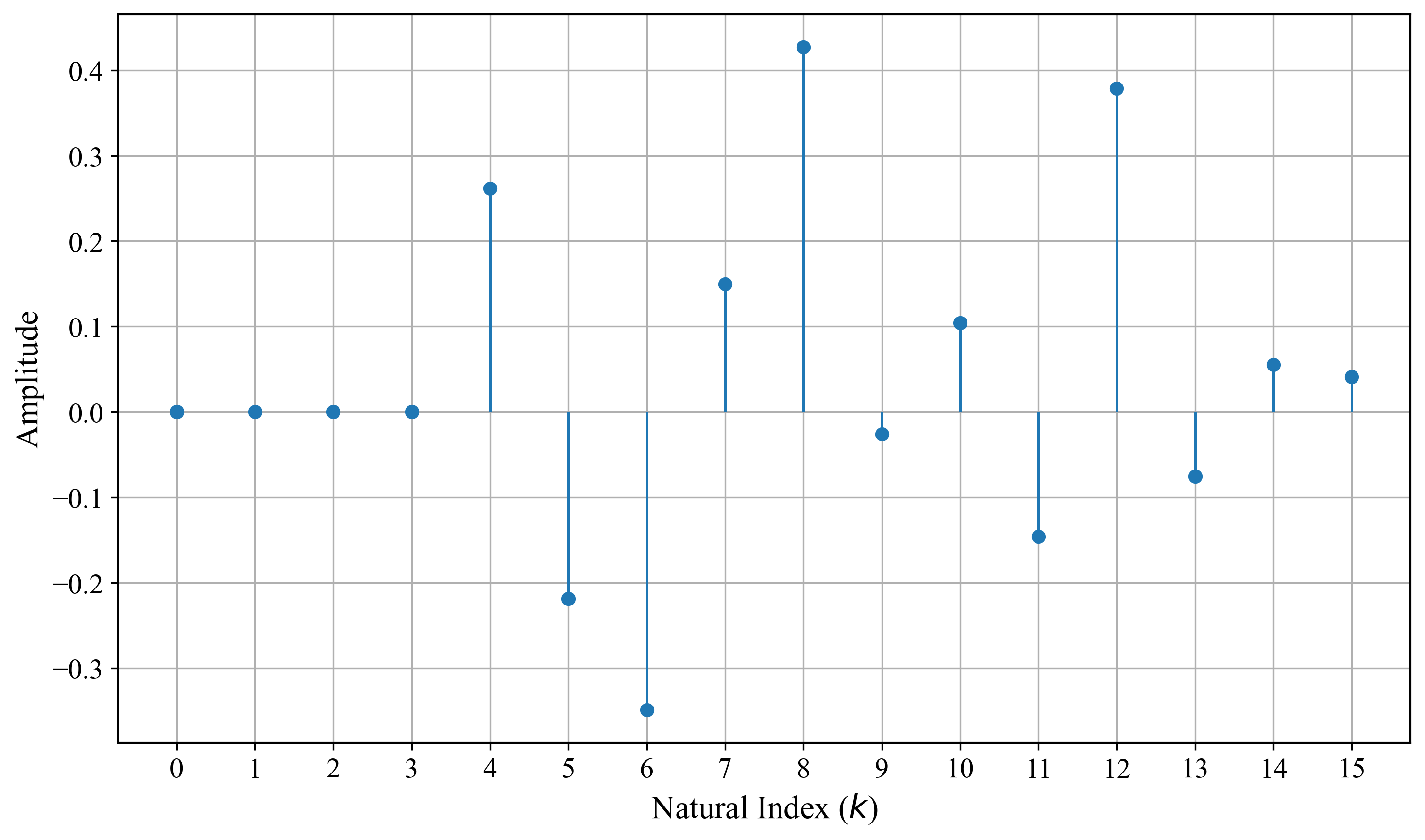}
\caption{GTT-based natural spectral components of the high-pass filtered signal. Components with natural index below 4 are suppressed.}
\label{fig:gtt_natural_high_pass}
\end{subfigure}
\caption{Stem plots illustrating the GTT-based natural spectral components of the original, low-pass filtered, and high-pass filtered signals.}
\label{fig:gtt_natural_components}
\end{figure}

\subsubsection{Computational complexity}
The proposed GTT-based filtering algorithm operating in the natural ordering, as detailed in Algorithm \ref{alg_gtt_natural_filtering}, offers notable computational advantages. If we disregard the costs associated with initial state preparation (assuming that the input signal is readily available as the output of a preceding quantum subroutine) and final measurement, the core quantum operations within our proposed GTT-based filtering approach requires a gate complexity and circuit depth of $O(\log_2 N)$, where $N=2^n$ is the dimension of the signal and $n=\log_2 N$ is the number of signal qubits. This stands in contrast to Quantum Fourier Transform (QFT)-based filtering, which needs a gate complexity and circuit depth of $O((\log_2 N)^2)$ for $n$ qubits. Consequently, our proposed GTT-based algorithm demonstrates a quadratic improvement in both gate complexity and circuit depth when compared to QFT-based filtering. Furthermore, when considering the classical Fast Fourier Transform (FFT)-based approach for filtering, which incurs a computational cost of $O(N \log_2 N)$, the QFT-based quantum filtering already provides an exponential advantage (again, abstracting away state preparation and measurement costs). Building upon this, our GTT-based approach further enhances efficiency by offering a quadratic improvement over the QFT method. It is important to acknowledge that for many quantum algorithms, including the one presented here, the overheads associated with universal state preparation and comprehensive measurement remain significant challenges in achieving a practical quantum advantage over their classical counterparts. However, this GTT-based filtering algorithm, similar to QFT-based approaches, becomes particularly beneficial in scenarios where the input signal is readily available as the output of another quantum subroutine, thereby circumventing the need for explicit state preparation. Moreover, the algorithm allows for the extraction of global features or statistics of the filtered signal without incurring the full measurement costs associated with a complete reconstruction of the output state, which can be advantageous in specific applications.

\section{Conclusion}
\label{sec:conclusion}

In this work, we proposed a novel and highly adaptive Generalized Tensor Transform (GTT) framework. This framework is constructed through the $n$-fold tensor product of an arbitrary $b \times b$ unitary matrix $W$. This construction generalizes well-known transforms such as the Walsh-Hadamard and multi-dimensional Discrete Fourier Transforms, providing a richer and more adaptable set of orthonormal basis functions. A key feature of the GTT framework is the ability to achieve fine-grained control over transformations by tuning the parameters of the base matrix $W$, allowing the adaptive shaping of each basis function while preserving the characteristic discrete, piecewise-constant, and often discontinuous structure that makes Walsh-type bases effective in both classical and quantum settings.

One of the main contributions of this work is the development and rigorous analysis of a fast classical algorithm for GTT computation. While a naive classical implementation via direct matrix multiplication incurs a prohibitive computational cost of $O(N^2)$, our proposed fast algorithm achieves an exponentially lower complexity of $O(N \log_b N)$. This efficiency is analogous to the computational advantages seen in well-established algorithms such as the Fast Fourier Transform (FFT) and Fast Walsh-Hadamard Transform (FWHT), making GTTs practically applicable for large-scale classical data processing.

For quantum applications, our GTT-based algorithm, implemented in the natural spectral ordering, achieves both a gate complexity and a circuit depth of $O(\log_b N)$, where $N = b^n$ denotes the length of the input vector. This represents a quadratic improvement over Quantum Fourier Transform (QFT), which requires $O((\log_b N)^2)$ gates and depth for $n$ qudits. Furthermore, this quantum GTT algorithm offers an exponential advantage over classical FFT-based and FWHT-based methods, which incur a computational cost of $O(N \log_b N)$. This demonstrates a clear path toward enhanced efficiency in quantum algorithms, especially in cases where the input signal is readily available as the output of a preceding quantum subroutine, circumventing the challenges of universal state preparation. 

We have further demonstrated the diverse applications of GTTs in quantum computing. The framework's inherent flexibility allows basis representations to be tailored to the specific structure of quantum data and the requirements of distinct computational tasks. Our numerical results confirm GTT's superior performance in applications such as quantum state compression, function encoding and digital signal filtering. In quantum state compression and function encoding applications, GTT consistently achieves significantly higher fidelities with fewer retained components compared to fixed transforms (such as FWHT or FFT). This adaptive sparsity is crucial for enabling the efficient representation and processing of complex quantum information. We also provided new classical and quantum algorithms for digital signal filtering based on the GTT framework. 

In conclusion, the Generalized Tensor Transform framework stands as a powerful and adaptable mathematical toolset that potentially offers compelling computational advantages across both classical and, most notably, quantum domains. Its ability to offer fine-grained control over basis function generation has potential to provide improved performance in key quantum computing applications. Future work will explore  its integration into advanced quantum machine learning models, and develop robust strategies for real-time parameter tuning on noisy intermediate-scale quantum (NISQ) devices.

\bibliographystyle{unsrt} 

\end{document}